\newtheorem{theorem}{Theorem}
\newtheorem{lemma}{Lemma}
\newtheorem{corollary}{Corollary}
\newtheorem{proposition}{Proposition}
\newtheorem{definition}{Definition}
\theoremstyle{remark}
\newtheorem{remark}{Remark}
\newcommand*{\doubleC}{\mathbb{C}}
\newcommand*{\doubleE}{\mathbb{E}}
\newcommand*{\doubleR}{\mathbb{R}}
\newcommand*{\doubleZ}{\mathbb{Z}}
\newcommand*{\scriptA}{\mathcal{A}}
\newcommand*{\scriptC}{\mathcal{C}}
\newcommand*{\scriptN}{\mathcal{N}}
\newcommand*{\scriptD}{\mathcal{D}}
\newcommand*{\scriptO}{\mathcal{O}}
\newcommand*{\bolda}{\bm{a}}
\newcommand*{\boldb}{\bm{b}}
\newcommand*{\boldc}{\bm{c}}
\newcommand*{\bolde}{\bm{e}}
\newcommand*{\boldh}{\bm{h}}
\newcommand*{\boldi}{\bm{i}}
\newcommand*{\boldn}{\bm{n}}
\newcommand*{\boldp}{\bm{p}}
\newcommand*{\boldr}{\bm{r}}
\newcommand*{\bolds}{\bm{s}}
\newcommand*{\boldx}{\bm{x}}
\newcommand*{\boldy}{\bm{y}}
\newcommand*{\boldz}{\bm{z}}
\newcommand*{\boldA}{\bm{A}}
\newcommand*{\boldB}{\bm{B}}
\newcommand*{\boldC}{\bm{C}}
\newcommand*{\boldD}{\bm{D}}
\newcommand*{\boldF}{\bm{F}}
\newcommand*{\boldH}{\bm{H}}
\newcommand*{\boldI}{\bm{I}}
\newcommand*{\boldK}{\bm{K}}
\newcommand*{\boldM}{\bm{M}}
\newcommand*{\boldP}{\bm{P}}
\newcommand*{\boldQ}{\bm{Q}}
\newcommand*{\boldR}{\bm{R}}
\newcommand*{\boldS}{\bm{S}}
\newcommand*{\boldT}{\bm{T}}
\newcommand*{\boldV}{\bm{V}}
\newcommand*{\boldW}{\bm{W}}
\newcommand*{\boldX}{\bm{X}}
\newcommand*{\boldY}{\bm{Y}}
\newcommand*{\boldGamma}{\bm{\Gamma}}
\newcommand*{\boldDelta}{\bm{\Delta}}
\newcommand*{\boldLambda}{\bm{\Lambda}}
\newcommand*{\boldSigma}{\bm{\Sigma}}
\newcommand*{\boldPhi}{\bm{\Phi}}
\newcommand*{\boldPsi}{\bm{\Psi}}
\newcommand*{\boldXi}{\bm{\Xi}}
\newcommand*{\boldalpha}{\bm{\alpha}}
\newcommand*{\boldbeta}{\bm{\beta}}
\newcommand*{\boldeta}{\bm{\eta}}
\newcommand*{\boldtheta}{\bm{\theta}}
\newcommand*{\boldxi}{\bm{\xi}}
\newcommand*{\vecm}{\operatorname{vec}} 
\newcommand*{\matm}{\operatorname{mat}} 
\newcommand*{\diagm}{\operatorname{diag}} 
\newcommand*{\trace}{\operatorname{tr}} 
\newcommand*{\Real}{\operatorname{\mathfrak{R}}}
\newcommand*{\Imag}{\operatorname{\mathfrak{I}}}
\newcommand*{\proj}[1]{{\bm{\Pi}}_{#1}}
\newcommand*{\projp}[1]{{{\bm{\Pi}}^\perp_{#1}}}
\newcommand*{\Es}{\bm{E}_\mathrm{s}}
\newcommand*{\Ls}{\bm{\Lambda}_\mathrm{s}}
\newcommand*{\Lsone}{\bm{\Lambda}_\mathrm{s1}}
\newcommand*{\Lstwo}{\bm{\Lambda}_\mathrm{s2}}
\newcommand*{\En}{\bm{E}_\mathrm{n}}
\newcommand*{\Enonet}{\tilde{\bm{E}}_\mathrm{n1}}
\newcommand*{\dEnone}{\Delta\bm{E}_\mathrm{n1}}
\newcommand*{\dEntwo}{\Delta\bm{E}_\mathrm{n2}}
\newcommand*{\Lnone}{\bm{\Lambda}_\mathrm{n1}}
\newcommand*{\Lntwo}{\bm{\Lambda}_\mathrm{n2}}
\newcommand*{\Lnonet}{\tilde{\bm{\Lambda}}_\mathrm{n1}}
\newcommand*{\dLnone}{\Delta\bm{\Lambda}_\mathrm{n1}}
\newcommand*{\Rv}{\bm{R}_\mathrm{v}}
\newcommand*{\Rvone}{\bm{R}_\mathrm{v1}}
\newcommand*{\Rvtwo}{\bm{R}_\mathrm{v2}}
\newcommand*{\Rvoneh}{\hat{\bm{R}}_\mathrm{v1}}
\newcommand*{\Rvtwoh}{\hat{\bm{R}}_\mathrm{v2}}
\newcommand*{\dRvone}{\Delta\bm{R}_\mathrm{v1}}
\newcommand*{\dRvoneH}{\Delta\bm{R}_\mathrm{v1}^H}
\newcommand*{\dRvtwo}{\Delta\bm{R}_\mathrm{v2}}
\newcommand*{\Rvonet}{\tilde{\bm{R}}_\mathrm{v1}}
\newcommand*{\drvone}{\Delta\bm{r}_\mathrm{v1}}
\newcommand*{\Av}{\bm{A}_\mathrm{v}}
\newcommand*{\Avb}{\bar{\bm{A}}_\mathrm{v}}
\newcommand*{\AvH}{{\bm{A}_\mathrm{v}^H}}
\newcommand*{\avk}{{\bm{a}_\mathrm{v}(\theta_k)}}
\newcommand*{\avkH}{{\bm{a}_\mathrm{v}^H(\theta_k)}}
\newcommand*{\Davk}{{\dot{\bm{a}}_\mathrm{v}(\theta_k)}}
\newcommand*{\DavkH}{{\dot{\bm{a}}_\mathrm{v}^H(\theta_k)}}
\newcommand*{\Ad}{\bm{A}_\mathrm{d}} 
\newcommand*{\DAd}{\dot{\bm{A}}_\mathrm{d}} 
\newcommand*{\noisevar}{{\sigma^2_\mathrm{n}}}
\newcommand*{\noisevarinv}{{\sigma^{-2}_\mathrm{n}}}
\newcommand*{\noisevarsq}{{\sigma^4_\mathrm{n}}}
\newcommand*{\noisevarsqrt}{\sigma_\mathrm{n}}
\newcommand*{\noisevarp}[1]{{\sigma^{#1}_\mathrm{n}}}
\newcommand*{\Mv}{M_\mathrm{v}}
\newcommand*{\TMv}{\bm{T}_{M_\mathrm{v}}}
\newcommand*{\boldzero}{\bm{0}}
\newcommand*{\CRB}{\mathrm{CRB}}
\newcommand*{\SNR}{\mathrm{SNR}}
\newcommand*{\MSEan}{\mathrm{MSE}_\mathrm{an}}
\newcommand*{\MSEem}{\mathrm{MSE}_\mathrm{em}}
\newcommand*{\FIM}{\mathrm{FIM}}
\newcommand*{\dB}{\mathrm{dB}}
\begin{document}
\title{Coarrays, MUSIC, and the Cram\'{e}r Rao Bound}

\author{Mianzhi~Wang,~\IEEEmembership{Student Member,~IEEE,}
        and~Arye~Nehorai,~\IEEEmembership{Life Fellow,~IEEE}
\thanks{This work was supported by the AFOSR under Grant FA9550-11-1-0210 and the ONR Grant N000141310050.}%
\thanks{M. Wang and A. Nehorai are with the Preston M. Green Department of Electrical and Systems Engineering, Washington University in St. Louis, St. Louis, MO 63130 USA (e-mail: mianzhi.wang@wustl.edu; nehorai@ese.wustl.edu)}
}

\markboth{}%
{}

\maketitle

\begin{abstract}
Sparse linear arrays, such as co-prime arrays and nested arrays, have the attractive capability of providing enhanced degrees of freedom. By exploiting the coarray structure, an augmented sample covariance matrix can be constructed and MUSIC (MUtiple SIgnal Classification) can be applied to identify more sources than the number of sensors. While such a MUSIC  algorithm works quite well, its performance has not been theoretically analyzed. In this paper, we derive a simplified asymptotic mean square error (MSE) expression for the MUSIC algorithm applied to the coarray model, which is applicable even if the source number exceeds the sensor number. We show that the directly augmented sample covariance matrix and the spatial smoothed sample covariance matrix yield the same asymptotic MSE for MUSIC. We also show that when there are more sources than the number of sensors, the MSE converges to a positive value instead of zero when the signal-to-noise ratio (SNR) goes to infinity. This finding explains the ``saturation'' behavior of the coarray-based MUSIC algorithms in the high SNR region observed in previous studies. Finally, we derive the Cram\'{e}r-Rao bound (CRB) for sparse linear arrays, and conduct a numerical study of the statistical efficiency of the coarray-based estimator. Experimental results verify theoretical derivations and reveal the complex efficiency pattern of coarray-based MUSIC algorithms.
\end{abstract}

\begin{IEEEkeywords}
MUSIC, Cram\'{e}r-Rao bound, coarray, sparse linear arrays, statistical efficiency
\end{IEEEkeywords}

%

\section{Introduction}

\IEEEPARstart{E}{stimating} source directions of arrivals (DOAs) using sensors arrays plays an important role in the field of array signal processing. For uniform linear arrays (ULA), it is widely known that traditional subspace-based methods, such as MUSIC, can resolve up to $N - 1$ uncorrelated sources with $N$ sensors~\cite{schmidt_multiple_1986, stoica_music_1989, huang_exact_1993}. However, for sparse linear arrays, such as minimal redundancy arrays (MRA) \cite{moffet_minimum-redundancy_1968}, it is possible to construct an augmented covariance matrix by exploiting the coarray structure. We can then apply MUSIC to the augmented covariance matrix, and up to $\scriptO(N^2)$ sources can be resolved with only $N$ sensors \cite{moffet_minimum-redundancy_1968}. 

Recently, the development of co-prime arrays \cite{pal_coprime_2011, tan_sparse_2014,tan_direction_2014,qin_generalized_2015} and nested arrays \cite{pal_nested_2010, han_wideband_2013, han_nested_2014}, has generated renewed interest in sparse linear arrays, and it remains to investigate the performance of these arrays. 
The performance of the MUSIC estimator and its variants (e.g., root-MUSIC \cite{friedlander_root-music_1993, pesavento_unitary_2000}) was thoroughly analyzed by Stoica et al. in \cite{stoica_music_1989}, \cite{stoica_music_1990} and \cite{stoica_performance_1990}. The same authors also derived the asymptotic MSE expression of the MUSIC estimator, and rigorously studied its statistical efficiency. In \cite{li_performance_1993}, Li et al. derived a unified MSE expression for common subspace-based estimators (e.g., MUSIC and ESPRIT~\cite{roy_esprit-estimation_1989}) via first-order perturbation analysis. However, these results are based on the physical array model and make use of the statistical properties of the original sample covariance matrix, which cannot be applied when the coarray model is utilized. In \cite{gorokhov_unified_1996}, Gorokhov et al. first derived a general MSE expression for the MUSIC algorithm applied to matrix-valued transforms of the sample covariance matrix. While this expression is applicable to coarray-based MUSIC, its explicit form is rather complicated, making it difficult to conduct analytical performance studies. Therefore, a simpler and more revealing MSE expression is desired.

In this paper, we first review the coarray signal model commonly used for sparse linear arrays. We investigate two common approaches to constructing the augmented sample covariance matrix, namely, the direct augmentation approach (DAA)~\cite{abramovich_detection-estimation_2001, liu_remarks_2015} and the spatial smoothing approach~\cite{pal_nested_2010}. We show that MUSIC yields the same asymptotic estimation error for both approaches. We are then able to derive an explicit MSE expression that is applicable to both approaches. Our MSE expression has a simpler form, which may facilitate the performance analysis of coarray-based MUSIC algorithms. We observe that the MSE of coarray-based MUSIC depends on both the physical array geometry and the coarray geometry. We show that, when there are more sources than the number of sensors, the MSE does not drop to zero even if the SNR approaches infinity, which agrees with the experimental results in previous studies. Next, we derive the CRB of DOAs that is applicable to sparse linear arrays. We notice that when there are more sources than the number of sensors, the CRB is strictly nonzero as the SNR goes to infinity, which is consistent with our observation on the MSE expression.
It should be mentioned that during the review process of this paper, Liu et al.\ and Koochakzadeh et al. also independently derived the CRB for sparse linear arrays in \cite{koochakzadeh_cram_2016,liu_cramerrao}. In this paper, we provide a more rigorous proof the CRB's limiting properties in high SNR regions. We also include various statistical efficiency analysis by utilizing our results on MSE, which is not present in \cite{koochakzadeh_cram_2016,liu_cramerrao}.
Finally, we verify our analytical MSE expression and analyze the statistical efficiency of different sparse linear arrays via numerical simulations. We we observe good agreement between the empirical MSE and the analytical MSE, as well as complex efficiency patterns of coarray-based MUSIC.

Throughout this paper, we make use of the following notations. Given a matrix $\boldA$, we use $\boldA^T$, $\boldA^H$, and $\boldA^*$ to denote the transpose, the Hermitian transpose, and the conjugate of $\boldA$, respectively. We use $A_{ij}$ to denote the $(i,j)$-th element of $\boldA$, and $\bolda_i$ to denote the $i$-th column of $\boldA$. If $\boldA$ is full column rank, we define its pseudo inverse as $\boldA^\dagger = (\boldA^H \boldA)^{-1} \boldA^H$. We also define the projection matrix onto the null space of $\boldA$ as $\projp{\boldA} = \boldI - \boldA\boldA^\dagger$. Let $\boldA = [\bolda_1\,\bolda_2\,\ldots\,\bolda_N] \in \doubleC^{M \times N}$, and we define the vectorization operation as $\vecm(\boldA) = [\bolda_1^T\,\bolda_2^T\,\ldots\,\bolda_N^T]^T$, and $\matm_{M, N}(\cdot)$ as its inverse operation. We use $\otimes$ and $\odot$ to denote the Kronecker product and the Khatri-Rao product (i.e., the column-wise Kronecker product), respectively. We denote by $\Real(\boldA)$ and $\Imag(\boldA)$ the real and the imaginary parts of $\boldA$. If $\boldA$ is a square matrix, we denote its trace by $\trace(\boldA)$. In addition, we use $\boldT_M$ to denote a $M \times M$ permutation matrix whose anti-diagonal elements are one, and whose remaining elements are zero. We say a complex vector $\boldz \in \doubleC^M$ is \emph{conjugate symmetric} if $\boldT_M \boldz = \boldz^*$. We also use $\bolde_i$ to denote the $i$-th natural base vector in Euclidean space. For instance, $\boldA \bolde_i$ yields the $i$-th column of $\boldA$, and $\bolde_i^T \boldA$ yields the $i$-th row of $\boldA$.

\section{The Coarray Signal Model}

We consider a linear sparse array consisting of $M$ sensors whose locations are given by $\scriptD = \{d_1, d_2, \ldots, d_M\}$. Each sensor location $d_i$ is chosen to be the integer multiple of the smallest distance between any two sensors, denoted by $d_0$. Therefore we can also represent the sensor locations using the integer set $\bar{\scriptD} = \{\bar{d}_1, \bar{d}_2, \ldots, \bar{d}_M\}$, where $\bar{d}_i = d_i/d_0$ for $i = 1,2,\ldots,M$. Without loss of generality, we assume that the first sensor is placed at the origin. We consider $K$ narrow-band sources $\theta_1, \theta_2, \ldots, \theta_K$ 
impinging on the array from the far field. Denoting $\lambda$ as the wavelength of the carrier frequency, we can express the steering vector for the $k$-th source as
\begin{equation}
    \label{eq:steering-vector-basic}
    \bolda(\theta_k) = \begin{bmatrix}
        1 & e^{j\bar{d}_2\phi_k} & \cdots & e^{j\bar{d}_M\phi_k}
    \end{bmatrix}^T,
\end{equation}
where $\phi_k = (2\pi d_0 \sin\theta_k)/\lambda$.
Hence the received signal vectors are given by
\begin{equation}
    \label{eq:recv-basic}
    \boldy(t) = \boldA(\theta) \boldx(t) + \boldn(t), t = 1,2,\ldots,N, 
\end{equation}
where $\boldA = [\bolda(\theta_1)\,\bolda(\theta_2)\,\ldots\,\bolda(\theta_K)]$ denotes the array steering matrix, $\boldx(t)$ denotes the source signal vector, $\boldn(t)$ denotes additive noise, and $N$ denotes the number of snapshots. In the following discussion, we make the following assumptions:
\begin{enumerate}[label=\textbf{A\arabic*}]
    \item
        \label{ass:a1-uc-source}
        The source signals follow the unconditional model \cite{stoica_performance_1990} and are uncorrelated white circularly-symmetric Gaussian.
    \item
        \label{ass:a2-distinct-doa}
        The source DOAs are distinct (i.e., $\theta_k \neq \theta_l\ \forall k \neq l$).
    \item
        \label{ass:a3-gaussian-noise}
        The additive noise is white circularly-symmetric Gaussian and uncorrelated from the sources.
    \item
        \label{ass:a4-uc-snapshot}
        The is no temporal correlation between each snapshot.
\end{enumerate}
Under \ref{ass:a1-uc-source}--\ref{ass:a4-uc-snapshot}, the sample covariance matrix is given by
\begin{equation}
    \label{eq:cov-baisc}
    \boldR = \boldA \boldP \boldA^H + \noisevar \boldI,
\end{equation}
where $\boldP = \diagm(p_1, p_2, \ldots, p_K)$ denotes the source covariance matrix, and $\noisevar$ denotes the variance of the additive noise. By vectorizing $\boldR$, we can obtain the following coarray model:
\begin{equation}
    \label{eq:coarray-full-model}
    \boldr = \Ad \boldp + \noisevar \boldi,
\end{equation}
where $\Ad = \boldA^* \odot \boldA$, $\boldp = [p_1, p_2, \ldots, p_K]^T$, and $\boldi = \vecm(\boldI)$.

It has been shown in \cite{pal_nested_2010} that $\Ad$ corresponds to the steering matrix of the coarray whose sensor locations are given by $\scriptD_\mathrm{co} = \{d_m - d_n|1 \leq m,n \leq M\}$. By carefully selecting rows of $(\boldA^* \odot \boldA)$, we can construct a new steering matrix representing a virtual ULA with enhanced degrees of freedom. Because $\scriptD_\mathrm{co}$ is symmetric, this virtual ULA is centered at the origin. The sensor locations of the virtual ULA are given by $[-\Mv+1, -\Mv+2, \ldots, 0, \ldots, \Mv-1]d_0$, where $\Mv$ is defined such that $2\Mv-1$ is the size of the virtual ULA. Fig.~\ref{fig:array} provides an illustrative example of the relationship between the physical array and the corresponding virtual ULA. The observation vector of the virtual ULA is given by
\begin{equation}
    \label{eq:coarray-ula-model}
    \boldz = \boldF\boldr 
    = \boldA_\mathrm{c} \boldp + \noisevar \boldF \boldi,
\end{equation}
where $\boldF$ is the coarray selection matrix, whose detailed definition is provided in Appendix~\ref{app:f-def}, and $\boldA_\mathrm{c}$ represents the steering matrix of the virtual array. The virtual ULA can be divided into $\Mv$ overlapping uniform subarrays of size $\Mv$. The output of the $i$-th subarray is given by $\boldz_i = \boldGamma_i \boldz$ for $i = 1,2,\ldots,\Mv$, where $\boldGamma_i = [\boldzero_{\Mv \times (i-1)}\,\boldI_{\Mv \times \Mv}\,\boldzero_{\Mv \times (\Mv-i)}]$ represents the selection matrix for the $i$-th subarray.

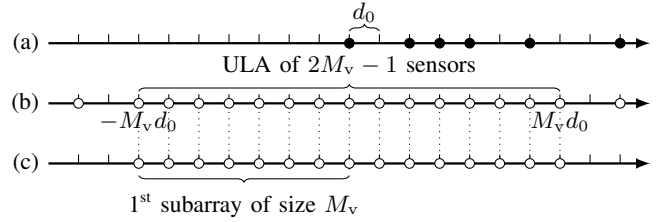
\begin{figure}
    \centering
    \begin{tikzpicture}[scale=0.8]
        \draw (0,3) node[left] {\small (a)};
        \draw[-latex, line width=0.3mm] (0,3) -- (10,3);
        \draw (5,2.9) -- (5,3.1);
        \foreach \x [evaluate=\x as \xscaled using \x/2+5] in
            {-9,-8,-7,-6,-5,-4,-3,-2,-1,0,1,2,3,4,5,6,7,8,9}
            \draw (\xscaled, 3) -- (\xscaled, 3.15);
        \foreach \x [evaluate=\x as \xscaled using \x/2+5] in {0,2,3,4,6,9}
            \draw[fill=black] (\xscaled,3) circle (0.08);
        \draw[decorate, decoration={brace}] (5,3.2) -- (5.5,3.2)
            node[midway, above] {\small $d_0$};

        \draw (0,2) node[left] {\small (b)};
        \draw[-latex, line width=0.3mm] (0,2) -- (10,2);
        \draw (5,1.9) -- (5,2.1);
        \foreach \x [evaluate=\x as \xscaled using \x/2+5] in
            {-9,-8,-7,-6,-5,-4,-3,-2,-1,0,1,2,3,4,5,6,7,8,9}
            \draw (\xscaled, 2) -- (\xscaled, 2.15);
        \foreach \x [evaluate=\x as \xscaled using \x/2+5] in
            {-9,-7,-6,-5,-4,-3,-2,-1,0,1,2,3,4,5,6,7,9}
            \draw[fill=white] (\xscaled,2) circle (0.08);

        \draw (0,1) node[left] {\small (c)};
        \draw[-latex, line width=0.3mm] (0,1) -- (10,1);
        \draw (5,0.9) -- (5,1.1);
        \foreach \x [evaluate=\x as \xscaled using \x/2+5] in
            {-9,-8,-7,-6,-5,-4,-3,-2,-1,0,1,2,3,4,5,6,7,8,9}
            \draw (\xscaled, 1) -- (\xscaled, 1.15);
        \foreach \x [evaluate=\x as \xscaled using \x/2+5] in
            {-7,-6,-5,-4,-3,-2,-1,0,1,2,3,4,5,6,7}
            \draw[fill=white] (\xscaled,1) circle (0.08);

        \foreach \x [evaluate=\x as \xscaled using \x/2+5] in
            {-7,-6,-5,-4,-3,-2,-1,0,1,2,3,4,5,6,7}
            \draw[dotted] (\xscaled, 1.1) -- (\xscaled, 1.9);

        \draw (1.5, 1.7) node {\small $-M_\mathrm{v}d_0$};
        \draw (8.5, 1.7) node {\small $M_\mathrm{v}d_0$};
        \draw[decorate, decoration={brace}]
            (1.5, 2.2) -- (8.5, 2.2)
            node[midway, above, yshift=0.1cm] {\small ULA of $2M_\mathrm{v}-1$ sensors};
        \draw[decorate, decoration={brace, mirror}]
            (1.5, 0.8) -- (5, 0.8)
            node[midway, below, yshift=-0.1cm] {\small 1\textsuperscript{st} subarray of size $M_\mathrm{v}$};
    \end{tikzpicture}
    \caption{A coprime array with sensors located at $[0, 2, 3, 4, 6, 9]\lambda/2$ and its coarray: (a) physical array; (b) coarray; (c) central ULA part of the coarray. }
    \label{fig:array}
\end{figure}

Given the outputs of the $\Mv$ subarrays, the augmented covariance matrix of the virtual array $\Rv$ is commonly constructed via one of the following methods~\cite{pal_nested_2010,liu_remarks_2015}:
\begin{subequations}
    \begin{align}
        \label{eq:def-rv1}
        \Rvone &= [\boldz_{\Mv}\,\boldz_{\Mv-1}\,\cdots\,\boldz_1], \\
        \label{eq:def-rv2}
        \Rvtwo &= \frac{1}{\Mv} \sum_{i=1}^{\Mv} \boldz_i \boldz_i^H,
    \end{align}
\end{subequations}
where method~\eqref{eq:def-rv1} corresponds to DAA , while method~\eqref{eq:def-rv2} corresponds to the spatial smoothing approach.

Following the results in \cite{pal_nested_2010} and \cite{liu_remarks_2015}, $\Rvone$ and $\Rvtwo$ are related via the following equality:
\begin{equation}
    \label{eq:rv1-rv2-relation}
    \Rvtwo
    = \frac{1}{\Mv}\Rvone^2
    = \frac{1}{\Mv}(\Av\boldP\Av^H + \noisevar\boldI)^2,
\end{equation}
where $\Av$ corresponds to the steering matrix of a ULA whose sensors are located at $[0, 1, \ldots, \Mv-1]d_0$. If we design a sparse linear array such that $\Mv > M$, we immediately gain enhanced degrees of freedom by applying MUSIC to either $\Rvone$ or $\Rvtwo$ instead of $\boldR$ in \eqref{eq:cov-baisc}. For example, in Fig.~\ref{fig:array}, we have a co-prime array with $\Mv = 8 > 6 = M$. Because MUSIC is applicable only when the number of sources is less than the number of sensors, we assume that $K < \Mv$ throughout the paper. This assumption, combined with \ref{ass:a2-distinct-doa}, ensures that $\Av$ is full column rank.

It should be noted that the elements in \eqref{eq:def-rv1} are obtained via linear operations on the elements in $\boldR$, and those in \eqref{eq:def-rv2} are obtained via quadratic operations. Therefore the statistical properties of $\Rvone$ and $\Rvtwo$ are different from that of $\boldR$. Consequently, traditional performance analysis for the MUSIC algorithm based on $\boldR$ cannot be applied to the coarray-based MUSIC. For brevity, we use the term direct augmentation based MUSIC (DA-MUSIC), and the term spatial smoothing based MUSIC (SS-MUSIC) to denote the MUSIC algorithm applied to $\Rvone$ and $\Rvtwo$, respectively. In the following section, we will derive a unified analytical MSE expression for both DA-MUSIC and SS-MUSIC.

\section{The MSE of coarray-based MUSIC}
In practice, the real sample covariance matrix $\boldR$ is unobtainable, and its maximum-likelihood estimate $\hat{\boldR} = 1/N\sum_{t=1}^N \boldx(t) \boldx(t)^H$ is used. Therefore $\boldz$, $\Rvone$, and $\Rvtwo$ are also replaced with their estimated versions $\hat{\boldz}$, $\Rvoneh$, and $\Rvtwoh$. Due to the estimation error $\Delta \boldR = \hat{\boldR} - \boldR$, the estimated noise eigenvectors will deviate from the true one, leading to DOA estimation errors. 

In general, the eigenvectors of a perturbed matrix are not well-determined \cite{stewart_error_1973}. For instance, in the very low SNR scenario, $\Delta \boldR$ may cause a subspace swap, and the estimated noise eigenvectors will deviate drastically from the true ones \cite{hawkes_performance_2001}. Nevertheless, as shown in \cite{li_performance_1993, gorokhov_unified_1996} and \cite{swindlehurst_performance_1992}, given enough samples and sufficient SNR, it is possible to obtain the closed-form expressions for DOA estimation errors via first-order analysis. Following similar ideas, we are able to derive the closed-form error expression for DA-MUSIC and SS-MUSIC, as stated in Theorem~\ref{thm:same-doa-err}.
\begin{theorem}
    \label{thm:same-doa-err}
    Let $\hat{\theta}_k^{(1)}$ and $\hat{\theta}_k^{(2)}$ denote the estimated values of the $k$-th DOA by DA-MUSIC and SS-MUSIC, respectively. Let $\Delta \boldr = \vecm(\hat{\boldR} - \boldR)$. Assume the signal subspace and the noise subspace are well-separated, so that $\Delta \boldr$ does not cause a subspace swap. Then
    \begin{equation}
        \label{eq:same-doa-err}
        \hat{\theta}_k^{(1)} - \theta_k
        \doteq \hat{\theta}_k^{(2)} - \theta_k
        \doteq -(\gamma_k p_k)^{-1} \Real(\boldxi_k^T \Delta\boldr),
    \end{equation}
    where $\doteq$ denotes asymptotic equality, and
    \begin{subequations}
        \begin{align}
            \label{eq:xi-k-def}
            \boldxi_k &= \boldF^T \boldGamma^T (\boldbeta_k \otimes \boldalpha_k), \\
            \label{eq:alpha-k-def}
            \boldalpha_k^T &= -\bolde_k^T \Av^\dagger, \\
            \label{eq:beta-k-def}
            \boldbeta_k &= \projp{\Av} \Davk, \\
            \label{eq:gamma-k-def}
            \gamma_k &= \DavkH \projp{\Av} \Davk, \\
            \label{eq:gamma-mat-def}
            \boldGamma &= [\boldGamma_{\Mv}^T\,\boldGamma_{\Mv-1}^T\,\cdots\boldGamma_1^T]^T, \\
            \label{eq:dav-def}
            \Davk &= \frac{\partial\avk}{\partial\theta_k}.
        \end{align}
    \end{subequations}
\end{theorem}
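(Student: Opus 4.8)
The plan is to prove the statement in two stages: an exact algebraic reduction that collapses DA-MUSIC and SS-MUSIC onto a single estimator, followed by a first-order perturbation analysis of MUSIC applied to $\Rvone$ alone. For the first stage I would show the two estimators are identical, so that only one needs analysis. The point is that \eqref{eq:rv1-rv2-relation} holds not just for the model covariances but as an algebraic identity for the finite-sample matrices, i.e. $\Rvtwoh = \frac{1}{\Mv}\Rvoneh^2$. Indeed, because $\hat{\boldR}$ is Hermitian with real diagonal, the sampled coarray vector $\hat{\boldz} = \boldF\hat{\boldr}$ is conjugate symmetric with a real central entry; hence $\Rvoneh$ is Hermitian Toeplitz, and expanding $\frac{1}{\Mv}\sum_i \hat{\boldz}_i\hat{\boldz}_i^H$ reproduces $\frac{1}{\Mv}\Rvoneh^2$. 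Since $\Rvtwoh$ is then a matrix function of the Hermitian matrix $\Rvoneh$, the two share a common eigenbasis; under the assumed signal/noise separation the signal eigenvalues of $\Rvoneh$ are the $K$ largest in magnitude, squaring preserves this ordering, and the estimated noise subspaces coincide. As the MUSIC null spectrum depends only on the noise-subspace projector, the two spectra are identical and $\hat{\theta}_k^{(1)} = \hat{\theta}_k^{(2)}$, which gives the first asymptotic equality. It therefore suffices to analyze $\Rvone$, whose ideal form $\Av\boldP\Av^H + \noisevar\boldI$ has signal subspace $\operatorname{range}(\Av)$ and noise projector $\projp{\Av}$.

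Next I would carry out the standard first-order analysis on the null spectrum $f(\theta) = \bm{a}_\mathrm{v}^H(\theta)\,\Enh\Enh^H\,\bm{a}_\mathrm{v}(\theta)$, where $\Enh\Enh^H$ is the estimated noise projector. Imposing the stationarity condition $f'(\hat{\theta}_k) = 0$ and Taylor-expanding about $\theta_k$ gives $\hat{\theta}_k - \theta_k \doteq -f'(\theta_k)/f''(\theta_k)$. Using $\projp{\Av}\avk = 0$, the leading second derivative reduces to $f''(\theta_k) \doteq 2\gamma_k$ with $\gamma_k = \DavkH\projp{\Av}\Davk$, while the unperturbed part of $f'(\theta_k)$ vanishes, leaving $f'(\theta_k) \doteq 2\Real[\DavkH\,\Delta(\Enh\Enh^H)\,\avk]$. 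I would then expand $\Delta(\Enh\Enh^H)$ by the usual eigenvector sensitivity expansion; because $\avk$ lies in the signal subspace and $\dRvone$ is Hermitian, only the signal-noise coupling term survives and it collapses to $\Delta(\Enh\Enh^H)\avk \doteq -\projp{\Av}\,\dRvone\,(\Av\boldP\Av^H)^\dagger\avk$. Simplifying with $(\Av\boldP\Av^H)^\dagger = \Av^{\dagger H}\boldP^{-1}\Av^\dagger$ and $\Av^\dagger\avk = \bolde_k$ yields $(\Av\boldP\Av^H)^\dagger\avk = p_k^{-1}(\Av^\dagger)^H\bolde_k$, so that $\hat{\theta}_k - \theta_k \doteq (\gamma_k p_k)^{-1}\Real[\DavkH\projp{\Av}\,\dRvone\,(\Av^\dagger)^H\bolde_k]$.

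Finally I would rewrite this in terms of $\Delta\boldr$. Setting $\boldbeta_k = \projp{\Av}\Davk$ and $\boldalpha_k^T = -\bolde_k^T\Av^\dagger$, the bracket equals $-\boldbeta_k^H\,\dRvone\,\boldalpha_k^*$; conjugating inside $\Real(\cdot)$ and invoking the Hermitian symmetry of $\dRvone$ turns it into $\boldalpha_k^T\,\dRvone\,\boldbeta_k$. The identity $\vecm(\Rvone) = \boldGamma\boldF\boldr$ gives $\vecm(\dRvone) = \boldGamma\boldF\Delta\boldr$, and applying $\boldalpha_k^T\boldM\boldbeta_k = (\boldbeta_k^T\otimes\boldalpha_k^T)\vecm(\boldM)$ produces exactly $-(\gamma_k p_k)^{-1}\Real(\boldxi_k^T\Delta\boldr)$ with $\boldxi_k = \boldF^T\boldGamma^T(\boldbeta_k\otimes\boldalpha_k)$, as claimed. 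I expect the projector-perturbation step to be the main obstacle: one must correctly discard the noise-noise coupling, retain the Hermitian structure of $\dRvone$, and track the conjugates and transposes through the Kronecker/vectorization rewriting so that the final vector is $\boldxi_k$ itself rather than a conjugated or transposed variant. A secondary delicate point is justifying the sample-level identity $\Rvtwoh = \frac{1}{\Mv}\Rvoneh^2$ and the preservation of eigenvalue ordering, on which the exact equivalence of the two estimators rests.
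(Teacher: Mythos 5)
Your proposal is correct, and for the DA/SS equivalence it takes a genuinely different (and cleaner) route than the paper. The paper never invokes the sample-level identity $\Rvtwoh = \frac{1}{\Mv}\Rvoneh^2$; instead it carries out a second, separate first-order analysis for SS-MUSIC: it expands $\dRvtwo \doteq \frac{1}{\Mv}\sum_{k}(\boldz_k\Delta\boldz_k^H + \Delta\boldz_k\boldz_k^H)$, uses the explicit subarray representation $\boldz_k = \Av\boldPsi^{\Mv-k}\boldp + \noisevar\boldi_{\Mv-k+1}$ together with the conjugate symmetry of $\Delta\boldz$ to show $\dRvtwo \doteq \frac{1}{\Mv}[(\Av\boldP\AvH+2\noisevar\boldI)\dRvone + \dRvone\Av\boldP\AvH]$, and then verifies that the factor $(\boldP\AvH\Av+2\noisevar\boldI)$ cancels so that $\AvH\dEntwo \doteq \AvH\dEnone$. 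Your argument---that $\hat{\boldz}$ is conjugate symmetric (the paper's Lemma~2 applied to the Hermitian $\hat{\boldR}$), hence $\Rvoneh$ is Hermitian, hence $\Mv\Rvtwoh = \Rvoneh\Rvoneh^H = \Rvoneh^2$ holds \emph{exactly} for the finite-sample matrices, and the noise subspaces coincide as long as the eigenvalues of $\Rvoneh$ remain positive (guaranteed by the no-swap assumption)---yields the stronger conclusion $\hat{\theta}_k^{(1)} = \hat{\theta}_k^{(2)}$ with no asymptotics, and collapses half of the paper's appendix; this is essentially the observation of \cite{liu_remarks_2015}, which the paper cites but does not exploit in the proof. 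The remainder of your argument is equivalent to the paper's: your projector-perturbation formula $\Delta(\En\En^H)\avk \doteq -\projp{\Av}\dRvone(\Av\boldP\AvH)^\dagger\avk$ is the same first-order statement as the paper's $\AvH\dEnone \doteq -\boldP^{-1}\Av^\dagger\dRvone\En$ combined with Stoica's error formula, and the vectorization step $\vecm(\dRvone) = \boldGamma\boldF\Delta\boldr$ is identical. One bookkeeping slip: after conjugating, $\Real[-\boldbeta_k^H\dRvone\boldalpha_k^*] = \Real[-\boldalpha_k^T\dRvone\boldbeta_k]$, not $\Real[+\boldalpha_k^T\dRvone\boldbeta_k]$; your final displayed expression $-(\gamma_k p_k)^{-1}\Real(\boldxi_k^T\Delta\boldr)$ is nonetheless the correct one, so this is a sign error only in the intermediate prose (the paper's own chain from its Eq.~(29) to Eq.~(30) contains a comparable slip, and the MSE is insensitive to it).
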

\begin{proof}
    See Appendix~\ref{app:thm-err-expression}.
\end{proof}

Theorem~\ref{thm:same-doa-err} can be reinforced by Proposition~\ref{prop:alpha-beta-xi-nonzero}. $\boldbeta_k \neq 0$ ensures that $\gamma_k^{-1}$ exists and \eqref{eq:same-doa-err} is well-defined, while $\boldxi_k \neq 0$ ensures that \eqref{eq:same-doa-err} depends on $\Delta \boldr$ and cannot be trivially zero.
\begin{proposition}
\label{prop:alpha-beta-xi-nonzero}
$\boldbeta_k, \boldxi_k \neq \boldzero$ for $k = 1,2,\ldots,K$.
\end{proposition}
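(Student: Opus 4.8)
The plan is to treat $\boldxi_k$ as the image of the vector $\boldbeta_k \otimes \boldalpha_k$ under the two selection operators $\boldGamma^T$ and $\boldF^T$, and to show that (i) $\boldalpha_k$ and $\boldbeta_k$ are individually nonzero, and (ii) neither $\boldGamma^T$ nor $\boldF^T$ can send the resulting nonzero Kronecker product to $\boldzero$. The factor $\boldalpha_k$ is immediate: because $K < \Mv$ and the DOAs are distinct by \ref{ass:a2-distinct-doa}, $\Av$ has full column rank, so $\Av^\dagger$ has full row rank $K$ and every one of its rows is nonzero; since $\boldalpha_k^T = -\bolde_k^T \Av^\dagger$ is (the negative of) the $k$-th row of $\Av^\dagger$, we get $\boldalpha_k \neq \boldzero$. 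Establishing $\boldbeta_k \neq \boldzero$ will also immediately give $\gamma_k = \|\boldbeta_k\|^2 > 0$, since $\projp{\Av}$ is an orthogonal projector.

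Next, $\boldbeta_k = \projp{\Av}\Davk$ vanishes precisely when $\Davk$ lies in the column space of $\Av$. Up to a nonzero scalar (the chain-rule factor $\partial\phi_k/\partial\theta_k \propto \cos\theta_k$, which is nonzero for $\theta_k\in(-\pi/2,\pi/2)$), we have $\Davk = \diagm(0,1,\ldots,\Mv-1)\avk$, whose $n$-th entry is $n z_k^n$ with $z_k = e^{j\phi_k}$, while the columns of $\Av$ have $n$-th entries $z_l^n$, $l=1,\ldots,K$. If $\Davk$ were a combination of these columns, then the $K+1$ sequences $z_1^n,\ldots,z_K^n,\,n z_k^n$ would be linearly dependent over the index range $n=0,\ldots,\Mv-1$. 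But these are exactly the columns of a confluent Vandermonde matrix with distinct nodes $z_1,\ldots,z_K$, where $z_k$ carries multiplicity two; such a matrix has full column rank whenever the number of rows $\Mv$ is at least $K+1$, which holds because $K<\Mv$. Hence no nontrivial relation exists, contradicting the unit coefficient multiplying $n z_k^n$, and therefore $\boldbeta_k \neq \boldzero$.

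Finally, with $\boldalpha_k,\boldbeta_k\neq\boldzero$ the Kronecker product $\boldbeta_k\otimes\boldalpha_k$ is nonzero. Unwinding the block structure of $\boldGamma=[\boldGamma_{\Mv}^T\,\cdots\,\boldGamma_1^T]^T$ shows that $\boldGamma^T(\boldbeta_k\otimes\boldalpha_k)$ is a sum of shifted copies of $\boldalpha_k$ weighted by the entries of $\boldbeta_k$ — i.e., the convolution of the two finite sequences, whose $z$-transform is, up to a reversal, the product $B(z)A(z^{-1})$ of their individual transforms. Since the (Laurent) polynomial ring over $\doubleC$ is an integral domain, a product of two nonzero factors is nonzero, so $\boldGamma^T(\boldbeta_k\otimes\boldalpha_k)\neq\boldzero$. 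It then remains to check that $\boldF^T$ is injective: by the definition of the coarray selection matrix each row of $\boldF$ corresponds to a distinct coarray lag and is supported on a disjoint set of indices of $\boldr$, so the rows are linearly independent, $\boldF$ has full row rank, and $\boldF^T$ has trivial kernel. Applying $\boldF^T$ thus preserves nonvanishing, yielding $\boldxi_k\neq\boldzero$.

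I expect the main obstacle to lie in the two selection-matrix stages rather than in the scalar and rank facts. The vanishing test for $\boldbeta_k$ is standard once one recognizes the derivative vector as introducing a double node, so confluent-Vandermonde nonsingularity closes it. The more careful bookkeeping is to verify that $\boldGamma^T$ genuinely realizes a convolution (so that the integral-domain argument applies) and to confirm, from the explicit definition of $\boldF$ in the appendix, that its rows have disjoint lag supports; both must be handled precisely to guarantee that nonvanishing survives the composition $\boldF^T\boldGamma^T$.
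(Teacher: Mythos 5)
Your proposal is correct and follows essentially the same route as the paper: $\boldalpha_k \neq \boldzero$ from the full row rank of $\Av^\dagger$, $\boldbeta_k \neq \boldzero$ from the linear independence of $\{\bolda_\mathrm{v}(\theta_l)\}$ together with $\boldD\avk$ (the paper derives this confluent-Vandermonde fact from scratch via a polynomial root-multiplicity argument, whereas you quote it as known), and $\boldxi_k \neq \boldzero$ by recognizing $\boldGamma^T(\boldbeta_k\otimes\boldalpha_k)$ as a convolution (the paper's banded Toeplitz matrix $\tilde{\boldB}_k$) and $\boldF^T$ as injective from the disjoint row supports of $\boldF$. The only cosmetic difference is that you phrase the convolution step via the integral-domain property of Laurent polynomials rather than via full column rank of the Toeplitz matrix; the substance is identical.
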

\begin{proof}
We first show that $\boldbeta_k \neq \boldzero$ by contradiction. Assume $\boldbeta_k = \boldzero$. Then $\projp{\Av} \boldD \avk = \boldzero$, where $\boldD = \diagm(0,1,\ldots,\Mv-1)$. This implies that $\boldD \avk$ lies in the column space of $\Av$. Let $\boldh = e^{-j\phi_k} \boldD \avk$. We immediately obtain that $[\Av \ \boldh]$ is not full column rank. We now add $\Mv - K - 1$ distinct DOAs in $(-\pi/2, \pi/2)$ that are different from $\theta_1, \ldots, \theta_K$, and construct an extended steering matrix $\Avb$ of the $\Mv - 1$ distinct DOAs, $\theta_1, \ldots, \theta_{\Mv-1}$. Let $\boldB = [\Avb \ \boldh]$. It follows that $\boldB$ is also not full column rank. Because $\boldB$ is a square matrix, it is also not full row rank. Therefore there exists some non-zero $\boldc \in \doubleC^\Mv$ such that $\boldc^H \boldB = \boldzero$. Let $t_l = e^{j\phi_l}$ for $l = 1,2,\ldots,\Mv$, where $\phi_l = (2 \pi d_0 \sin\theta_k) / \lambda$. We can express $\boldB$ as
\begin{equation*}
    \begin{bmatrix}
        1 & 1 & \cdots & 1 & 0 \\
        t_1 & t_2 & \cdots & t_{\Mv-1} & 1 \\
        t_1^2 & t_2^2 & \cdots & t_{\Mv-1}^2 & 2t_k \\
        \vdots & \vdots & \ddots & \vdots & \vdots \\
        t_1^{\Mv-1} & t_2^{\Mv-1} &
        \cdots & t_{\Mv-1}^{\Mv-1} & (\Mv-1)t_k^{\Mv-2} \\
    \end{bmatrix}.
\end{equation*}
We define the complex polynomial $f(x) = \sum_{l=1}^{\Mv} c_l x^{l-1}$. It can be observed that $\boldc^T \boldB = \boldzero$ is equivalent to $f(t_l) = 0$ for $l = 1,2,\ldots,\Mv-1$, and $f'(t_k) = 0$. By construction, $\theta_l$ are distinct, so $t_l$ are $\Mv - 1$ different roots of $f(x)$. Because $\boldc \neq \boldzero$, $f(x)$ is not a constant-zero polynomial, and has at most $\Mv - 1$ roots. Therefore each root $t_l$ has a multiplicity of at most one. However, $f'(t_k) = 0$ implies that $t_k$ has a multiplicity of at least two, which contradicts the previous conclusion and completes the proof of $\boldbeta_k \neq \boldzero$.

We now show that $\boldxi_k \neq \boldzero$. By the definition of $\boldF$ in Appendix~\ref{app:f-def}, each row of $\boldF$ has at least one non-zero element, and each column of $\boldF$ has at most one non-zero element. Hence $\boldF^T \boldx = \boldzero$ for some $\boldx \in \doubleC^{2\Mv - 1}$ if and only of $\boldx = \boldzero$. It suffices to show that $\boldGamma^T (\boldbeta_k \otimes \boldalpha_k) \neq \boldzero$. By the definition of $\boldGamma$, we can rewrite $\boldGamma^T (\boldbeta_k \otimes \boldalpha_k)$ as $\tilde{\boldB}_k \boldalpha_k$, where
\begin{equation*}
    \tilde{\boldB}_k = 
    \begin{bmatrix}
        \beta_{k\Mv} 
            & 0 
            & \cdots
            & 0 \\
        \beta_{k(\Mv-1)}
            & \beta_{k\Mv}
            & \cdots
            & 0 \\
        \vdots
            & \vdots
            & \ddots
            & \vdots \\
        \beta_{k1}
            & \beta_{k2}
            & \cdots
            & \beta_{k\Mv} \\
        0
            & \beta_{k1}
            & \cdots
            & \beta_{k(\Mv-1)} \\
        \vdots
            & \vdots
            & \ddots
            & \vdots \\
        0
            & 0
            & \cdots
            & \beta_{k1}  
    \end{bmatrix},
\end{equation*}
and $\beta_{kl}$ is the $l$-th element of $\boldbeta_k$. Because $\boldbeta_k \neq \boldzero_k$ and $K < \Mv$, $\tilde{\boldB}_k$ is full column rank. By the definition of pseudo inverse, we know that $\boldalpha_k \neq \boldzero$. Therefore $\tilde{\boldB}_k \boldalpha_k \neq \boldzero$, which completes the proof of $\boldxi_k \neq \boldzero$.
\end{proof}

One important implication of Theorem~\ref{thm:same-doa-err} is that DA-MUSIC and SS-MUSIC share the same first-order error expression, despite the fact that $\Rvone$ is constructed from the second-order statistics, while $\Rvtwo$ is constructed from the fourth-order statistics. Theorem~\ref{thm:same-doa-err} enables a unified analysis of the MSEs of DA-MUSIC and SS-MUSIC, which we present in Theorem~\ref{thm:MSE-MUSIC}.

\begin{theorem}
    \label{thm:MSE-MUSIC}
    Under the same assumptions as in Theorem~\ref{thm:same-doa-err}, the asymptotic second-order statistics of the DOA estimation errors by DA-MUSIC and SS-MUSIC share the same form:
    \begin{equation}
        \label{eq:MSE-MUSIC-thm}
        \doubleE[(\hat{\theta}_{k_1} - \theta_{k_1})
            (\hat{\theta}_{k_2} - \theta_{k_2})]
        \doteq \frac{
            \Real[\boldxi_{k_1}^H (\boldR \otimes \boldR^T) \boldxi_{k_2}]
            }{N p_{k_1} p_{k_2} \gamma_{k_1}\gamma_{k_2}}.
    \end{equation}
\end{theorem}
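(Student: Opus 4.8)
The plan is to feed the first-order expansion of Theorem~\ref{thm:same-doa-err} into the product of the two errors and take the expectation, so that the problem reduces entirely to the second-order moments of $\Delta\boldr$. Since Theorem~\ref{thm:same-doa-err} gives $\hat\theta_k-\theta_k\doteq-(\gamma_kp_k)^{-1}\Real(\boldxi_k^T\Delta\boldr)$, forming the product and using the bilinearity of expectation leaves the single quantity $\doubleE[\Real(\boldxi_{k_1}^T\Delta\boldr)\,\Real(\boldxi_{k_2}^T\Delta\boldr)]$ to be evaluated, after which dividing by $p_{k_1}p_{k_2}\gamma_{k_1}\gamma_{k_2}$ produces \eqref{eq:MSE-MUSIC-thm}. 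The whole calculation then rests on two facts: the Gaussian moment structure of the sample covariance, and a conjugate-symmetry property of the weight vectors $\boldxi_k$.

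For the moment structure I would invoke \ref{ass:a1-uc-source}, \ref{ass:a3-gaussian-noise} and \ref{ass:a4-uc-snapshot}: the snapshots are i.i.d.\ circularly-symmetric Gaussian with covariance $\boldR$, so the Isserlis (Wick) fourth-moment identity yields the standard result $\doubleE[\Delta\boldr\,\Delta\boldr^H]=\tfrac1N(\boldR^T\otimes\boldR)$. I would also record that $\hat\boldR-\boldR$ is Hermitian, so $\Delta\boldr$ is conjugate symmetric in the sense $\Delta\boldr^*=\boldK\Delta\boldr$, where $\boldK$ is the commutation matrix with $\boldK\vecm(\boldX)=\vecm(\boldX^T)$. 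These two properties are the entire distributional input that is needed.

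The decisive structural fact is that each $\boldxi_k$ is itself conjugate symmetric, i.e.\ $\boldK\boldxi_k=\boldxi_k^*$ (equivalently $\matm_{M,M}(\boldxi_k)$ is Hermitian). Granting this, $\boldxi_k^T\Delta\boldr=\trace(\matm_{M,M}(\boldxi_k)^T\Delta\boldR)$ is the trace of a product of two Hermitian matrices and is therefore real, so the $\Real(\cdot)$ operators may be dropped. Combining the conjugate symmetry of $\boldxi_{k_2}$ with that of $\Delta\boldr$ gives $\boldxi_{k_2}^T\Delta\boldr=\boldxi_{k_2}^H\Delta\boldr^*$, whence
\begin{equation*}
    \doubleE[(\boldxi_{k_1}^T\Delta\boldr)(\boldxi_{k_2}^T\Delta\boldr)]
    = \boldxi_{k_1}^T\,\doubleE[\Delta\boldr\,\Delta\boldr^H]\,\boldxi_{k_2}^*
    = \tfrac1N\boldxi_{k_1}^T(\boldR^T\otimes\boldR)\boldxi_{k_2}^*.
\end{equation*}
This scalar is real, so it equals its own real part, and since $\boldR^*=\boldR^T$ its complex conjugate is $\boldxi_{k_1}^H(\boldR\otimes\boldR^T)\boldxi_{k_2}$; hence the quantity is exactly $\tfrac1N\Real[\boldxi_{k_1}^H(\boldR\otimes\boldR^T)\boldxi_{k_2}]$, which delivers \eqref{eq:MSE-MUSIC-thm}.

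The main obstacle is establishing the conjugate symmetry $\boldK\boldxi_k=\boldxi_k^*$; without it the ``pseudo-covariance'' piece would not merge with the covariance piece and the clean form \eqref{eq:MSE-MUSIC-thm} would not hold. I would prove it by tracking symmetries through the definition $\boldxi_k=\boldF^T\boldGamma^T(\boldbeta_k\otimes\boldalpha_k)$ in \eqref{eq:xi-k-def}--\eqref{eq:gamma-mat-def}: the virtual ULA is symmetric about the origin, so the flip operator $\boldT$ commutes appropriately with the coarray-selection matrix $\boldF$ and the stacked subarray selector $\boldGamma$, while the Vandermonde structure of $\boldalpha_k$ and the derivative steering vector inside $\boldbeta_k$ supply the matching symmetry in the subarray domain. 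Propagating these relations through the Khatri--Rao/Kronecker combination is the one genuinely delicate bookkeeping step; everything else is the routine moment algebra sketched above.
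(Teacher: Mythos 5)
Your proposal is correct, and it reaches \eqref{eq:MSE-MUSIC-thm} by a genuinely shorter route than the paper's Appendix~C. The paper deliberately avoids complex-Gaussian shortcuts on $\Delta\boldr$ (since $\Delta\boldr$ is not circularly symmetric), and instead splits everything into real and imaginary parts, evaluates $\doubleE[\Real(\Delta\boldr)\Real(\Delta\boldr)^T]$, $\doubleE[\Imag(\Delta\boldr)\Imag(\Delta\boldr)^T]$ and $\doubleE[\Real(\Delta\boldr)\Imag(\Delta\boldr)^T]$ term by term via the real fourth-moment formula, and then needs the structured matrices $\boldC_{\boldA\boldB}$ and Lemma~\ref{lem:aka-caa} to fold those pieces back into a single Kronecker form. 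You instead deploy the Hermitian symmetry of $\boldXi_k=\matm_{M,M}(\boldxi_k)$ (the paper's Lemma~\ref{lem:xi-k-symmetry}) at the outset to conclude that $\boldxi_k^T\Delta\boldr=\trace(\boldXi_k^T\Delta\boldR)$ is exactly real; this makes the $\Real(\cdot)$ operators vacuous and, more importantly, renders the pseudo-covariance $\doubleE[\Delta\boldr\,\Delta\boldr^T]$ irrelevant, so only the standard identity $\doubleE[\Delta\boldr\,\Delta\boldr^H]=\tfrac1N(\boldR^T\otimes\boldR)$ (which follows from Wick applied to the circular snapshots $\boldy(t)$, not to $\Delta\boldr$) is needed. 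This is a cleaner explanation of why the non-circularity the paper warns about is ultimately harmless, and it dispenses with the $\boldC_{\boldA\boldB}$ machinery entirely; what the paper's longer route buys is the explicit intermediate covariance expressions \eqref{eq:e-re-re-middle}--\eqref{eq:e-re-im-middle}. The one caveat is that your argument leans entirely on the conjugate symmetry $\boldK\boldxi_k=\boldxi_k^*$, which you only sketch: that claim is the real technical content of the paper's proof (its Lemmas~\ref{lem:flip-pn} and~\ref{lem:xi-k-symmetry}, resting on $\TMv\Av=(\Av\boldPhi)^*$, the identity $\TMv\boldD\TMv+\boldD=(\Mv-1)\boldI$, and Lemmas~\ref{lem:f-special-sym}--\ref{lem:ftz-hermitian} for $\boldF$ and $\boldGamma$), and you correctly identify both the obstacle and the ingredients needed to clear it, so the plan is sound as stated.
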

\begin{proof}
See Appendix~\ref{app:thm-mse-music}.
\end{proof}

By Theorem~\ref{thm:MSE-MUSIC}, it is straightforward to write the unified asymptotic MSE expression as
\begin{equation}
    \label{eq:doa-mse}
    \epsilon(\theta_k)
    = \frac{\boldxi_k^H (\boldR \otimes \boldR^T) \boldxi_k}
        {N p_k^2 \gamma_k^2}.
\end{equation}

Therefore the MSE\footnote{For brevity, when we use the acronym ``MSE'' in the following discussion, we refer to the asymptotic MSE, $\epsilon(\theta_k)$, unless explicitly stated.} depends on both the physical array geometry and the coarray geometry. The physical array geometry is captured by $\boldA$, which appears in $\boldR \otimes \boldR^T$. The coarray geometry is captured by $\Av$, which appears in $\boldxi_k$ and $\gamma_k$. Therefore, even if two arrays share the same coarray geometry, they may not share the same MSE because their physical array geometry may be different.

It can be easily observed from \eqref{eq:doa-mse} that $\epsilon(\theta_k) \to 0$ as $N \to \infty$. However, because $p_k$ appears in both the denominator and numerator in \eqref{eq:doa-mse}, it is not obvious how the MSE varies with respect to the source power $p_k$ and noise power $\noisevar$. Let $\bar{p}_k = p_k / \noisevar$ denote the signal-to-noise ratio of the $k$-th source. Let $\bar{\boldP} = \diagm(\bar{p}_1, \bar{p}_2, \ldots, \bar{p}_K)$, and $\bar{\boldR} = \boldA \bar{\boldP} \boldA^H + \boldI$. We can then rewrite $\epsilon(\theta_k)$ as
\begin{equation}
    \label{eq:doa-mse-normalized}
    \epsilon(\theta_k)
    = \frac{
        \boldxi_k^H (\bar{\boldR} \otimes \bar{\boldR}^T) \boldxi_k
        }{N \bar{p}_k^2 \gamma_k^2}.
\end{equation}
Hence the MSE depends on the SNRs instead of the absolute values of $p_k$ or $\noisevar$. To provide an intuitive understanding how SNR affects the MSE, we consider the case when all sources have the same power. In this case, we show in Corollary~\ref{corr:mse-desc} that the MSE asymptotically decreases as the SNR increases.

\begin{corollary}
\label{corr:mse-desc}
Assume all sources have the same power $p$. Let $\bar{p} = p/\noisevar$ denote the common SNR. Given sufficiently large $N$, the MSE $\epsilon(\theta_k)$ decreases monotonically as $\bar{p}$ increases, and
\begin{equation}\label{eq:mse-infty-snr}
    \lim_{\bar{p} \to \infty}
    \epsilon(\theta_k)
    = \frac{1}{N\gamma_k^2}
     \| \boldxi_k^H (\boldA \otimes \boldA^*) \|_2^2.
\end{equation}
\end{corollary}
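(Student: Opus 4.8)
The plan is to substitute the equal-power hypothesis $\bar{\boldP} = \bar{p}\boldI$ into \eqref{eq:doa-mse-normalized} and expose the dependence on $\bar{p}$ explicitly. Writing $\boldC = \boldA\boldA^H$, the normalized covariance becomes $\bar{\boldR} = \bar{p}\boldC + \boldI$, and since $\bar{\boldP}$ is real and diagonal, $\bar{\boldR}^T = \bar{p}\boldC^T + \boldI$. Using bilinearity of the Kronecker product I would expand
\begin{equation*}
\bar{\boldR}\otimes\bar{\boldR}^T = \bar{p}^2(\boldC\otimes\boldC^T) + \bar{p}(\boldC\otimes\boldI + \boldI\otimes\boldC^T) + \boldI\otimes\boldI,
\end{equation*}
and divide the quadratic form $\boldxi_k^H(\bar{\boldR}\otimes\bar{\boldR}^T)\boldxi_k$ by $\bar{p}^2$ to obtain $\epsilon(\theta_k) = (N\gamma_k^2)^{-1}\big(a + b/\bar{p} + c/\bar{p}^2\big)$, where $a = \boldxi_k^H(\boldC\otimes\boldC^T)\boldxi_k$, $b = \boldxi_k^H(\boldC\otimes\boldI + \boldI\otimes\boldC^T)\boldxi_k$, and $c = \boldxi_k^H\boldxi_k = \|\boldxi_k\|_2^2$. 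Note that $\gamma_k$ and $\boldxi_k$ depend on the (coarray) geometry only and carry no $\bar{p}$ dependence, so all of the SNR dependence is collected in this scalar rational function.

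For the monotonicity claim I would show that $a, b \ge 0$ and $c > 0$. Since $\boldC = \boldA\boldA^H \succeq 0$ and $\boldI \succeq 0$, each of $\boldC\otimes\boldC^T$, $\boldC\otimes\boldI$, and $\boldI\otimes\boldC^T$ is a Kronecker product of positive semidefinite matrices and is therefore itself positive semidefinite (using that $\boldC^T = \boldC^*$ is also positive semidefinite); hence $a \ge 0$ and $b \ge 0$. Strict positivity of $c$ is exactly $\boldxi_k \neq \boldzero$, which is guaranteed by Proposition~\ref{prop:alpha-beta-xi-nonzero}. Differentiating then gives $\frac{d}{d\bar{p}}\big(a + b/\bar{p} + c/\bar{p}^2\big) = -\bar{p}^{-3}(b\bar{p} + 2c)$, which is strictly negative for every $\bar{p} > 0$ because $b\bar{p} + 2c \ge 2c > 0$. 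Hence $\epsilon(\theta_k)$ is strictly decreasing in $\bar{p}$; the phrase ``given sufficiently large $N$'' only reflects that \eqref{eq:doa-mse-normalized} is the large-sample asymptotic MSE, and is not an additional hypothesis needed for this monotonicity argument.

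For the limit, letting $\bar{p}\to\infty$ annihilates the $b/\bar{p}$ and $c/\bar{p}^2$ terms and leaves $\lim_{\bar{p}\to\infty}\epsilon(\theta_k) = a/(N\gamma_k^2)$. The last step is to identify $a$ with the squared norm in \eqref{eq:mse-infty-snr}. Applying the mixed-product property $(\boldA\boldA^H)\otimes(\boldA^*\boldA^T) = (\boldA\otimes\boldA^*)(\boldA^H\otimes\boldA^T)$ together with $(\boldA\otimes\boldA^*)^H = \boldA^H\otimes\boldA^T$, I obtain $\boldC\otimes\boldC^T = (\boldA\otimes\boldA^*)(\boldA\otimes\boldA^*)^H$, so that $a = \|\boldxi_k^H(\boldA\otimes\boldA^*)\|_2^2$, which is precisely \eqref{eq:mse-infty-snr}. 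The only mildly delicate points are bookkeeping ones --- transposing $\bar{\boldR}$ correctly (where the realness of $\bar{\boldP}$ is used) and invoking the mixed-product identity in the right order --- rather than any deep obstacle; the conceptual crux is simply the observation that the nonnegativity of the coefficients $a$ and $b$ follows from the positive semidefinite structure of the Kronecker factors, which is what forces the expression to be monotone.
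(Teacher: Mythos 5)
Your proof is correct and follows essentially the same route as the paper: both expand the quadratic form $\boldxi_k^H(\boldR\otimes\boldR^T)\boldxi_k$ in powers of the SNR parameter, use positive semidefiniteness of the Kronecker factors built from $\boldA\boldA^H$ together with $\boldxi_k\neq\boldzero$ (Proposition~\ref{prop:alpha-beta-xi-nonzero}) for strict monotonicity, and read off the limit from the leading term via $\boldC\otimes\boldC^T=(\boldA\otimes\boldA^*)(\boldA\otimes\boldA^*)^H$. The only cosmetic difference is that the paper fixes $p=1$ and compares two noise levels $s_1<s_2$ directly, whereas you differentiate the rational function $a+b/\bar{p}+c/\bar{p}^2$ in $\bar{p}$; the two arguments are interchangeable.
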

\begin{proof}
The limiting expression can be derived straightforwardly from \eqref{eq:doa-mse-normalized}. For monotonicity, without loss of generality, let $p = 1$, so $\bar{p} = 1/\noisevar$. Because $f(x) = 1/x$ is monotonically decreasing on $(0, \infty)$, it suffices to show that $\epsilon(\theta_k)$ increases monotonically as $\noisevar$ increases. Assume $0 < s_1 < s_2$, and we have
\begin{equation*}
    \left.\epsilon(\theta_k)\right|_{\noisevar = s_2} -
    \left.\epsilon(\theta_k)\right|_{\noisevar = s_1}
    = \frac{1}{N\gamma_k^2}
        \boldxi_k^H \boldQ \boldxi_k,
\end{equation*}
where $\boldQ = (s_2 - s_1)[(\boldA\boldA^H) \otimes \boldI + \boldI \otimes (\boldA\boldA^H) + (s_2 + s_1)\boldI]$. Because $\boldA\boldA^H$ is positive semidefinite, both $(\boldA\boldA^H) \otimes \boldI$ and $\boldI \otimes (\boldA\boldA^H)$ are positive semidefinite. Combined with our assumption that $0 < s_1 < s_2$, we conclude that $\boldQ$ is positive definite. By Proposition~\ref{prop:alpha-beta-xi-nonzero} we know that $\boldxi_k \neq \boldzero$. Therefore $\boldxi_k^H \boldQ \boldxi_k$ is strictly greater than zero, which implies the MSE monotonically increases as $\noisevar$ increases.
\end{proof}

Because both DA-MUSIC and SS-MUSIC work also in cases when the number of sources exceeds the number of sensors, we are particularly interested in their limiting performance in such cases. As shown in Corollary~\ref{corr:mse-g-zero}, when $K \geq M$, the corresponding MSE is strictly greater than zero, even though the SNR approaches infinity. This corollary explains the ``saturation'' behavior of SS-MUSIC in the high SNR region as observed in \cite{qin_generalized_2015} and \cite{pal_nested_2010}.
Another interesting implication of Corollary~\ref{corr:mse-g-zero} is that when $2 \leq K < M$, the limiting MSE is not necessarily zero. Recall that in \cite{stoica_music_1989}, it was shown that the MSE of the traditional MUSIC algorithm will converge to zero as SNR approaches infinity. We know that both DA-MUSIC and SS-MUSIC will be outperformed by traditional MUSIC in high SNR regions when $2 \leq K < M$. Therefore, we suggest using DA-MUSIC or SS-MUSIC only when $K \geq M$.
\begin{corollary}
\label{corr:mse-g-zero}
Following the same assumptions in Corollary~\ref{corr:mse-desc},
\begin{enumerate}
    \item When $K = 1$, $\lim_{\bar{p} \to \infty} \epsilon(\theta_k) = 0$;
    \item When $2 \leq K < M$, $\lim_{\bar{p} \to \infty} \epsilon(\theta_k) \geq 0$;
    \item When $K \geq M$, $\lim_{\bar{p} \to \infty} \epsilon(\theta_k) > 0$.
\end{enumerate}
\end{corollary}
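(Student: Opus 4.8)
The plan is to start from the closed-form limit already established in Corollary~\ref{corr:mse-desc},
\[
\lim_{\bar{p}\to\infty}\epsilon(\theta_k)
 = \frac{1}{N\gamma_k^2}\,\bigl\|\boldxi_k^H(\boldA\otimes\boldA^*)\bigr\|_2^2,
\]
and reduce everything to deciding when the right-hand side vanishes. Since $\gamma_k = \DavkH\projp{\Av}\Davk = \|\boldbeta_k\|_2^2 > 0$ by Proposition~\ref{prop:alpha-beta-xi-nonzero}, the limit is a nonnegative constant times a squared norm. This already settles the nonnegativity claimed in case~2 (and the trivial lower bound in every case), so the only substantive work is in cases~1 and~3, i.e.\ deciding whether $\boldxi_k^H(\boldA\otimes\boldA^*) = \boldzero$.

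First I would expand the squared norm column by column as $\sum_{p,q}\lvert\boldxi_k^H(\bolda(\theta_p)\otimes\bolda^*(\theta_q))\rvert^2$ and dispose of the diagonal terms $p=q$. The key observation is that the diagonal column $\bolda(\theta_p)\otimes\bolda^*(\theta_p)$ is the complex conjugate of the $p$-th column of $\Ad=\boldA^*\odot\boldA$, so it suffices to prove $\boldxi_k^T\Ad=\boldzero$. Writing $\boldxi_k^T=(\boldbeta_k\otimes\boldalpha_k)^T\boldGamma\boldF$ and using that, by construction, $\boldGamma\boldF$ carries $\boldr=\vecm(\boldR)$ to $\vecm(\Rvone)$ (cf.~\eqref{eq:def-rv1} and~\eqref{eq:rv1-rv2-relation}), the composite map acts on the signal steering columns as $\boldGamma\boldF\Ad=\Av^*\odot\Av$. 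Hence the $p$-th entry of $\boldxi_k^T\Ad$ factors, via $(\boldbeta_k\otimes\boldalpha_k)^T(\bm{a}_\mathrm{v}^*(\theta_p)\otimes\bm{a}_\mathrm{v}(\theta_p)) = (\boldbeta_k^T\bm{a}_\mathrm{v}^*(\theta_p))(\boldalpha_k^T\bm{a}_\mathrm{v}(\theta_p))$, and the first factor vanishes because $\boldbeta_k=\projp{\Av}\Davk$ and $\projp{\Av}$ annihilates every column of $\Av$, in particular $\bm{a}_\mathrm{v}(\theta_p)$ (the conjugation being harmless since $\projp{\Av}$ is Hermitian). This kills all diagonal terms. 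Case~1 ($K=1$) then follows instantly: $\boldA\otimes\boldA^*$ has a single column, which is the diagonal one, so the norm is zero and the limit is exactly $0$.

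For case~3 ($K\ge M$) I would abandon term-by-term evaluation and argue by rank instead. When $K\ge M$ (and the DOAs are in general position), the physical steering matrix $\boldA\in\doubleC^{M\times K}$ has full row rank $M$; consequently $\boldA\otimes\boldA^*$ has rank $M^2$ and its columns span all of $\doubleC^{M^2}$. Since $\boldxi_k\neq\boldzero$ by Proposition~\ref{prop:alpha-beta-xi-nonzero}, the vector $\boldxi_k$ cannot be orthogonal to every column, so $\|\boldxi_k^H(\boldA\otimes\boldA^*)\|_2^2>0$ and the limit is strictly positive. Note that this argument is self-contained and does not use the diagonal-term computation above.

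The main obstacle I anticipate is the full-row-rank claim in case~3: for a sparse, non-uniform array, distinct DOAs alone do not guarantee $\mathrm{rank}(\boldA)=M$ (certain near-endfire configurations render a lacunary Vandermonde block singular). This is exactly the point at which the threshold $K=M$ and a general-position (or equivalent) assumption must enter, and it should be stated carefully. The remaining effort is routine bookkeeping: confirming the identity $\boldGamma\boldF\Ad=\Av^*\odot\Av$, the mixed-product rule $(\boldu\otimes\boldv)^T(\boldx\otimes\boldy)=(\boldu^T\boldx)(\boldv^T\boldy)$, and the use of $\projp{\Av}^T=\overline{\projp{\Av}}$ to transfer the projector onto the conjugated steering vector.
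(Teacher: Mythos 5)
Your proposal is correct and follows essentially the same route as the paper's proof: expand the limiting norm over the columns of $\boldA\otimes\boldA^*$, annihilate the diagonal terms by mapping them through $\boldGamma\boldF$ to $\bolda_{\mathrm{v}}(\theta_p)\otimes\bolda_{\mathrm{v}}^*(\theta_p)$ and using $\projp{\Av}\bolda_{\mathrm{v}}(\theta_p)=\boldzero$ (which settles $K=1$), and for $K\geq M$ invoke full row rank of $\boldA$ together with $\boldxi_k\neq\boldzero$ from Proposition~\ref{prop:alpha-beta-xi-nonzero}. Your caveat that distinct DOAs alone do not guarantee $\mathrm{rank}(\boldA)=M$ for a lacunary steering matrix is a fair point that the paper itself glosses over, but otherwise the two arguments coincide.
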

\begin{proof}
The right-hand side of \eqref{eq:mse-infty-snr} can be expanded into
\begin{equation*}
    \frac{1}{N\gamma_k^2}
    \sum_{m=1}^K \sum_{n=1}^K \|\boldxi_k^H [\bolda(\theta_m)\otimes\bolda^*(\theta_n)]\|_2^2.
\end{equation*}
By the definition of $\boldF$, $\boldF [\bolda(\theta_m)\otimes\bolda^*(\theta_m)]$ becomes
\begin{equation*}
    [e^{j(\Mv-1)\phi_m}, e^{j(\Mv-2)\phi_m}, \ldots, e^{-j(\Mv-1)\phi_m}].
\end{equation*}
Hence $\boldGamma \boldF [\bolda(\theta_m)\otimes\bolda^*(\theta_m)]  = \bolda_{\mathrm{v}}(\theta_m) \otimes \bolda_{\mathrm{v}}^*(\theta_m)$. Observe that
\begin{equation*}
\begin{aligned}
    \boldxi_k^H [\bolda(\theta_m)\otimes\bolda^*(\theta_m)]
    =& (\boldbeta_k \otimes \boldalpha_k)^H 
        (\bolda_{\mathrm{v}}(\theta_m) \otimes \bolda_{\mathrm{v}}^*(\theta_m)) \\
    =& (\boldbeta_k^H \bolda_{\mathrm{v}}(\theta_m))
        (\boldalpha_k^H \bolda_{\mathrm{v}}^*(\theta_m)) \\
    =& (\dot{\bolda}_{\mathrm{v}}^H(\theta_k) \projp{\Av} 
            \bolda_{\mathrm{v}}(\theta_m))
        (\boldalpha_k^H \bolda_{\mathrm{v}}^*(\theta_m)) \\
    =& 0.
\end{aligned}
\end{equation*}
We can reduce the right-hand side of \eqref{eq:mse-infty-snr} into
\begin{equation*}
    \frac{1}{N\gamma_k^2}
    \sum_{\substack{1\leq m,n \leq K \\ m\neq n}} \|\boldxi_k^H [\bolda(\theta_m)\otimes\bolda^*(\theta_n)]\|_2^2.
\end{equation*}
Therefore when $K = 1$, the limiting expression is exactly zero. When $2 \leq K < M$, the limiting expression is not necessary zero because when $m \neq n$, $\boldxi_k^H [\bolda(\theta_m)\otimes\bolda^*(\theta_n)]$ is not necessarily zero.

When $K \geq M$, $\boldA$ is full row rank. Hence $\boldA \otimes \boldA^*$ is also full row rank. By Proposition~\ref{prop:alpha-beta-xi-nonzero} we know that $\boldxi_k \neq 0$, which implies that $\epsilon(\theta_k)$ is strictly greater than zero.
\end{proof}

\section{The Cram\'{e}r-Rao Bound}
The CRB for the unconditional model \eqref{eq:recv-basic} has been well studied in \cite{stoica_performance_1990}, but only when the number of sources is less than the number of sensors and no prior knowledge of $\boldP$ is given. For the coarray model, the number of sources can exceed the number of sensors, and $\boldP$ is assumed to be diagonal. Therefore, the CRB derived in \cite{stoica_performance_1990} cannot be directly applied. Based on \cite[Appendix 15C]{kay_fundamentals_1993}, we provide an alternative CRB based on the signal model \eqref{eq:recv-basic}, under assumptions \ref{ass:a1-uc-source}--\ref{ass:a4-uc-snapshot}.

For the signal model \eqref{eq:recv-basic}, the parameter vector is defined by
\begin{equation}
    \boldeta = [\theta_1, \ldots, \theta_K, p_1, \ldots, p_k, \noisevar]^T,
\end{equation}
and the $(m,n)$-th element of the Fisher information matrix (FIM) is given by \cite{kay_fundamentals_1993, stoica_performance_1990}
\begin{equation}
    \label{eq:FIM-element}
    \FIM_{mn} = N\trace\Bigg[
        \frac{\partial{\boldR}}{\partial{\eta_m}}
        \boldR^{-1}
        \frac{\partial{\boldR}}{\partial{\eta_n}}
        \boldR^{-1}
    \Bigg].
\end{equation}
Observe that $\trace(\boldA\boldB) = \vecm(\boldA^T)^T \vecm(\boldB)$, and that $\vecm(\boldA\boldX\boldB) = (\boldB^T \otimes \boldA)\vecm(\boldX)$. We can rewrite \eqref{eq:FIM-element} as
\begin{equation*}
    \begin{aligned}
        \FIM_{mn}
        &= N
        \Bigg[\frac{\partial{\boldr}}{\partial{\eta_m}}\Bigg]^H
        (\boldR^T \otimes \boldR)^{-1}
        \frac{\partial{\boldr}}{\partial{\eta_n}}.
    \end{aligned}
\end{equation*}
Denote the derivatives of $\boldr$ with respect to $\boldeta$ as
\begin{equation}
    \label{eq:pr-peta-def}
    \frac{\partial{\boldr}}{\partial{\boldeta}}
    = \Bigg[
        \frac{\partial{\boldr}}{\partial{\theta_1}}\,
        \cdots\,
        \frac{\partial{\boldr}}{\partial{\theta_K}}\,
        \frac{\partial{\boldr}}{\partial{p_1}}\,
        \cdots\,
        \frac{\partial{\boldr}}{\partial{p_K}}\,
        \frac{\partial{\boldr}}{\partial{\noisevar}}
    \Bigg].
\end{equation}
The FIM can be compactly expressed by
\begin{equation}
    \label{eq:fim-unpartitioned}
    \FIM =
    \Bigg[\frac{\partial{\boldr}}{\partial{\boldeta}}\Bigg]^H
    (\boldR^T \otimes \boldR)^{-1}
    \frac{\partial{\boldr}}{\partial{\boldeta}}.
\end{equation}
According to \eqref{eq:coarray-full-model}, we can compute the derivatives in \eqref{eq:pr-peta-def} and obtain
\begin{equation}
    \label{eq:pr-peta-final}
    \frac{\partial{\boldr}}{\partial{\boldeta}}
    = \begin{bmatrix}
        \DAd\boldP & \Ad & \boldi
    \end{bmatrix},
\end{equation}
where $\DAd = \dot{\boldA}^* \odot \boldA + \boldA^* \odot \dot{\boldA}$, $\Ad$ and $\boldi$ follow the same definitions as in \eqref{eq:coarray-full-model}, and
\begin{equation*}
    \dot{\boldA} = 
    \Bigg[
        \frac{\partial \bolda(\theta_1)}{\partial \theta_1} \,
        \frac{\partial \bolda(\theta_2)}{\partial \theta_2} \,
        \cdots \,
        \frac{\partial \bolda(\theta_K)}{\partial \theta_K}
    \Bigg].
\end{equation*}
Note that \eqref{eq:pr-peta-final} can be partitioned into two parts, specifically, the part corresponding to DOAs and the part corresponding to the source and noise powers. We can also partition the FIM. Because $\boldR$ is positive definite, $(\boldR^T \otimes \boldR)^{-1}$ is also positive definite, and its square root $(\boldR^T \otimes \boldR)^{-1/2}$ also exists. Let
\begin{equation*}
    \boldM_{\boldtheta} = (\boldR^T \otimes \boldR)^{-1/2}
    \DAd\boldP,
\end{equation*}
\begin{equation*}
    \boldM_{\bolds} = (\boldR^T \otimes \boldR)^{-1/2}
    \big[ \Ad\, \boldi \big].
\end{equation*}
We can write the partitioned FIM as
\begin{equation*}
    \FIM = N\begin{bmatrix}
        \boldM_{\boldtheta}^H \boldM_{\boldtheta} &
        \boldM_{\boldtheta}^H \boldM_{\bolds} \\
        \boldM_{\bolds}^H \boldM_{\boldtheta} &
        \boldM_{\bolds}^H \boldM_{\bolds}
    \end{bmatrix}.
\end{equation*}
The CRB matrix for the DOAs is then obtained by block-wise inversion:
\begin{equation}
    \label{eq:crb-final}
    \CRB_{\boldtheta} = \frac{1}{N}
    (\boldM_{\boldtheta}^H 
    \projp{\boldM_{\bolds}}
    \boldM_{\boldtheta})^{-1},
\end{equation}
where $\projp{\boldM_{\bolds}} = \boldI - \boldM_{\bolds} (\boldM_{\bolds}^H \boldM_{\bolds})^{-1} \boldM_{\bolds}^H$. It is worth noting that, unlike the classical CRB for the unconditional model introduced in \cite[Remark 1]{stoica_performance_1990}, expression \eqref{eq:crb-final} is applicable even if the number of sources exceeds the number of sensors. 

\begin{remark}
Similar to \eqref{eq:doa-mse}, $\CRB_{\boldtheta}$ depends on the SNRs instead of the absolute
values of $p_k$ or $\noisevar$. Let $\bar{p}_k = p_k / \noisevar$, and $\bar{\boldP} = \diagm(\bar{p}_1, \bar{p}_2, \ldots, \bar{p}_K)$. We have
\begin{align}
    \label{eq:m-theta-normalized}
    \boldM_{\boldtheta} &= (\bar{\boldR}^T \otimes \bar{\boldR})^{-1/2}
    \DAd\bar{\boldP}, \\
    \label{eq:m-s-normalized}
    \boldM_{\bolds} &= \noisevarinv(\bar{\boldR}^T \otimes \bar{\boldR})^{-1/2}
    \big[ \Ad\, \boldi \big].
\end{align}
Substituting \eqref{eq:m-theta-normalized} and \eqref{eq:m-s-normalized} into \eqref{eq:crb-final}, the term $\noisevar$ gets canceled, and the resulting $\CRB_{\boldtheta}$ depends on the ratios $\bar{p}_k$ instead of absolute values of $p_k$ or $\noisevar$.
\end{remark}

\begin{remark}
The invertibility of the FIM depends on the coarray structure. In the noisy case, $(\boldR^T \otimes \boldR)^{-1}$ is always full rank, so the FIM is invertible if and only if $\partial\boldr/\partial\boldeta$ is full column rank. By \eqref{eq:pr-peta-final} we know that the rank of $\partial\boldr/\partial\boldeta$ is closely related to $\Ad$, the coarray steering matrix. Therefore $\CRB_{\boldtheta}$ is not valid for an arbitrary number of sources, because $\Ad$ may not be full column rank when too many sources are present.
\end{remark}

\begin{proposition}
\label{prop:crb-snr-infty}
Assume all sources have the same power $p$, and $\partial\boldr / \partial\boldeta$ is full column rank. Let $\bar{p} = p / \noisevar$.
\begin{enumerate}[label=(\arabic*)]
    \item If $K < M$, and $\lim_{\bar{p} \to \infty} \CRB_{\boldtheta}$ exists, it is zero under mild conditions.
    \item If $K \geq M$, and $\lim_{\bar{p} \to \infty} \CRB_{\boldtheta}$ exists, it is positive definite.
\end{enumerate}
\end{proposition}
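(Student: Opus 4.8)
The plan is to work with the DOA information matrix $\boldJ(\bar p):=\boldM_{\boldtheta}^H\projp{\boldM_{\bolds}}\boldM_{\boldtheta}$, so that $\CRB_{\boldtheta}=\tfrac1N\boldJ^{-1}$, and to determine its behaviour as $\bar p\to\infty$. Substituting the normalized forms \eqref{eq:m-theta-normalized}--\eqref{eq:m-s-normalized} with $\bar{\boldP}=\bar p\boldI$ (equal powers), setting $\bm{\Omega}:=(\bar{\boldR}^T\otimes\bar{\boldR})^{-1}=(\bar{\boldR}^{-1})^T\otimes\bar{\boldR}^{-1}$ and $\boldG:=[\Ad\,\boldi]$, the scalar $\noisevarinv$ cancels in the projector and a short computation yields
\begin{equation*}
\boldJ(\bar p)=\bar p^2\,\DAd^H\big[\bm{\Omega}-\bm{\Omega}\boldG(\boldG^H\bm{\Omega}\boldG)^{-1}\boldG^H\bm{\Omega}\big]\DAd .
\end{equation*}
Everything then reduces to the behaviour of $\bar{\boldR}^{-1}=(\bar p\boldA\boldA^H+\boldI)^{-1}$, and the two parts are separated by whether $\boldA\boldA^H$ is invertible.

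For part (2), $K\geq M$, so $\boldA$ has full row rank and $\boldA\boldA^H$ is invertible. I would expand $\bar{\boldR}^{-1}=\bar p^{-1}(\boldA\boldA^H)^{-1}+O(\bar p^{-2})$, giving $\bar p^2\bm{\Omega}\to\bm{\Omega}_\infty:=(\boldA\boldA^H)^{-T}\otimes(\boldA\boldA^H)^{-1}\succ\boldzero$. Since the bracket is homogeneous of degree one in $\bm{\Omega}$, the prefactor is absorbed, $\bar p^2\,(\text{bracket})(\bm{\Omega})=(\text{bracket})(\bar p^2\bm{\Omega})$, and hence
\begin{equation*}
\lim_{\bar p\to\infty}\boldJ=(\bm{\Omega}_\infty^{1/2}\DAd)^H\projp{\bm{\Omega}_\infty^{1/2}\boldG}(\bm{\Omega}_\infty^{1/2}\DAd).
\end{equation*}
Because $\bm{\Omega}_\infty^{1/2}$ is invertible and $[\DAd\,\Ad\,\boldi]$ has full column rank (it differs from $\partial\boldr/\partial\boldeta$ in \eqref{eq:pr-peta-final} only by a positive scaling of its first $K$ columns, so it inherits the assumed full column rank), the matrix $\projp{\bm{\Omega}_\infty^{1/2}\boldG}\bm{\Omega}_\infty^{1/2}\DAd$ is full column rank, so the limit is positive definite; inverting gives $\lim_{\bar p\to\infty}\CRB_{\boldtheta}=\tfrac1N(\lim\boldJ)^{-1}\succ\boldzero$.

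For part (1), $K<M$, so $\boldA$ has full column rank, $\bar{\boldR}^{-1}\to\projp{\boldA}$, and $\bm{\Omega}\to\bm{\Omega}_0:=\projp{\boldA}^T\otimes\projp{\boldA}$. The decisive fact is that $\projp{\boldA}\bolda(\theta_k)=\boldzero$ forces both $\bm{\Omega}_0\Ad=\boldzero$ and $\bm{\Omega}_0\DAd=\boldzero$, so the leading order of $\bm{\Omega}$ annihilates both the DOA and the source-power blocks and only the noise direction survives ($\bm{\Omega}_0\boldi=\vecm(\projp{\boldA})\neq\boldzero$). I would therefore expand $\bm{\Omega}=\bm{\Omega}_0+\bar p^{-1}\bm{\Omega}_1+O(\bar p^{-2})$ with $\bm{\Omega}_1=((\boldA\boldA^H)^\dagger)^T\otimes\projp{\boldA}+\projp{\boldA}^T\otimes(\boldA\boldA^H)^\dagger$ and keep the surviving $\Theta(\bar p^{-1})$ terms. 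A direct evaluation shows the $k$-th diagonal entry of $\DAd^H\bm{\Omega}_1\DAd$ equals $2\big(\bolda^H(\theta_k)(\boldA\boldA^H)^\dagger\bolda(\theta_k)\big)\|\projp{\boldA}\dot{\bolda}(\theta_k)\|_2^2$, which is strictly positive whenever $\projp{\boldA}\dot{\bolda}(\theta_k)\neq\boldzero$ (the physical-array analogue of $\boldbeta_k\neq\boldzero$ in Proposition~\ref{prop:alpha-beta-xi-nonzero}). Thus $\boldJ(\bar p)\sim\bar p\,\boldJ_1$ diverges and $\CRB_{\boldtheta}\to\boldzero$; the ``mild conditions'' are exactly those ensuring that the surviving coefficient $\boldJ_1$ is positive definite, with the strict positivity of its diagonal above as the key input.

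The step I expect to be the main obstacle is the degenerate limit in part (1). Because $\bm{\Omega}_0$ kills every column of $\boldG$ except $\boldi$, the Gram matrix $\boldG^H\bm{\Omega}\boldG$ has a rank-deficient limit -- its $\Ad$-block is $\Theta(\bar p^{-1})$ while its $\boldi$-block is $\Theta(1)$ -- so $(\boldG^H\bm{\Omega}\boldG)^{-1}$ grows like $\bar p$ on the $\Ad$-block. Consequently the nuisance-correction term $\DAd^H\bm{\Omega}\boldG(\boldG^H\bm{\Omega}\boldG)^{-1}\boldG^H\bm{\Omega}\DAd$ is itself $\Theta(\bar p^{-1})$, of the same order as $\DAd^H\bm{\Omega}\DAd$, and cannot simply be dropped; a careful block/Schur-complement expansion of $(\boldG^H\bm{\Omega}\boldG)^{-1}$ is needed to verify that the divergent $\Theta(\bar p^{-1})$ coefficient $\boldJ_1$ is not cancelled. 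As an independent sanity check, monotonicity in the nuisance parameters gives $\boldzero\preceq\lim\CRB_{\boldtheta}\preceq\lim\CRB_{\boldtheta}^{\mathrm{uncond}}$, where $\CRB_{\boldtheta}^{\mathrm{uncond}}$ is the classical unconditional bound of \cite{stoica_performance_1990} with $\boldP$ left unconstrained; since the latter vanishes for $K<M$, the limit must be $\boldzero$.
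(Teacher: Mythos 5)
Your overall strategy is the same as the paper's: normalize so that only $\bar p$ matters, expand $(\boldR^T\otimes\boldR)^{-1}$ in powers of the inverse SNR, observe that the leading term $\projp{\boldA}^T\otimes\projp{\boldA}$ annihilates $\DAd$ and $\Ad$ when $K<M$, and reduce everything to whether the next-order coefficient survives the nuisance projection. Part (2) is complete and correct; the paper reaches the same conclusion even more directly by noting that at $\noisevar=0$ the matrix $\boldR=\boldA\boldA^H$ is still invertible, so the FIM is continuous and positive definite at the limit point.

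For part (1), however, the step you flag as ``the main obstacle'' is precisely the content the paper cannot avoid and devotes its Lemma~\ref{lem:proj-ms-exist} to: because the Gram matrix $\boldM_{\bolds}^H\boldM_{\bolds}$ has an $\Ad$-block and an $\boldi$-block of different orders in $\noisevar$, one must block-invert it and track the Schur complements ($v^{-1}$ and $s^{-1}$ both scale as $\noisevarp{4}$) to show that $\proj{\boldM_{\bolds}}$ converges to $\boldK_5^\star\Ad\Ad^H\boldK_5^\star+(M-K)^{-1}\boldK_1^\star\boldi\boldi^H\boldK_1^\star$, which in turn makes the surviving coefficient (the paper's $\boldQ_1$, your $\boldJ_1$) converge. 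You correctly diagnose that the nuisance-correction term is of the same order as $\DAd^H\bm{\Omega}_1\DAd$ and could in principle cancel it, but you do not carry out the computation, so the expansion route is left with a genuine gap. (The residual assumption that the limiting coefficient is nonsingular is not a gap relative to the paper: the paper also only assumes $\boldQ_1^\star$ is invertible, in a footnote, and this is exactly the ``mild conditions'' of the statement.) Your closing ``sanity check'' is actually the more interesting contribution: since the diagonal-$\boldP$ model's nuisance columns $[\Ad\ \boldi]$ span a subspace of the unstructured-$\boldP$ model's nuisance columns, $\CRB_{\boldtheta}\preceq\CRB_{\boldtheta}^{\mathrm{uncond}}$, and the classical unconditional bound of \cite{stoica_performance_1990} vanishes as $\bar p\to\infty$ for $K<M$; combined with $\CRB_{\boldtheta}\succeq\boldzero$ and the hypothesis that the limit exists, this squeezes the limit to zero. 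If you formalize the nesting argument, that is a complete and considerably shorter proof of part (1) than the paper's, at the cost of not exhibiting the rate ($\CRB_{\boldtheta}=\scriptO(\noisevar)$) or identifying what the ``mild conditions'' are.
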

\begin{proof}
See Appendix~\ref{app:crb-snr-infty}.
\end{proof}

While infinite SNR is unachievable from a practical standpoint, Proposition~\ref{prop:crb-snr-infty} gives some useful theoretical implications. When $K < M$, the limiting MSE (13) in Corollary~\ref{corr:mse-desc} is not necessarily zero. However, Proposition~\ref{prop:crb-snr-infty} reveals that the CRB may approach zero when SNR goes to infinity. This observation implies that both DA-MUSIC and SS-MUSIC may have poor statistical efficiency in high SNR regions. When $K \geq M$, Proposition~\ref{prop:crb-snr-infty} implies that the CRB of each DOA will converge to a positive constant, which is consistent with Corollary~\ref{corr:mse-g-zero}.

\section{Numerical Analysis}
\label{sec:numerical-results}
In this section, we numerically analyze of DA-MUSIC and SS-MUSIC by utilizing \eqref{eq:doa-mse} and \eqref{eq:crb-final}. We first verify the MSE expression \eqref{eq:MSE-MUSIC-thm} introduced in Theorem~\ref{thm:MSE-MUSIC} through Monte Carlo simulations.
We then examine the application of \eqref{eq:same-doa-err} in predicting the resolvability of two closely placed sources, and analyze the asymptotic efficiency of both estimators from various aspects.
Finally, we investigate how the number of sensors affect the asymptotic MSE.

In all experiments, we define the signal-to-noise ratio (SNR) as 
\begin{equation*}
    \mathrm{SNR} = 10\log_{10}\frac{\min_{k=1,2,\ldots,K} p_k}{\noisevar},
\end{equation*}
where $K$ is the number of sources.

Throughout Section~\ref{subsec:verification}, \ref{subsec:res} and \ref{subsec:eff}, we consider the following three different types of linear arrays with the following sensor configurations:
\begin{itemize}
    \item Co-prime Array~\cite{pal_coprime_2011}: $[0,3,5,6,9,10,12,15,20,25]\lambda/2$
    \item Nested Array~\cite{pal_nested_2010}: $[1,2,3,4,5,10,15,20,25,30]\lambda/2$
    \item MRA~\cite{ishiguro_minimum_1980}: $[0,1,4,10,16,22,28,30,33,35]\lambda/2$
\end{itemize}
All three arrays share the same number of sensors, but difference apertures.

\subsection{Numerical Verification}\label{subsec:verification}
We first verify \eqref{eq:doa-mse} via numerical simulations. We consider 11 sources with equal power, evenly placed between $-67.50^\circ$ and $56.25^\circ$, which is more than the number of sensors. We compare the difference between the analytical MSE and the empirical MSE under different combinations of SNR and snapshot numbers. The analytical MSE is defined by
\begin{equation*}
    \mathrm{MSE}_\mathrm{an}
    = \frac{1}{K}\sum_{k=1}^K \epsilon(\theta_k),
\end{equation*}
and the empirical MSE is defined by
\begin{equation*}
    \mathrm{MSE}_\mathrm{em}
    = \frac{1}{KL}\sum_{l=1}^L\sum_{k=1}^K
        \big(\hat{\theta}_k^{(l)} - \theta_k^{(l)}\big)^2,
\end{equation*}
where $\theta_k^{(l)}$ is the $k$-th DOA in the $l$-th trial, and $\hat{\theta}_k^{(l)}$ is the corresponding estimate.

\begin{figure}[ht]
    \centering
    \includegraphics[width=0.95\linewidth]{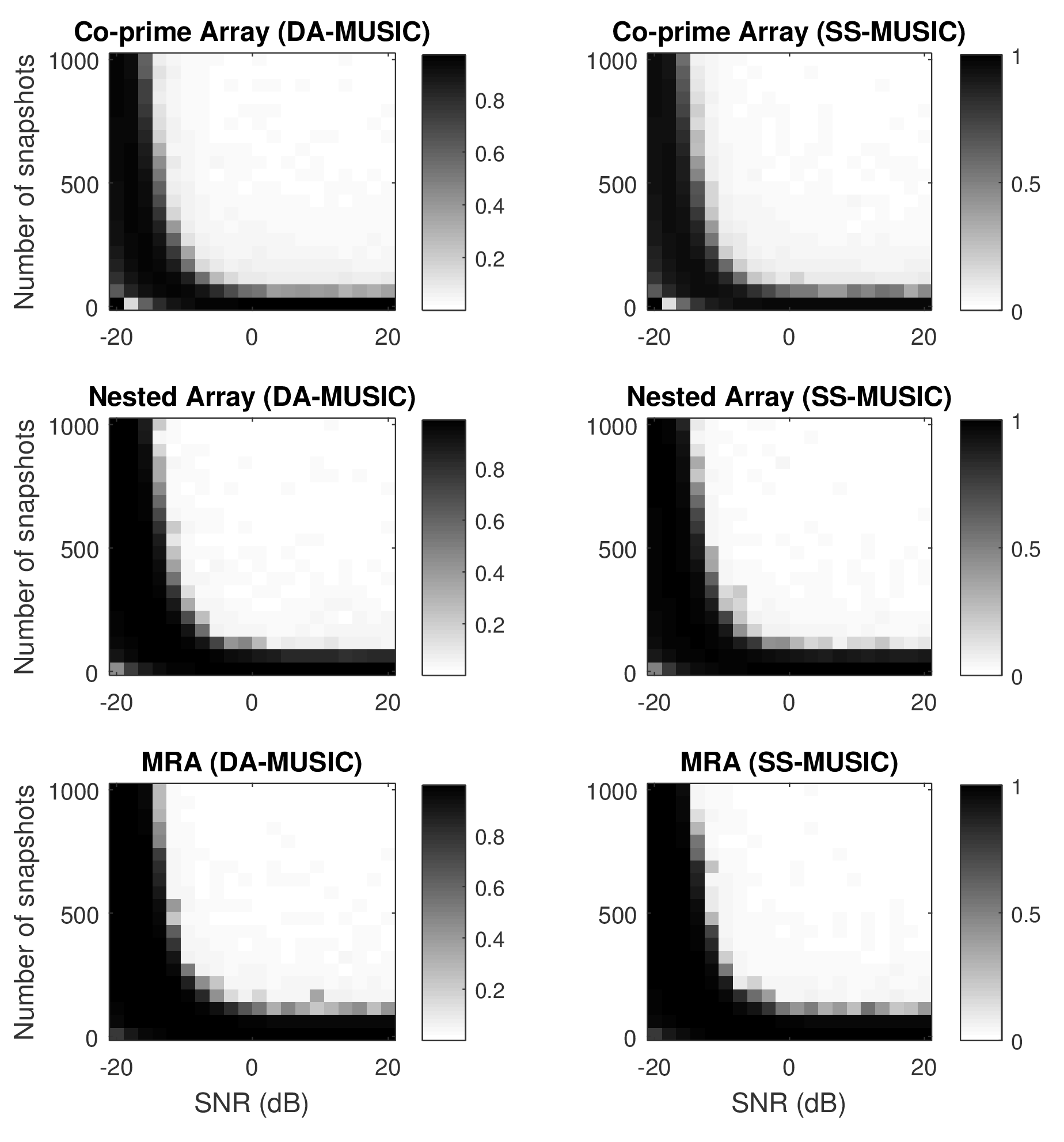}
    \caption{$|\MSEan - \MSEem|/\MSEem$ for different types of arrays under different numbers of snapshots and different SNRs.}
    \label{fig:acc_all}
\end{figure}

Fig.~\ref{fig:acc_all} illustrates the relative errors between $\MSEan$ and $\MSEem$ obtained from 10,000 trials under various scenarios. It can be observed that $\MSEem$ and $\MSEan$ agree very well given enough snapshots and a sufficiently high SNR. It should be noted that at 0dB SNR, \eqref{eq:same-doa-err} is quite accurate when 250 snapshots are available. In addition. there is no significant difference between the relative errors obtained from DA-MUSIC and those from SS-MUSIC. These observations are consistent with our assumptions, and verify Theorem~\ref{thm:same-doa-err} and Theorem~\ref{thm:MSE-MUSIC}. 

We observe that in some of the low SNR regions, $|\MSEan - \MSEem|/\MSEem$ appears to be smaller even if the number of snapshots is limited. In such regions, $\MSEem$ actually ``saturates'', and $\MSEan$ happens to be close to the saturated value. Therefore, this observation does not imply that \eqref{eq:doa-mse} is valid in such regions.

\subsection{Prediction of Resolvability}\label{subsec:res}
One direct application of Theorem~\ref{thm:MSE-MUSIC} is predicting the resolvability of two closely located sources. We consider two sources with equal power, located at $\theta_1 = 30^\circ - \Delta\theta/2$, and $\theta_2 = 30^\circ + \Delta\theta/2$, where $\Delta\theta$ varies from $0.3^\circ$ to $3.0^\circ$. We say the two sources are correctly resolved if the MUSIC algorithm is able to identify two sources, and the two estimated DOAs satisfy $|\hat{\theta}_i - \theta_i| < \Delta\theta/2$, for $i \in \{1,2\}$. The probability of resolution is computed from 500 trials. For all trials, the number of snapshots is fixed at 500, the SNR is set to 0dB, and SS-MUSIC is used.

For illustration purpose, we analytically predict the resolvability of the two sources via the following simple criterion:
\begin{equation}
    \label{eq:res-criterion}
    \epsilon(\theta_1) + \epsilon(\theta_2) 
    \underset{\mathrm{Resolvable}}{\overset{\mathrm{Unresovalble}}{\gtreqless}} \Delta\theta.
\end{equation}
Readers are directed to \cite{liu_statistical_2007} for a more comprehensive criterion.

Fig.~\ref{fig:res} illustrates the resolution performance of the three arrays under different $\Delta\theta$, as well as the thresholds predicted by \eqref{eq:res-criterion}. The MRA shows best resolution performance of the three arrays, which can be explained by the fact that the MRA has the largest aperture. The co-prime array, with the smallest aperture, shows the worst resolution performance. Despite the differences in resolution performance, the probability of resolution of each array drops to nearly zero at the predicted thresholds. This confirms that \eqref{eq:doa-mse} provides a convenient way of predicting the resolvability of two close sources.

\begin{figure}[ht]
    \centering
    \includegraphics[scale=0.6]{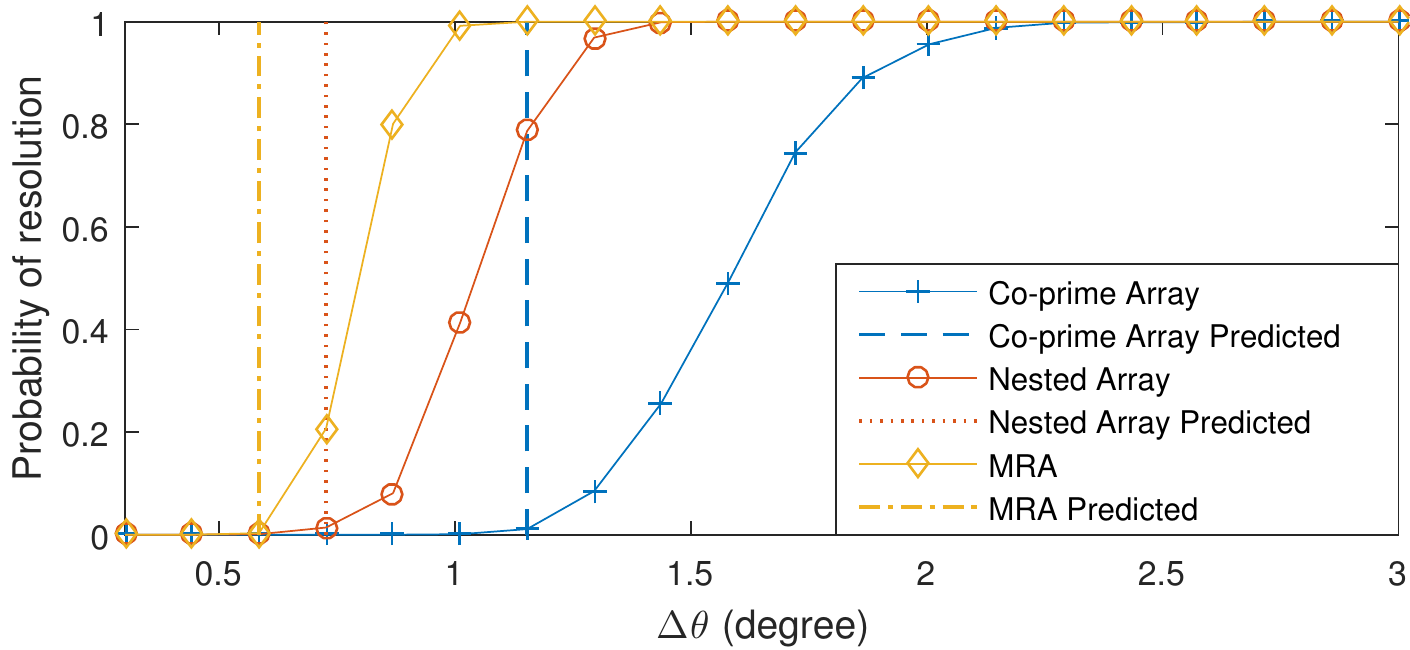}
    \caption{Probability of resolution vs. source separation, obtained from 500 trials. The number of snapshots is fixed at 500, and the SNR is set to 0dB.}
    \label{fig:res}
\end{figure}

\subsection{Asymptotic Efficiency Study}\label{subsec:eff}
In this section, we utilize \eqref{eq:doa-mse} and \eqref{eq:crb-final} to study the asymptotic statistical efficiency of DA-MUSIC and SS-MUSIC under different array geometries and parameter settings. We define their average efficiency as
\begin{equation}
    \label{eq:avg-eff}
    \kappa = \frac{\trace{\CRB_{\boldtheta}}}{\sum_{k=1}^K \epsilon(\theta_k)}.
\end{equation}
For efficient estimators we expect $\kappa = 1$, while for inefficient estimators we expect $0 \leq \kappa < 1$.

We first compare the $\kappa$ value under different SNRs for the three different arrays. We consider three cases: $K=1$, $K=6$, and $K = 12$. The $K$ sources are located at $\{-60^\circ + [120(k-1)/(K-1)]^\circ|k = 1,2,\ldots, K\}$, and all sources have the same power. As shown in Fig.~\subref*{fig:eff-1}, when only one source is present, $\kappa$ increases as the SNR increases for all three arrays. However, none of the arrays leads to efficient DOA estimation. Interestingly, despite being the least efficient geometry in the low SNR region, the co-prime array achieves higher efficiency than the nested array in the high SNR region.
When $K = 6$, we can observe in Fig.~\subref*{fig:eff-6} that $\kappa$ decreases to zero as SNR increases. This rather surprising behavior suggests that both DA-MUSIC and SS-MUSIC are not statistically efficient methods for DOA estimation when the number of sources is greater than one and less than the number of sensors. It is consistent with the implication of Proposition~\ref{prop:crb-snr-infty} when $K < M$.
When $K = 12$, the number of sources exceeds the number of sensors. We can observe in Fig.~\ref{fig:eff-12} that $\kappa$ also decreases as SNR increases. However, unlike the case when $K = 6$, $\kappa$ converges to a positive value instead of zero.

The above observations imply that DA-MUSIC and SS-MUSIC achieve higher degrees of freedom at the cost of decreased statistical efficiency. When statistical efficiency is concerned and the number of sources is less than the number of sensors, one might consider applying MUSIC directly to the original sample covariance $\boldR$ defined in \eqref{eq:cov-baisc} \cite{vaidyanathan_direct-music_2012}.

\begin{figure}[ht]
    \centering
    \subfloat[]{%
        \includegraphics[scale=0.54]{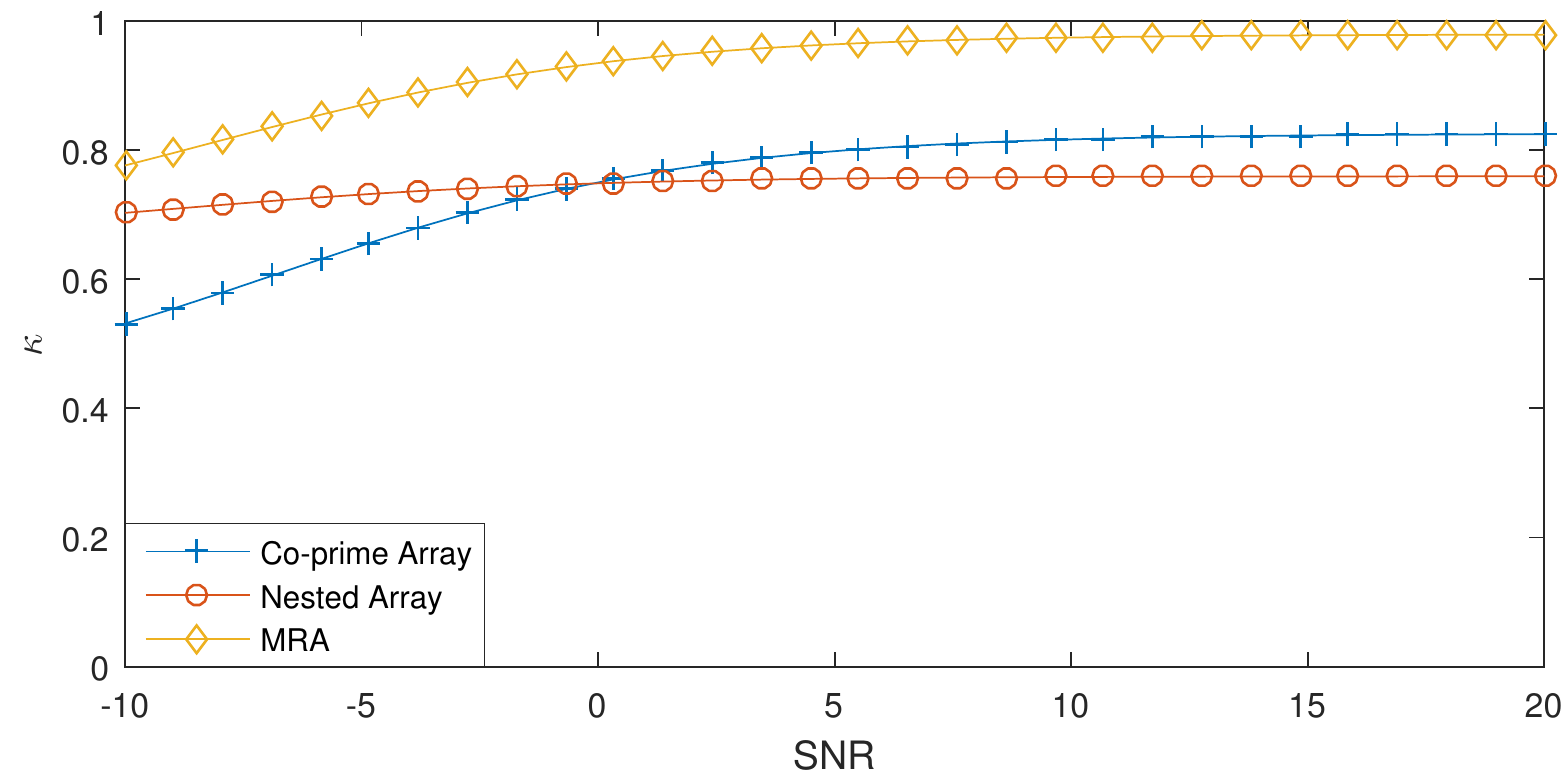}
        \label{fig:eff-1}
    }

    \subfloat[]{%
        \includegraphics[scale=0.54]{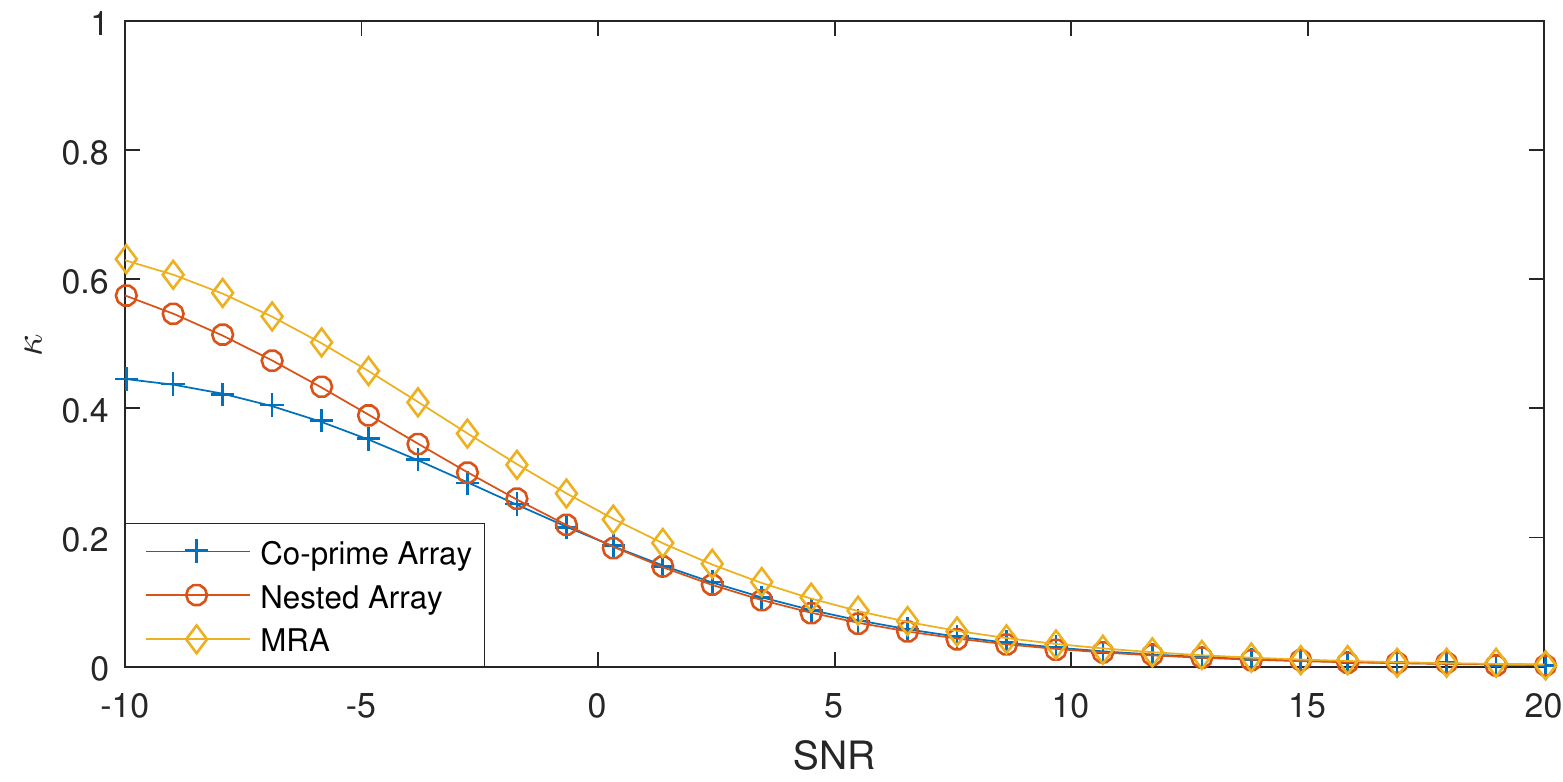}
        \label{fig:eff-6}
    }

    \subfloat[]{%
        \includegraphics[scale=0.54]{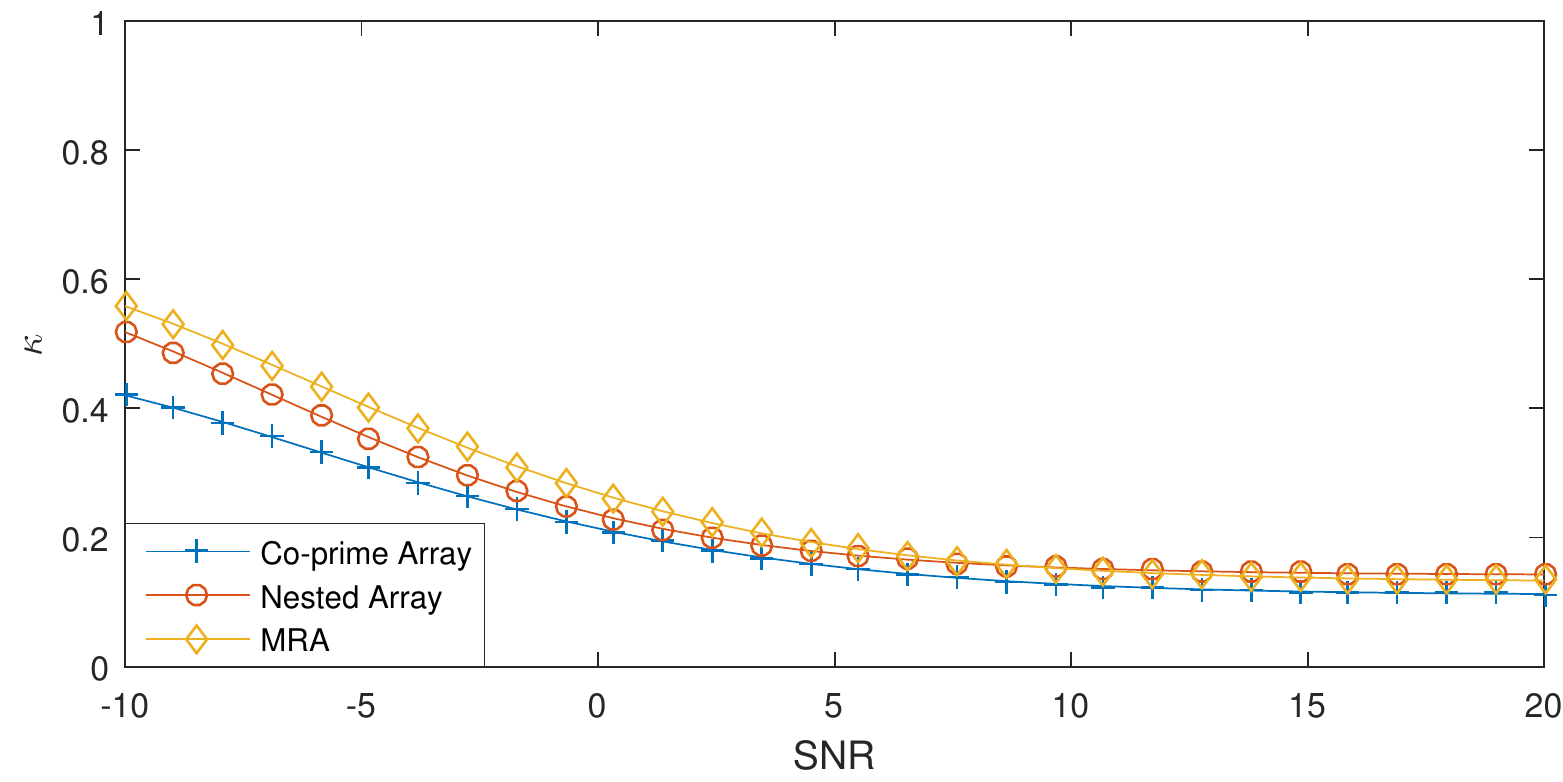}
        \label{fig:eff-12}
    }
    \caption{Average efficiency vs. SNR: (a) $K=1$, (b) $K=6$, (c) $K=12$.}    
\end{figure}

Next, we then analyze how $\kappa$ is affected by angular separation. Two sources located at $-\Delta\theta$ and $\Delta\theta$ are considered. We compute the $\kappa$ values under different choices of $\Delta\theta$ for all three arrays. For reference, we also include the empirical results obtained from 1000 trials. To satisfy the asymptotic assumption, the number of snapshots is fixed at 1000 for each trial.
As shown in Fig.~\subref*{fig:eff-sep-mra}--\subref*{fig:eff-sep-coprime}, the overall statistical efficiency decreases as the SNR increases from 0dB to 10dB for all three arrays, which is consistent with our previous observation in Fig.~\subref*{fig:eff-6}. We can also observe that the relationship between $\kappa$ and the normalized angular separation $\Delta\theta/\pi$ is rather complex, as opposed to the traditional MUSIC algorithm (c.f. \cite{stoica_music_1989}). The statistical efficiency of DA-MUSIC and SS-MUSIC is highly dependent on array geometry and angular separation.

\begin{figure}[ht]
    \centering
    \subfloat[]{%
        \includegraphics[scale=0.55]{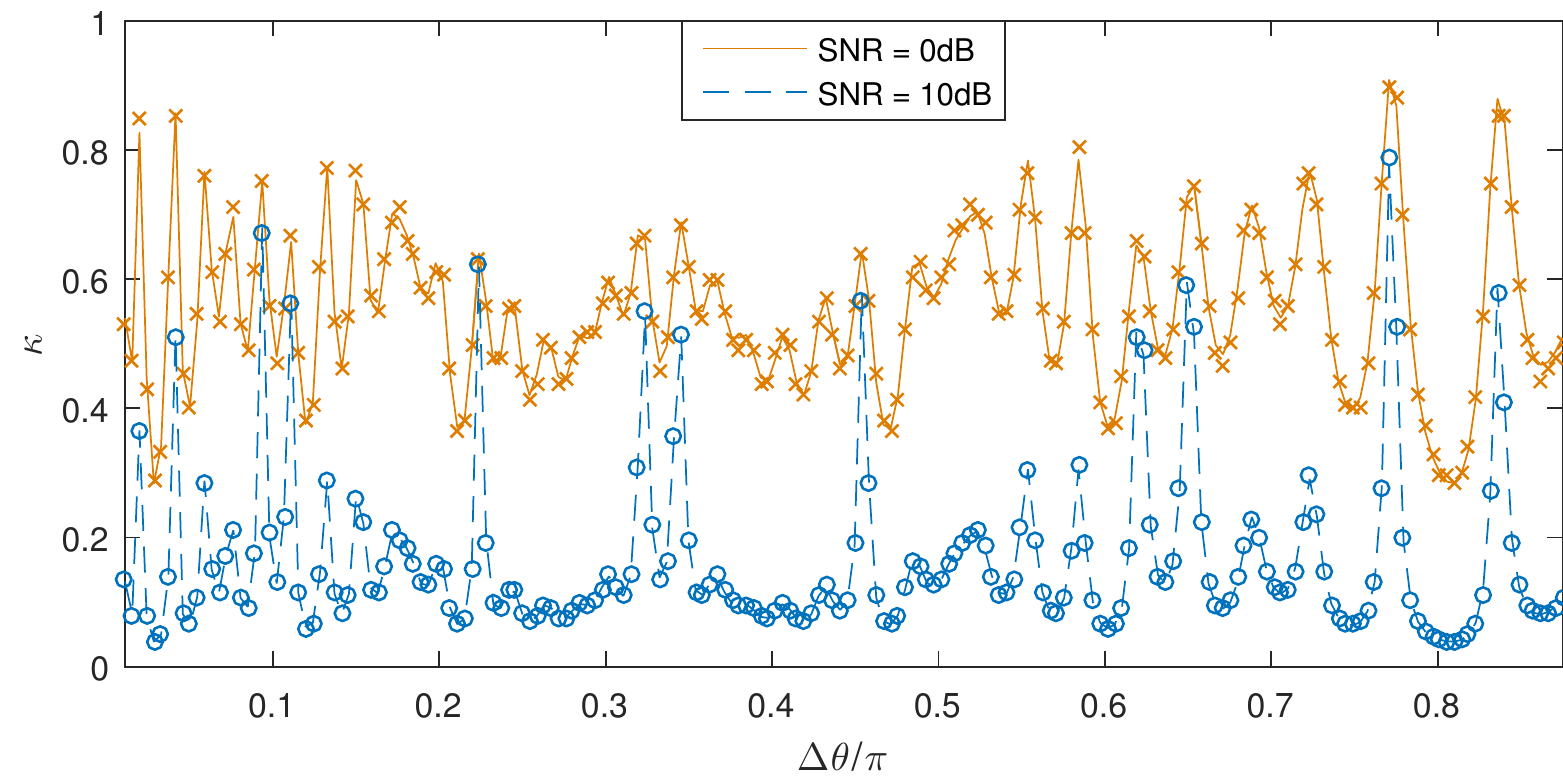}
        \label{fig:eff-sep-mra}
    }

    \subfloat[]{%
        \includegraphics[scale=0.55]{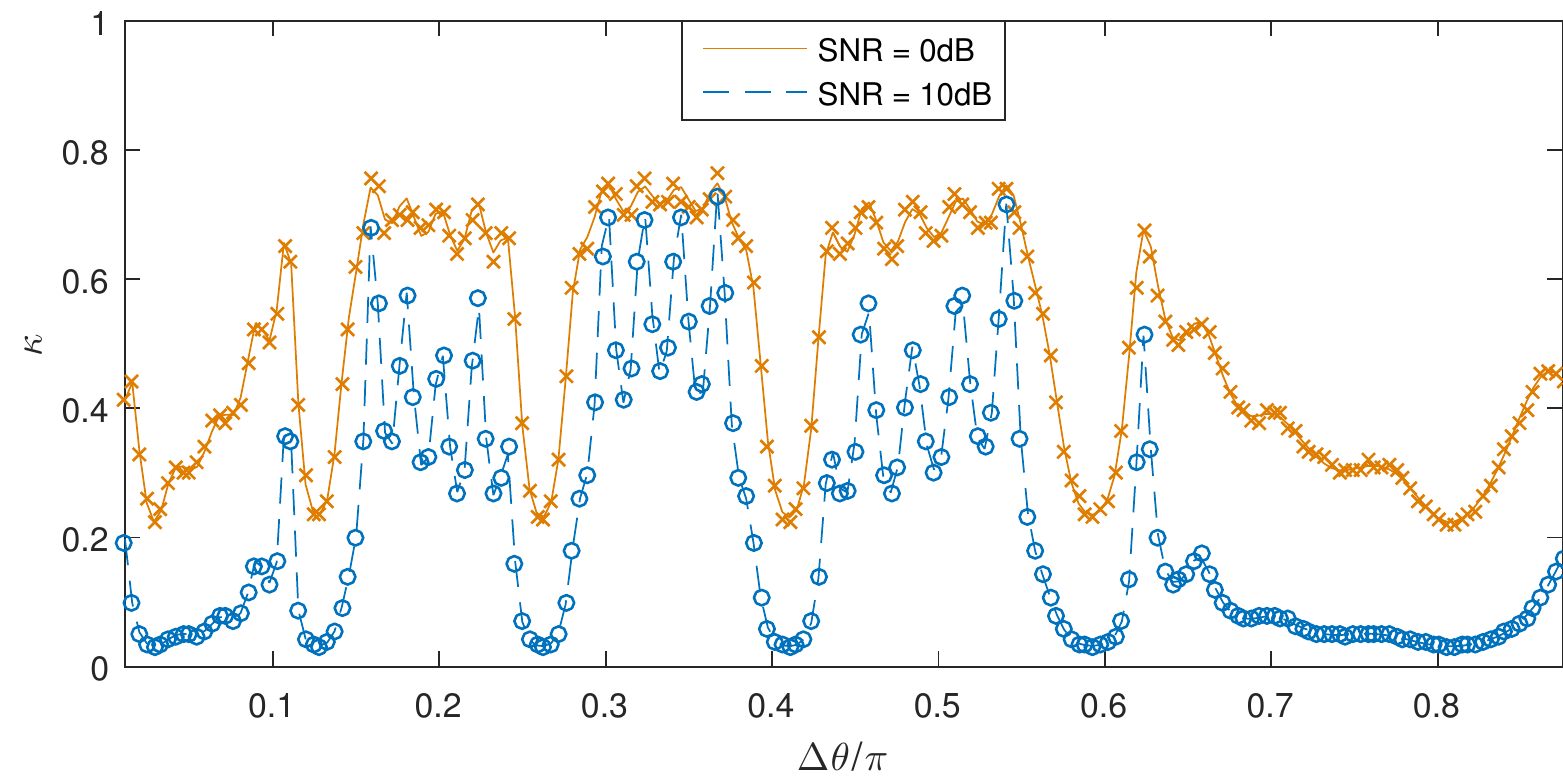}
        \label{fig:eff-sep-nested}
    }

    \subfloat[]{%
        \includegraphics[scale=0.55]{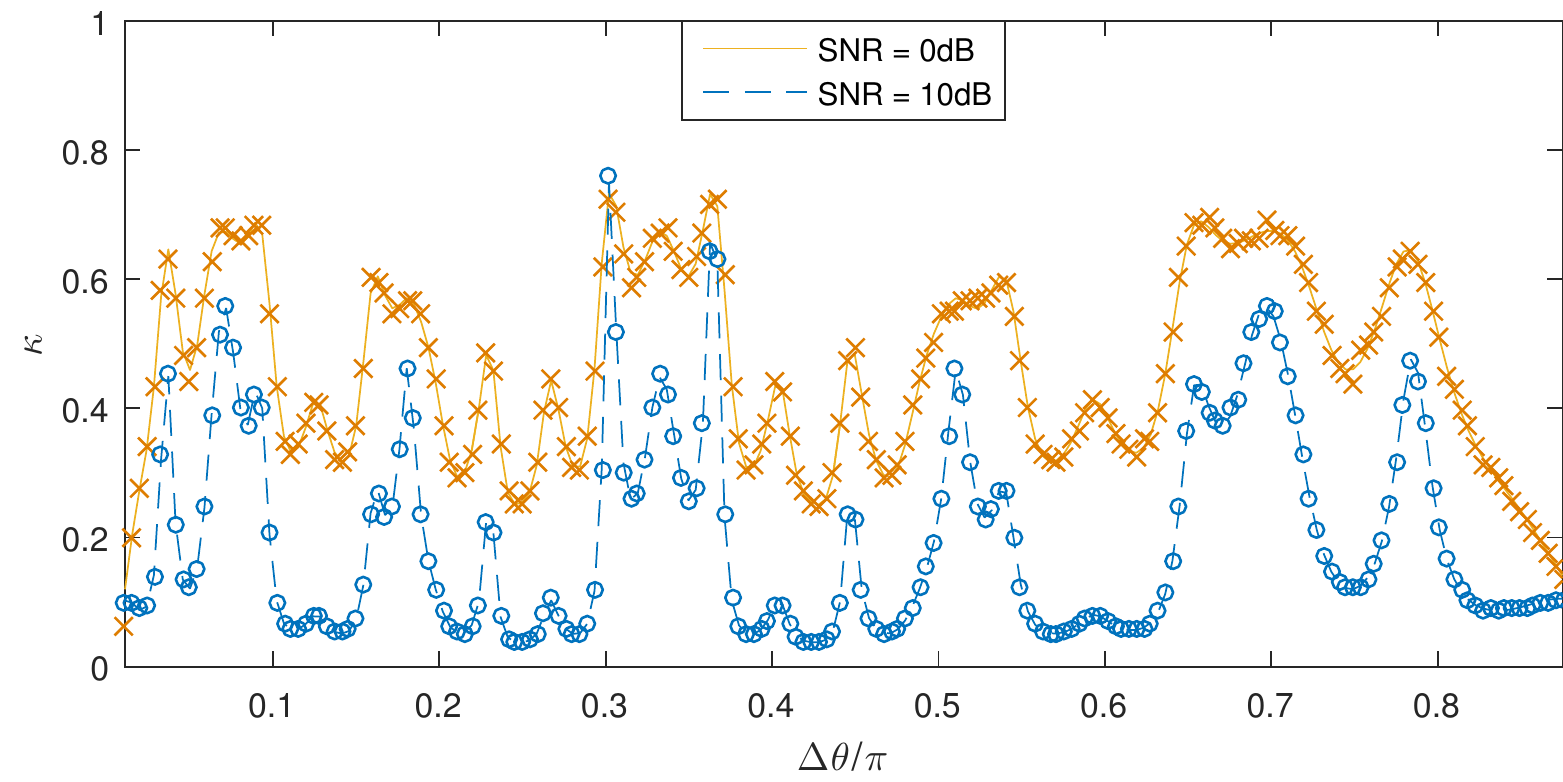}
        \label{fig:eff-sep-coprime}
    }
    \caption{Average efficiency vs. angular separation for the co-prime array: (a) MRA, (b) nested array, (c) co-prime array. The solid lines and dashed lines are analytical values obtained from \eqref{eq:avg-eff}. The circles and crosses are emprical results averaged from 1000 trials.}
\end{figure}

\subsection{MSE vs. Number of Sensors}\label{subsec:mse-n-sensor}
In this section, we investigate how the number of sensors affect the asymptotic MSE, $\epsilon(\theta_k)$. We consider three types of sparse linear arrays: co-prime arrays, nested arrays, and MRAs. In this experiment, the co-prime arrays are generated by co-prime pairs $(q, q+1)$ for $q = 2,3,\ldots, 12$. The nested arrays are generated by parameter pairs $(q+1, q)$ for $q = 2,3,\ldots,12$. The MRAs are constructed according to \cite{ishiguro_minimum_1980}. We consider two cases: the one source case where $K = 1$, and the under determined case where $K = M$. For the former case, we placed the only source at the $0^\circ$. For the later case, we placed the sources uniformly between $-60^\circ$ and $60^\circ$. We set $\SNR = 0\dB$ and $N = 1000$. The empirical MSEs were obtained from 500 trials. SS-MUSIC was used in all the trials.

\begin{figure}[h]
    \centering
    \subfloat[]{%
        \includegraphics[scale=0.56]{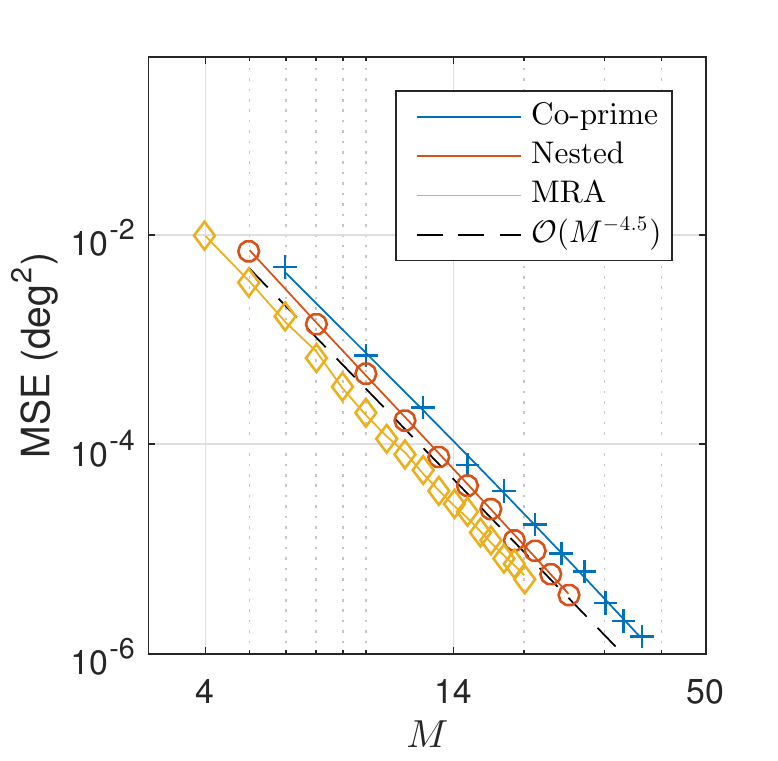}
        \label{fig:mse-n-sensor-k1}
    }
    \subfloat[]{%
        \includegraphics[scale=0.56]{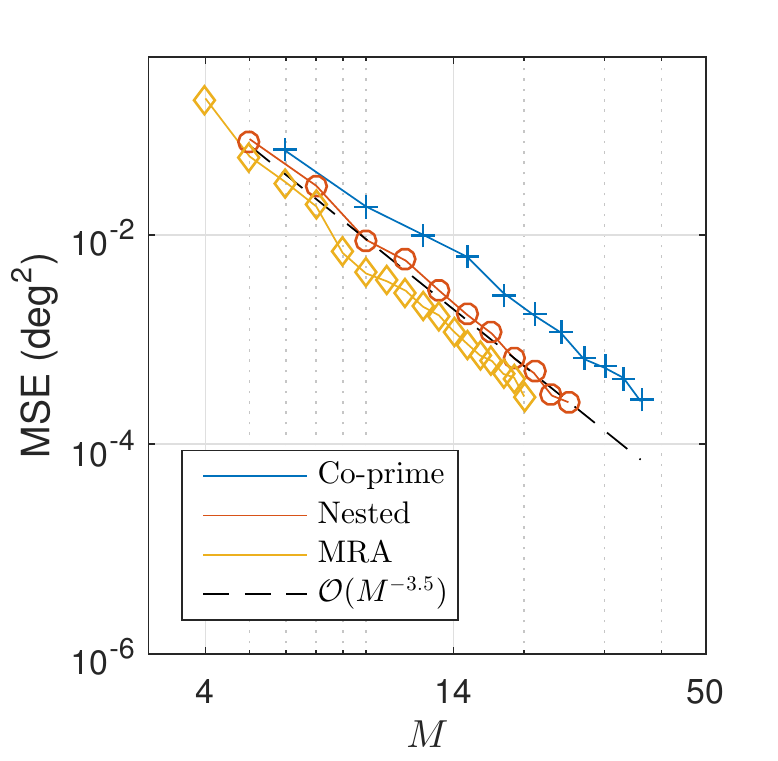}
        \label{fig:mse-n-sensor-km}
    }
    \caption{MSE vs. $M$: (a) $K=1$, (b) $K=M$. The solid lines are analytical results. The ``$+$'', ``$\circ$'', and ``$\diamond$'' denote empirical results obtains from 500 trials. The dashed lines are trend lines used for comparison.}
\end{figure}

In Fig.~\subref*{fig:mse-n-sensor-k1}, we observe that when $K = 1$, the MSE decreases at a rate of approximately $\scriptO(M^{-4.5})$ for all three arrays. In Fig.~\subref*{fig:mse-n-sensor-km}, we observe that when $K = M$, the MSE only decreases at a rate of approximately $\scriptO(M^{-3.5})$. In both cases, the MRAs and the nested arrays achieve lower MSE than the co-prime arrays. Another interesting observation is that for all three arrays, the MSE decreases faster than $\scriptO(M^{-3})$. Recall that for a $M$-sensor ULA, the asymptotic MSE of traditional MUSIC decreases at a rate of $\scriptO(M^{-3})$ as $M \to \infty$ \cite{stoica_music_1989}. This observation suggests that given the same number of sensors, these sparse linear arrays can achieve higher estimation accuracy than ULAs when the number of sensors is large.

\section{Conclusion}
In this paper, we reviewed the coarray signal model and derived the asymptotic MSE expression for two coarray-based MUSIC algorithms, namely DA-MUSIC and SS-MUSIC. We theoretically proved that the two MUSIC algorithms share the same asymptotic MSE error expression. Our analytical MSE expression is more revealing and can be applied to various types of sparse linear arrays, such as co-prime arrays, nested arrays, and MRAs. In addition, our MSE expression is also valid when the number of sources exceeds the number of sensors. We also derived the CRB for sparse linear arrays, and analyzed the statistically efficiency of typical sparse linear arrays. Our results will benefit to future research on performance analysis and optimal design of sparse linear arrays. Throughout our derivations, we assume the array is perfectly calibrated. In the future, it will be interesting to extend the results in this paper to cases when model errors are present.
Additionally, we will further investigate how the number of sensors affect the MSE and the CRB for sparse linear arrays, as well as the possibility of deriving closed form expressions in the case of large number of sensors.

\appendices
\section{Definition and Properties of the coarray selection matrix}
\label{app:f-def}
According to \eqref{eq:cov-baisc},
\begin{equation*}
    R_{mn} = \sum_{k=1}^K p_k \exp[
        j(\bar{d}_m - \bar{d}_n)\phi_k
    ] + \delta_{mn}\noisevar,
\end{equation*}
where $\delta_{mn}$ denotes Kronecker's delta. This equation implies that the $(m,n)$-th element of $\boldR$ is associated with the difference $(\bar{d}_m - \bar{d}_n)$. To capture this property, we introduce the difference matrix $\boldDelta$ such that $\Delta_{mn} = \bar{d}_m - \bar{d}_n$. We also define the weight function $\omega(n): \doubleZ \mapsto \doubleZ$ as (see \cite{pal_nested_2010} for details)
\begin{equation*}
    \omega(l) = |\{(m,n)|\Delta_{mn} = l\} |,
\end{equation*}
where $|\scriptA|$ denotes the cardinality of the set $\scriptA$. Intuitively, $\omega(l)$ counts the number of all possible pairs of $(\bar{d}_m, \bar{d}_n)$ such that $\bar{d}_m - \bar{d}_n = l$. Clearly, $\omega(l) = \omega(-l)$.

\begin{definition}
    \label{def:F}
    The coarray selection matrix $\boldF$ is a $(2\Mv-1) \times M^2$ matrix satisfying
    \begin{equation}
        \label{eq:f-def}
        F_{m,p + (q-1)M} = \begin{cases}
            \frac{1}{\omega(m - \Mv)} &, 
            \Delta_{pq} = m - \Mv,
            \\
            0 &, \mathrm{otherwise},
        \end{cases}
    \end{equation}
    for $m = 1,2,\ldots,2\Mv-1, p = 1,2,\ldots,M, q=1,2,\ldots,M$.
\end{definition}

To better illustrate the construction of $\boldF$, we consider a toy array whose sensor locations are given by $\{0, d_0, 4d_0\}$. The corresponding difference matrix of this array is
\begin{equation*}
    \boldDelta = \begin{bmatrix}
        0  & -1 & -4 \\
        1  &  0 & -3 \\
        4  &  3 &  0
    \end{bmatrix}.
\end{equation*}
The ULA part of the difference coarray consists of three sensors located at $-d_0$, $0$, and $d_0$. The weight function satisfies $\omega(-1) = \omega(1) = 1$, and $\omega(0) = 3$, so $\Mv = 2$. We can write the coarray selection matrix as
\begin{equation*}
    \boldF = \begin{bmatrix}
        0 & 0 & 0 & 
        1 & 0 & 0 & 
        0 & 0 & 0  \\
        \frac{1}{3} & 0 & 0 &
        0 & \frac{1}{3} & 0 &
        0 & 0 & \frac{1}{3}  \\
        0 & 1 & 0 &
        0 & 0 & 0 & 
        0 & 0 & 0 
    \end{bmatrix}.
\end{equation*}
If we pre-multiply the vectorized sample covariance matrix $\boldr$ by $\boldF$, we obtain the observation vector of the virtual ULA (defined in \eqref{eq:coarray-ula-model}): 
\begin{equation*}
    \boldz = \begin{bmatrix}
        z_1 \\
        z_2 \\
        z_3 \\
    \end{bmatrix} 
    =
    \begin{bmatrix}
        R_{12} \\
        \frac{1}{3}(R_{11} + R_{22} + R_{33}) \\
        R_{21}
    \end{bmatrix}.
\end{equation*}
It can be seen that $z_m$ is obtained by averaging all the elements in $\boldR$ that correspond to the difference $m - \Mv$, for $m = 1,2,\ldots,2\Mv-1$.

Based on Definition~\ref{def:F}, we now derive several useful properties of $\boldF$.
\begin{lemma}
    \label{lem:f-special-sym}
    $F_{m,p+(q-1)M} = F_{2\Mv-m, q+(p-1)M}$ for $m = 1,2,\ldots,2\Mv-1, p = 1,2,\ldots,M, q=1,2,\ldots,M$.
\end{lemma}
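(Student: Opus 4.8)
The plan is to verify this identity directly from Definition~\ref{def:F}, treating it as a purely combinatorial bookkeeping statement about the entries of $\boldF$. The only substantive facts I would need are two elementary symmetries already available: the antisymmetry of the difference matrix, $\Delta_{qp} = -\Delta_{pq}$ (immediate from $\Delta_{mn} = \bar{d}_m - \bar{d}_n$), and the evenness of the weight function, $\omega(l) = \omega(-l)$, which was noted just above the definition. The strategy is to show that the left-hand entry and the right-hand entry are simultaneously zero (or simultaneously nonzero), and that when nonzero they carry the same value.

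First I would decode the index transformation on the right-hand side. The column index $q + (p-1)M$ has exactly the form $p' + (q'-1)M$ with $p' = q$ and $q' = p$, so passing from the left entry to the right entry swaps the roles of $p$ and $q$, while the row index changes from $m$ to $m' = 2\Mv - m$. By Definition~\ref{def:F}, the entry $F_{m',\,q + (p-1)M}$ is nonzero precisely when $\Delta_{qp} = m' - \Mv = (2\Mv - m) - \Mv = \Mv - m$. Next I would compare this activation condition with that of the left entry: using $\Delta_{qp} = -\Delta_{pq}$ and $\Mv - m = -(m - \Mv)$, the condition $\Delta_{qp} = \Mv - m$ is equivalent to $\Delta_{pq} = m - \Mv$, which is exactly the condition under which $F_{m,\,p + (q-1)M}$ is nonzero. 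Hence the two entries vanish or survive together. Finally, in the surviving case, the left value is $1/\omega(m - \Mv)$ and the right value is $1/\omega(m' - \Mv) = 1/\omega(-(m-\Mv))$, and these agree by $\omega(l) = \omega(-l)$, closing the argument.

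I do not anticipate a genuine obstacle here, since the claim reduces to a short case check; the one place demanding care is the index arithmetic in the column offset $p + (q-1)M$ under the swap $p \leftrightarrow q$, where an off-by-one slip would silently misalign the two entries. I would therefore state the decoding $p' = q$, $q' = p$, $m' = 2\Mv - m$ explicitly before invoking the definition, so that the symmetric role played by the pair $(p,q)$ and by the reflection $m \mapsto 2\Mv - m$ is transparent.
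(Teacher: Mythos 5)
Your proposal is correct and follows essentially the same route as the paper's own proof: both reduce the claim to the antisymmetry $\Delta_{qp} = -\Delta_{pq}$ (showing the two entries are simultaneously zero or nonzero) and the evenness $\omega(l) = \omega(-l)$ (showing the nonzero values agree). The explicit decoding of the column index under the swap $p \leftrightarrow q$ is a minor presentational addition but the substance is identical.
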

\begin{proof}
If $F_{m,p+(q-1)M} = 0$, then $\Delta_{pq} \neq m - \Mv$. Because $\Delta_{qp} = -\Delta_{pq}$, $\Delta_{qp} \neq -(m - \Mv)$. Hence $(2\Mv-m)-\Mv = -(m - \Mv) \neq \Delta_{qp}$, which implies that $F_{2\Mv-m, q+(p-1)M}$ is also zero.

If $F_{m,p+(q-1)M} \neq 0$, then $\Delta_{pq} = m - \Mv$ and $F_{m,p+(q-1)M} = 1/\omega(m - \Mv)$. Note that $(2\Mv-m)-\Mv = -(m - \Mv) = -\Delta_{pq} = \Delta_{qp}$. We thus have $F_{2\Mv-m, q+(p-1)M} = 1/\omega(-(m - \Mv)) = 1/\omega(m - \Mv) = F_{m,p+(q-1)M}$.
\end{proof}

\begin{lemma}
    \label{lem:fr-conj-sym}
    Let $\boldR \in \doubleC^M$ be Hermitian symmetric. Then $\boldz = \boldF \vecm(\boldR)$ is conjugate symmetric.
\end{lemma}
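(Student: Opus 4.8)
The plan is to verify the definition of conjugate symmetry directly, namely to show that $z_{2\Mv-m} = z_m^*$ for every $m = 1,2,\ldots,2\Mv-1$, where $z_m$ denotes the $m$-th entry of $\boldz = \boldF\vecm(\boldR)$; by the definition given in the notation, this is precisely the condition $\boldT_{2\Mv-1}\boldz = \boldz^*$. First I would expand each entry as a double sum over the entries of $\boldR$. Since $\vecm(\boldR)$ stacks the columns of $\boldR$, its $(p+(q-1)M)$-th component is $R_{pq}$, so
\begin{equation*}
    z_m = \sum_{p=1}^M \sum_{q=1}^M F_{m,p+(q-1)M}\, R_{pq}.
\end{equation*}

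Next I would write the analogous expression for $z_{2\Mv-m}$ and bring it into contact with $z_m$ using the symmetry established in Lemma~\ref{lem:f-special-sym}. Applying that identity in the form $F_{2\Mv-m,\,p+(q-1)M} = F_{m,\,q+(p-1)M}$, and then relabeling the summation indices via $p \leftrightarrow q$, I would obtain
\begin{equation*}
    z_{2\Mv-m} = \sum_{p=1}^M \sum_{q=1}^M F_{m,p+(q-1)M}\, R_{qp}.
\end{equation*}
The key observation feeding into the final step is that the entries of $\boldF$ are real (each is either $0$ or $1/\omega(\cdot)$), so conjugation leaves them unchanged.

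Finally I would invoke the Hermitian symmetry of $\boldR$, which gives $R_{qp} = R_{pq}^*$. Substituting this in and pulling the conjugation outside the real-coefficient sum yields
\begin{equation*}
    z_{2\Mv-m} = \sum_{p=1}^M \sum_{q=1}^M F_{m,p+(q-1)M}\, R_{pq}^*
    = \Bigg( \sum_{p=1}^M \sum_{q=1}^M F_{m,p+(q-1)M}\, R_{pq} \Bigg)^* = z_m^*,
\end{equation*}
which is exactly the conjugate-symmetry condition and completes the argument.

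The step I expect to require the most care is the index bookkeeping: correctly identifying the component of $\vecm(\boldR)$ associated with $R_{pq}$, and cleanly executing the relabeling $p\leftrightarrow q$ so that Lemma~\ref{lem:f-special-sym} lines up the two sums term by term. Everything else is a routine combination of the realness of $\boldF$ and the Hermitian property of $\boldR$.
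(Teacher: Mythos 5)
Your proof is correct and is essentially the paper's own argument: both expand the entries of $\boldz$ as a double sum, apply Lemma~\ref{lem:f-special-sym} with an index relabeling $p \leftrightarrow q$, and then invoke $R_{qp} = R_{pq}^*$ together with the realness of $\boldF$. The only cosmetic difference is that you start from $z_{2\Mv-m}$ and work toward $z_m^*$, whereas the paper runs the same chain of equalities starting from $z_m$.
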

\begin{proof}
By Lemma~\ref{lem:f-special-sym} and $\boldR = \boldR^H$,
\begin{equation*}
    \begin{split}
        z_m &= \sum_{p=1}^{M} \sum_{q=1}^{M} F_{m, p+(q-1)M} R_{pq} \\
            &= \sum_{q=1}^{M} \sum_{p=1}^{M} F_{2\Mv - m, q+(p-1)M} R_{qp}^* \\
            &= z_{2\Mv - m}^*.
    \end{split}
\end{equation*}
\end{proof}

\begin{lemma}
    \label{lem:ftz-hermitian}
    Let $\boldz \in \doubleC^{2\Mv-1}$ be conjugate symmetric. Then $\matm_{M,M}(\boldF^T \boldz)$ is Hermitian symmetric.
\end{lemma}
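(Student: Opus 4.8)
The plan is to mirror the entrywise argument used for Lemma~\ref{lem:fr-conj-sym}, essentially reading that proof backwards. First I would translate the hypothesis that $\boldz$ is conjugate symmetric into the scalar form $z_m = z_{2\Mv-m}^*$, which is just $\boldT_{2\Mv-1}\boldz = \boldz^*$ written componentwise. Then I would express the target matrix $\boldW = \matm_{M,M}(\boldF^T\boldz)$ entrywise. Since $\matm_{M,M}$ is the inverse of the column-stacking operation $\vecm$, the $(p,q)$-entry of $\boldW$ is precisely the $(p+(q-1)M)$-th component of $\boldF^T\boldz$, so that
\begin{equation*}
    W_{pq} = \sum_{m=1}^{2\Mv-1} F_{m,\,p+(q-1)M}\, z_m .
\end{equation*}
The goal is then to establish $W_{pq} = W_{qp}^*$ for all $p,q$.

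Next I would invoke the index symmetry of $\boldF$ from Lemma~\ref{lem:f-special-sym}, namely $F_{m,\,p+(q-1)M} = F_{2\Mv-m,\,q+(p-1)M}$, which swaps the roles of $p$ and $q$ in the column index at the cost of reflecting the row index $m \mapsto 2\Mv-m$. Substituting this into the sum and then re-indexing by $m \mapsto 2\Mv-m$ (a bijection of $\{1,\ldots,2\Mv-1\}$ onto itself) lets me apply the hypothesis $z_{2\Mv-m} = z_m^*$, turning the expression into $\sum_m F_{m,\,q+(p-1)M}\, z_m^*$. The concluding step uses the fact that every entry of $\boldF$ is real, being either $0$ or $1/\omega(\cdot)$, so the real factor can be pulled inside the conjugation to give
\begin{equation*}
    W_{pq} = \Bigl(\sum_{m=1}^{2\Mv-1} F_{m,\,q+(p-1)M}\, z_m\Bigr)^* = W_{qp}^*,
\end{equation*}
which is exactly the Hermitian symmetry of $\boldW$.

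The only delicate point is bookkeeping rather than mathematics. I would have to be careful with two conventions that are easy to confuse: the linear index $p+(q-1)M$ that $\vecm$ and $\matm_{M,M}$ attach to the $(p,q)$-entry, where columns are stacked so that $p$ is the fast index, and the simultaneous swap of $p,q$ together with the reflection $m\mapsto 2\Mv-m$ dictated by Lemma~\ref{lem:f-special-sym}. Verifying that these two re-indexings compose consistently, and confirming that $m\mapsto 2\Mv-m$ is a genuine permutation of the summation range, is the heart of the argument; the remaining ingredients, namely the reality of the entries of $\boldF$ and the componentwise conjugate symmetry of $\boldz$, then plug in mechanically. No result beyond Lemma~\ref{lem:f-special-sym} and Definition~\ref{def:F} is required.
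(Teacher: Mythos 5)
Your proposal is correct and follows essentially the same route as the paper's own proof: write $W_{pq}$ entrywise, apply the index symmetry $F_{m,\,p+(q-1)M}=F_{2\Mv-m,\,q+(p-1)M}$ from Lemma~\ref{lem:f-special-sym} together with $z_m=z_{2\Mv-m}^*$, re-index over the bijection $m\mapsto 2\Mv-m$, and pull the conjugation out using the realness of $\boldF$. The bookkeeping points you flag (the $p+(q-1)M$ linear-index convention and the permutation of the summation range) are exactly the ones the paper's proof relies on implicitly, and they check out.
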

\begin{proof}
Let $\boldH = \matm_{M,M}(\boldF^T \boldz)$. Then
\begin{equation}
    H_{pq} = \sum_{m=1}^{2\Mv-1} z_m F_{m,p+(q-1)M}.
\end{equation}
We know that $\boldz$ is conjugate symmetric, so $z_{m} = z_{2\Mv-m}^*$. Therefore, by Lemma~\ref{lem:f-special-sym}
\begin{equation}
    \begin{split}
            H_{pq} &= \sum_{m=1}^{2\Mv-1} z_{2\Mv-m}^* F_{2\Mv-m,q+(p-1)M} \\
            &= \Bigg[\sum_{m'=1}^{2\Mv-1} z_{m'} F_{m', q+(p-1)M}\Bigg]^* \\
            &= H_{qp}^*.
    \end{split}
\end{equation}
\end{proof}

\section{Proof of Theorem~\ref{thm:same-doa-err}}
\label{app:thm-err-expression}
We first derive the first-order expression of DA-MUSIC. Denote the eigendecomposition of $\Rvone$ by
\begin{equation*}
    \Rvone = \Es \boldLambda_\mathrm{s1} \Es^H +
        \En \boldLambda_\mathrm{n1} \En^H,
\end{equation*}
where $\En$ and $\Es$ are eigenvectors of the signal subspace and noise subspace, respectively, and $\Lsone, \Lnone$ are the corresponding eigenvalues. Specifically, we have $\boldLambda_\mathrm{n1} = \noisevar\boldI$.

Let $\Rvonet = \Rvone + \dRvone$, $\Enonet = \En + \dEnone$, and $\Lnonet = \Lnone + \dLnone$ be the perturbed versions of $\Rvone$, $\En$, and $\Lnone$. The following equality holds:
\begin{equation*}
    (\Rvone + \dRvone)(\En + \dEnone) = (\En + \dEnone)(\Lnone + \dLnone).
\end{equation*}
If the perturbation is small, we can omit high-order terms and obtain~\cite{swindlehurst_performance_1992,li_performance_1993,stewart_error_1973}
\begin{equation}
    \label{eq:AvH-delta-En-1}
    \Av^H \dEnone \doteq -\boldP^{-1} \Av^\dagger \dRvone \En.
\end{equation}
Because $\boldP$ is diagonal, for a specific $\theta_k$, we have
\begin{equation}
    \label{eq:ah-en-expression}
    \bolda^H(\theta_k) \dEnone \doteq
    -p_k^{-1} \bolde_k^T \Av^\dagger \dRvone \En,
\end{equation}
where $\bolde_k$ is the $k$-th column of the identity matrix $\boldI_{K\times K}$. Based on the conclusion in Appendix B of \cite{stoica_music_1989}, under sufficiently small perturbations, the error expression of DA-MUSIC for the $k$-th DOA is given by
\begin{equation}
    \begin{aligned}
        \label{eq:doa-err-expression-stoica}
        \hat{\theta}_k^{(1)} - \theta_k
        \doteq -\frac{\Real[\avkH \dEnone \En^H \Davk)]}{\DavkH \En \En^H \Davk},
    \end{aligned}
\end{equation}
where $\Davk = \partial\avk / \partial\theta_k$.

Substituting \eqref{eq:ah-en-expression} into \eqref{eq:doa-err-expression-stoica} gives
\begin{equation}
    \label{eq:doa-err-expression}
    \hat{\theta}_k^{(1)} - \theta_k
    \doteq -\frac{
        \Real[\bolde_k^T \Av^\dagger \dRvone \En \En^H \Davk]
    }{
        p_k \DavkH \En \En^H \Davk
    }.
\end{equation}
Because $\vecm(\boldA\boldX\boldB) = (\boldB^T \otimes \boldA) \vecm(\boldX)$ and $\En \En^H = \projp{\Av}$, we can use the notations introduced in \eqref{eq:alpha-k-def}--\eqref{eq:gamma-k-def} to express \eqref{eq:doa-err-expression} as
\begin{equation}
    \label{eq:doa-err-simple-1}
    \hat{\theta}_k^{(1)} - \theta_k
    \doteq
    -(\gamma_k p_k)^{-1} \Real[(\boldbeta_k \otimes \boldalpha_k)^T \drvone],
\end{equation}
where $\drvone = \vecm(\dRvone)$.

Note that $\Rvonet$ is constructed from $\tilde{\boldR}$. It follows that $\dRvone$ actually depends on $\Delta\boldR$, which is the perturbation part of the covariance matrix $\boldR$. By the definition of $\Rvone$, 
\begin{equation*}
    \drvone = \vecm(\begin{bmatrix}
        \boldGamma_{\Mv}\Delta\boldz &
        \cdots &
        \boldGamma_2\Delta\boldz &
        \boldGamma_1\Delta\boldz 
    \end{bmatrix})
    = \boldGamma\boldF\Delta\boldr,
\end{equation*}
where $\boldGamma = [\boldGamma_{\Mv}^T\,\boldGamma_{\Mv-1}^T\,\cdots\boldGamma_1^T]^T$ and $\Delta\boldr = \vecm(\Delta\boldR)$.

Let $\boldxi_k = \boldF^T \boldGamma^T (\boldbeta_k \otimes \boldalpha_k)$. We can now express \eqref{eq:doa-err-simple-1} in terms of $\Delta\boldr$ as
\begin{equation}
    \label{eq:doa-err-simple-2}
    \hat{\theta}_k^{(1)} - \theta_k
    \doteq -(\gamma_k p_k)^{-1} \Real(\boldxi_k^T \Delta \boldr),
\end{equation}
which completes the first part of the proof.

We next consider the first-order error expression of SS-MUSIC. From \eqref{eq:rv1-rv2-relation} we know that $\Rvtwo$ shares the same eigenvectors as $\Rvone$. Hence the eigendecomposition of $\Rvtwo$ can be expressed by
\begin{equation*}
    \Rvtwo = \Es \boldLambda_\mathrm{s2} \Es^H +
        \En \boldLambda_\mathrm{n2} \En^H,
\end{equation*}
where $\Lstwo$ and $\Lntwo$ are the eigenvalues of the signal subspace and noise subspace. Specifically, we have $\boldLambda_\mathrm{n2} = \noisevarsq/\Mv \boldI$.
Note that $\Rvtwo = (\Av\boldP\AvH + \noisevarsq\boldI)^2/\Mv$. Following a similar approach to the one we used to obtain \eqref{eq:AvH-delta-En-1}, we get
\begin{equation*}
    \AvH \dEntwo \doteq -\Mv \boldP^{-1}
    (\boldP \AvH \Av + 2\noisevar\boldI)^{-1} \Av^\dagger \dRvtwo \En,
\end{equation*}
where $\dEntwo$ is the perturbation of the noise eigenvectors produced by $\dRvtwo$. After omitting high-order terms, $\dRvtwo$ is given by
\begin{equation*}
    \dRvtwo \doteq
    \frac{1}{\Mv} \sum_{k=1}^{\Mv}
    (\boldz_k \Delta\boldz_k^H + \Delta\boldz_k \boldz_k^H).
\end{equation*}
According to~\cite{pal_nested_2010}, each subarray observation vector $\boldz_k$ can be expressed by
\begin{equation}
    \boldz_k = \Av \boldPsi^{\Mv-k} \boldp + \noisevar \boldi_{\Mv-k+1},
\end{equation}
for $k = 1,2,\ldots,\Mv$, where $\boldi_l$ is a vector of length $\Mv$ whose elements are zero except for the $l$-th element being one, and
\begin{equation*}
    \boldPsi = \diagm(e^{-j\phi_1}, e^{-j\phi_2}, \ldots, e^{-j\phi_K}).
\end{equation*}
Observe that
\begin{equation*}
    \sum_{k=1}^{\Mv} \noisevar \boldi_{\Mv-k+1} \Delta\boldz_k^H
    = \noisevar \dRvone^H,
\end{equation*}
and
\begin{equation*}
    \begin{split}
        &\sum_{k=1}^{\Mv} \Av \boldPsi^{\Mv-k}\boldp \Delta\boldz_k^H \\
        =&\Av \boldP \begin{bmatrix}
            e^{-j(\Mv-1)\phi_1} & e^{-j(\Mv-2)\phi_1} & \cdots & 1 \\
            e^{-j(\Mv-1)\phi_2} & e^{-j(\Mv-2)\phi_2} & \cdots & 1 \\
            \vdots              & \vdots              & \ddots & \vdots \\
            e^{-j(\Mv-1)\phi_K} & e^{-j(\Mv-2)\phi_K} & \cdots & 1 
        \end{bmatrix}
        \begin{bmatrix}
            \Delta\boldz_1^H \\
            \Delta\boldz_2^H \\
            \vdots \\
            \Delta\boldz_{\Mv}^H
        \end{bmatrix} \\
        =&\Av \boldP (\boldT_{\Mv} \Av)^H \boldT_{\Mv} \dRvoneH \\
        =&\Av \boldP \AvH \dRvoneH,
    \end{split}
\end{equation*}
where $\boldT_{\Mv}$ is a $\Mv \times \Mv$ permutation matrix whose anti-diagonal elements are one, and whose remaining elements are zero. Because $\Delta\boldR = \Delta\boldR^H$, by Lemma~\ref{lem:fr-conj-sym} we know that $\Delta\boldz$ is conjugate symmetric. According to the definition of $\Rvone$, it is straightforward to show that $\dRvone = \dRvone^H$ also holds. Hence
\begin{equation*}
    \dRvtwo \doteq \frac{1}{\Mv}
    [(\Av\boldP\AvH + 2\noisevar\boldI)\dRvone + 
        \dRvone\Av\boldP\AvH].
\end{equation*}
Substituting $\dRvtwo$ into the expression of $\AvH \dEntwo$, and utilizing the property that $\AvH\En = \boldzero$,
we can express $\AvH \dEntwo$ as
\begin{equation*}
    -\boldP^{-1}
    (\boldP \AvH \Av + 2\noisevar\boldI)^{-1}
    \Av^\dagger
    (\Av\boldP\AvH + 2\noisevar\boldI)\dRvone \En.
\end{equation*}
Observe that
\begin{equation*}
    \begin{aligned}
        \Av^\dagger(\Av\boldP\AvH + 2\noisevar\boldI) 
        =&(\AvH\Av)^{-1}\AvH(\Av\boldP\AvH + 2\noisevar\boldI) \\
        =&[\boldP\AvH + 2\noisevar(\AvH\Av)^{-1}\AvH] \\
        =&(\boldP\AvH\Av + 2\noisevar\boldI)\Av^\dagger.
    \end{aligned}
\end{equation*}
Hence the term $(\boldP\AvH\Av + 2\noisevar\boldI)$ gets canceled and we obtain
\begin{equation}
    \AvH \dEntwo \doteq -\boldP^{-1} \Av^\dagger \dRvone \En,
\end{equation}
which coincides with the first-order error expression of $\AvH \dEnone$.

\section{Proof of Theorem~\ref{thm:MSE-MUSIC}}
\label{app:thm-mse-music}
Before proceeding to the main proof, we introduce the following definition.
\begin{definition}
    \label{def:cab}
    Let $\boldA = [\bolda_1\,\bolda_2\,\ldots\bolda_N] \in \doubleR^{N \times N}$, and $\boldB = [\boldb_1\,\boldb_2\,\ldots\boldb_N] \in \doubleR^{N \times N}$. The structured matrix $\boldC_{\boldA\boldB} \in \doubleR^{N^2 \times N^2}$ is defined as
    \begin{equation*}
        \boldC_{\boldA\boldB} =
        \begin{bmatrix}
            \bolda_1 \boldb_1^T &
            \bolda_2 \boldb_1^T &
            \ldots &
            \bolda_N \boldb_1^T \\
            \bolda_1 \boldb_2^T &
            \bolda_2 \boldb_2^T &
            \ldots &
            \bolda_N \boldb_2^T \\
            \vdots & \ddots & \vdots & \vdots \\
            \bolda_1 \boldb_N^T &
            \bolda_2 \boldb_N^T &
            \ldots &
            \bolda_N \boldb_N^T \\
        \end{bmatrix}.
    \end{equation*}
\end{definition}

We now start deriving the explicit MSE expression. According to \eqref{eq:doa-err-simple-2},
\begin{equation}
    \label{eq:mse-expression-orig}
    \begin{aligned}
        &\doubleE[(\hat{\theta}_{k_1} - \theta_{k_1})
            (\hat{\theta}_{k_2} - \theta_{k_2})] \\
        \doteq&(\gamma_{k_1} p_{k_1})^{-1} (\gamma_{k_2} p_{k_2})^{-1}
            \doubleE[\Real(\boldxi_{k_1}^T \Delta \boldr)
            \Real(\boldxi_{k_2}^T \Delta \boldr)] \\
        =&(\gamma_{k_1} p_{k_1})^{-1} (\gamma_{k_2} p_{k_2})^{-1}
        \big\{
            \Real(\boldxi_{k_1})^T
            \doubleE[\Real(\Delta \boldr) \Real(\Delta \boldr)^T]
            \Real(\boldxi_{k_2}) \\
        &+ \Imag(\boldxi_{k_1})^T
            \doubleE[\Imag(\Delta \boldr) \Imag(\Delta \boldr)^T]
            \Imag(\boldxi_{k_2}) \\
        &- \Real(\boldxi_{k_1})^T
            \doubleE[\Real(\Delta \boldr) \Imag(\Delta \boldr)^T]
            \Imag(\boldxi_{k_2}) \\
        &- \Real(\boldxi_{k_2})^T
            \doubleE[\Real(\Delta \boldr) \Imag(\Delta \boldr)^T]
            \Imag(\boldxi_{k_1})
        \big\},
    \end{aligned}
\end{equation}
where we used the property that $\Real(\boldA\boldB) = \Real(\boldA)\Real(\boldB) - \Imag(\boldA)\Imag(\boldB)$ for two complex matrices $\boldA$ and $\boldB$ with proper dimensions.

To obtain the closed-form expression for \eqref{eq:mse-expression-orig}, we need to compute the four expectations. It should be noted that in the case of finite snapshots, $\Delta\boldr$ does not follow a circularly-symmetric complex Gaussian distribution. Therefore we cannot directly use the properties of the circularly-symmetric complex Gaussian distribution to evaluate the expectations. For brevity, we demonstrate the computation of only the first expectation in \eqref{eq:mse-expression-orig}. The computation of the remaining three expectations follows the same idea.

Let $\boldr_i$ denote the $i$-th column of $\boldR$ in \eqref{eq:cov-baisc}. Its estimate, $\hat{\boldr}_i$, is given by $\sum_{t=1}^N \boldy(t) y_i^*(t)$, where $y_i(t)$ is the $i$-th element of $\boldy(t)$. Because $\doubleE[\hat{\boldr}_i] = \boldr_i$,
\begin{equation}
    \label{eq:e-re-re-orig}
    \begin{split}
        &\doubleE[\Real(\Delta \boldr_i) \Real(\Delta \boldr_l)^T] \\
        =& \doubleE[\Real(\hat{\boldr}_i) \Real(\hat{\boldr}_l)^T]
            - \Real(\boldr_i) \Real(\boldr_l)^T.
    \end{split}
\end{equation}
The second term in \eqref{eq:e-re-re-orig} is deterministic, and the first term in \eqref{eq:e-re-re-orig} can be expanded into
\begingroup
\allowdisplaybreaks
\begin{align}
    \label{eq:e-re-re}
    & \frac{1}{N^2} \doubleE\Bigg[
        \Real\Big(\sum_{s=1}^N \boldy(s)y_i^*(s)\Big)
        \Real\Big(\sum_{t=1}^N \boldy(t)y_l^*(t)\Big)^T
        \Bigg] \nonumber \\
    =& \frac{1}{N^2} \doubleE\Bigg[
        \sum_{s=1}^N \sum_{t=1}^N
        \Real(\boldy(s)y_i^*(s))
        \Real(\boldy(t)y_l^*(t))^T
        \Bigg] \nonumber \\
    =& \frac{1}{N^2} \sum_{s=1}^N \sum_{t=1}^N \doubleE\Big\{
        \big[\Real(\boldy(s))\Real(y_i^*(s)) 
        - \Imag(\boldy(s))\Imag(y_i^*(s))\big]
         \nonumber \\
    &\quad
        \big[\Real(\boldy(t))^T\Real(y_l^*(t)) 
        - \Imag(\boldy(t))^T\Imag(y_l^*(t))\big]
        \Big\} \nonumber \\
    =& \frac{1}{N^2} \sum_{s=1}^N \sum_{t=1}^N \Big\{
        \doubleE[
            \Real(\boldy(s))\Real(y_i(s))
            \Real(\boldy(t))^T\Real(y_l(t))] \nonumber \\
    &+ \doubleE[
            \Real(\boldy(s))\Real(y_i(s))
            \Imag(\boldy(t))^T\Imag(y_l(t))] \nonumber \\
    &+ \doubleE[
            \Imag(\boldy(s))\Imag(y_i(s))
            \Real(\boldy(t))^T\Real(y_l(t))] \nonumber \\
    &+ \doubleE[
            \Imag(\boldy(s))\Imag(y_i(s))
            \Imag(\boldy(t))^T\Imag(y_l(t))]
            \Big\}.
\end{align}
\endgroup

We first consider the partial sum of the cases when $s \neq t$. By \ref{ass:a4-uc-snapshot}, $\boldy(s)$ and $\boldy(t)$ are uncorrelated Gaussians.
Recall that for $\boldx \sim \scriptC\scriptN(\boldzero, \boldSigma)$, 
\begin{equation*}
    \begin{aligned}
        \doubleE[\Real(\boldx)\Real(\boldx)^T]
        = \frac{1}{2}\Real(\boldSigma) &,\ 
        \doubleE[\Real(\boldx)\Imag(\boldx)^T]
        = -\frac{1}{2}\Imag(\boldSigma) \\
        \doubleE[\Imag(\boldx)\Real(\boldx)^T]
        = \frac{1}{2}\Imag(\boldSigma) &,\ 
        \doubleE[\Imag(\boldx)\Imag(\boldx)^T]
        = \frac{1}{2}\Real(\boldSigma).
    \end{aligned}
\end{equation*}
We have
\begin{equation*}
    \begin{split}
        &\doubleE[\Real(\boldy(s))\Real(y_i(s))
            \Real(\boldy(t))^T\Real(y_l(t))] \\
        =& \doubleE[\Real(\boldy(s))\Real(y_i(s))]
            \doubleE[\Real(\boldy(t))^T\Real(y_l(t))] \\
        =& \frac{1}{4} \Real(\boldr_i) \Real(\boldr_l)^T.
    \end{split}
\end{equation*}
Similarly, we can obtain that when $s \neq t$,
\begin{equation}
    \begin{aligned}
        \doubleE[\Real(\boldy(s))\Real(y_i(s))
            \Imag(\boldy(t))^T\Imag(y_l(t))]
        &= \frac{1}{4} \Real(\boldr_i) \Real(\boldr_l)^T, \\
        \doubleE[\Imag(\boldy(s))\Imag(y_i(s))
            \Real(\boldy(t))^T\Real(y_l(t))]
        &= \frac{1}{4} \Real(\boldr_i) \Real(\boldr_l)^T, \\
        \doubleE[\Imag(\boldy(s))\Imag(y_i(s))
            \Imag(\boldy(t))^T\Imag(y_l(t))]
        &= \frac{1}{4} \Real(\boldr_i) \Real(\boldr_l)^T. \\
    \end{aligned}
\end{equation}
Therefore the partial sum of the cases when $s \neq t$ is given by $(1-1/N) \Real(\boldr_i) \Real(\boldr_l)^T$.

We now consider the partial sum of the cases when $s = t$. We first consider the first expectation inside the double summation in \eqref{eq:e-re-re}. Recall that for $\boldx \sim \scriptN(\boldzero, \boldSigma)$, $\doubleE[x_i x_l x_p x_q] = \sigma_{il}\sigma_{pq} + \sigma_{ip}\sigma_{lq} + \sigma_{iq}\sigma_{lp}$. 
We can express the $(m,n)$-th element of the matrix $\doubleE[\Real(\boldy(t))\Real(y_i(t))\Real(\boldy(t))^T\Real(y_l(t))]$ as
\begin{align*}
    &\doubleE[\Real(y_m(t))\Real(y_i(t))
        \Real(y_n(t))\Real(y_l(t))] \nonumber \\
    =&\doubleE[\Real(y_m(t))\Real(y_i(t))
        \Real(y_l(t))\Real(y_n(t))] \nonumber \\
    =&\doubleE[\Real(y_m(t))\Real(y_i(t))]
        \doubleE[\Real(y_l(t))\Real(y_n(t))] \\
    &+ \doubleE[\Real(y_m(t))\Real(y_l(t))]
        \doubleE[\Real(y_i(t))\Real(y_n(t))] \nonumber \\
    &+ \doubleE[\Real(y_m(t))\Real(y_n(t))]
        \doubleE[\Real(y_i(t))\Real(y_l(t))] \nonumber \\
    =& \frac{1}{4}[\Real(R_{mi})\Real(R_{ln})
        + \Real(R_{ml})\Real(R_{in})
        + \Real(R_{mn})\Real(R_{il})]. \nonumber
\end{align*}
Hence
\begin{equation*}
    \begin{split}
        &\doubleE[\Real(\boldy(t))\Real(y_i(t)) 
            \Real(\boldy(t))^T\Real(y_l(t))] \\
        =& \frac{1}{4}[\Real(\boldr_i)\Real(\boldr_l)^T
            + \Real(\boldr_l)\Real(\boldr_i)^T
            + \Real(\boldR)\Real(R_{il})].
    \end{split}
\end{equation*}
Similarly, we obtain that
\begin{equation*}
    \begin{split}
        &\doubleE[\Imag(\boldy(t))\Imag(y_i(t)) 
            \Imag(\boldy(t))^T\Imag(y_l(t))] \\
        =& \frac{1}{4}[\Real(\boldr_i)\Real(\boldr_l)^T
            + \Real(\boldr_l)\Real(\boldr_i)^T
            + \Real(\boldR)\Real(R_{il})],
    \end{split}
\end{equation*}
\begin{equation*}
    \begin{split}
        &\doubleE[\Real(\boldy(t))\Real(y_i(t)) 
            \Imag(\boldy(t))^T\Imag(y_l(t))] \\
        =& \doubleE[\Imag(\boldy(t))\Imag(y_i(t)) 
            \Real(\boldy(t))^T\Real(y_l(t))] \\
        =& \frac{1}{4}[\Real(\boldr_i)\Real(\boldr_l)^T
            - \Imag(\boldr_l)\Imag(\boldr_i)^T
            + \Imag(\boldR)\Imag(R_{il})].
    \end{split}
\end{equation*}
Therefore the partial sum of the cases when $s = t$ is given by
$(1/N)\Real(\boldr_i)\Real(\boldr_l)^T + (1/2N)[\Real(\boldR)\Real(R_{il}) + \Imag(\boldR)\Imag(R_{il}) + \Real(\boldr_l)\Real(\boldr_i)^T - \Imag(\boldr_l)\Imag(\boldr_i)^T]$
. Combined with the previous partial sum of the cases when $s \neq t$, we obtain that
\begin{equation}
    \begin{split}
        &\doubleE[\Real(\Delta \boldr_i) \Real(\Delta \boldr_l)^T] \\
        =&\frac{1}{2N}[\Real(\boldR)\Real(R_{il})
            + \Imag(\boldR)\Imag(R_{il}) \\
        &+ \Real(\boldr_l)\Real(\boldr_i)^T
        - \Imag(\boldr_l)\Imag(\boldr_i)^T ].
    \end{split}
\end{equation}
Therefore
\begin{equation}
    \label{eq:e-re-re-middle}
    \begin{split}
        &\doubleE[\Real(\Delta \boldr) \Real(\Delta \boldr)^T] \\
        =& \frac{1}{2N}[\Real(\boldR) \otimes \Real(\boldR)
            + \Imag(\boldR) \otimes \Imag(\boldR) \\
            &\quad+ \boldC_{\Real(\boldR)\Real(\boldR)}
            - \boldC_{\Imag(\boldR)\Imag(\boldR)}],
    \end{split}
\end{equation}
which completes the computation of first expectation in \eqref{eq:mse-expression-orig}. Utilizing the same technique, we obtain that
\begin{equation}
    \label{eq:e-im-im-middle}
    \begin{split}
        &\doubleE[\Imag(\Delta \boldr) \Imag(\Delta \boldr)^T] \\
        =& \frac{1}{2N}[\Real(\boldR) \otimes \Real(\boldR)
            + \Imag(\boldR) \otimes \Imag(\boldR) \\
            &\quad+ \boldC_{\Imag(\boldR)\Imag(\boldR)}
            - \boldC_{\Real(\boldR)\Real(\boldR)}],
    \end{split}
\end{equation}
and
\begin{equation}
    \label{eq:e-re-im-middle}
    \begin{split}
        &\doubleE[\Real(\Delta \boldr) \Imag(\Delta \boldr)^T] \\
        =& \frac{1}{2N}[\Imag(\boldR) \otimes \Real(\boldR)
            - \Real(\boldR) \otimes \Imag(\boldR) \\
            &\quad+ \boldC_{\Real(\boldR)\Imag(\boldR)}
            + \boldC_{\Imag(\boldR)\Real(\boldR)}].
    \end{split}
\end{equation}
Substituting \eqref{eq:e-re-re-middle}--\eqref{eq:e-re-im-middle} into \eqref{eq:mse-expression-orig} gives a closed-form MSE expression. However, this expression is too complicated for analytical study. In the following steps, we make use of the properties of $\boldxi_k$ to simply the MSE expression.


\begin{lemma}
    \label{lem:aka-caa}
    Let $\boldX, \boldY, \boldA, \boldB \in \doubleR^{N \times N}$ satisfying $\boldX^T = (-1)^{n_x}\boldX$, $\boldA^T = (-1)^{n_a}\boldA$, and $\boldB^T = (-1)^{n_b}\boldB$, where $n_x, n_a, n_b \in \{0,1\}$. Then
    \begin{equation*}
        \vecm(\boldX)^T (\boldA \otimes \boldB) \vecm(\boldY)
        = (-1)^{n_x+n_b}\vecm(\boldX)^T \boldC_{\boldA\boldB} \vecm(\boldY),
    \end{equation*}
    \begin{equation*}
        \vecm(\boldX)^T (\boldB \otimes \boldA) \vecm(\boldY)
        = (-1)^{n_x+n_a}\vecm(\boldX)^T \boldC_{\boldB\boldA} \vecm(\boldY).
    \end{equation*}
\end{lemma}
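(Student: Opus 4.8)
The plan is to reduce both sides of each identity to a trace of a product of the four matrices, and then play off the symmetry hypotheses against the cyclic and transpose invariance of the trace. First I would record the two elementary facts I intend to reuse: for real matrices $\vecm(\boldX)^T \vecm(\boldZ) = \trace(\boldX^T \boldZ)$, and $(\boldA \otimes \boldB)\vecm(\boldY) = \vecm(\boldB \boldY \boldA^T)$ (the latter being the already-cited rule $\vecm(\boldA\boldX\boldB) = (\boldB^T\otimes\boldA)\vecm(\boldX)$ applied with the factors $\boldB$ and $\boldA^T$). Applying these to the left-hand side of the first identity immediately gives $\vecm(\boldX)^T(\boldA\otimes\boldB)\vecm(\boldY) = \trace(\boldX^T\boldB\boldY\boldA^T)$.

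The crucial structural step is to compute the action of $\boldC_{\boldA\boldB}$ on $\vecm(\boldY)$. Reading off the block layout in Definition~\ref{def:cab}, the $p$-th length-$N$ block of $\boldC_{\boldA\boldB}\vecm(\boldY)$ is $\sum_{q=1}^N \bolda_q \boldb_p^T \boldy_q = \boldA\boldY^T\boldb_p$, where $\boldy_q$ is the $q$-th column of $\boldY$; stacking these blocks over $p$ shows $\boldC_{\boldA\boldB}\vecm(\boldY) = \vecm(\boldA\boldY^T\boldB)$. In words, sandwiching $\vecm(\boldY)$ against $\boldC_{\boldA\boldB}$ has the same effect as the corresponding Kronecker product but with $\boldY$ transposed. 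Hence $\vecm(\boldX)^T\boldC_{\boldA\boldB}\vecm(\boldY) = \trace(\boldX^T\boldA\boldY^T\boldB)$, and the first identity becomes the scalar relation $\trace(\boldX^T\boldB\boldY\boldA^T) = (-1)^{n_x+n_b}\trace(\boldX^T\boldA\boldY^T\boldB)$.

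To close this I would transpose the left trace (the trace is transpose-invariant), obtaining $\trace(\boldA\boldY^T\boldB^T\boldX)$, then substitute $\boldB^T=(-1)^{n_b}\boldB$ and $\boldX=(-1)^{n_x}\boldX^T$, and finally apply the cyclic property to arrive at $(-1)^{n_x+n_b}\trace(\boldX^T\boldA\boldY^T\boldB)$, which is exactly the right-hand side; note that no symmetry of $\boldY$ is used anywhere. The second identity follows verbatim with the roles of $\boldA$ and $\boldB$ interchanged: the left side reduces to $\trace(\boldX^T\boldA\boldY\boldB^T)$ and the right side to $\trace(\boldX^T\boldB\boldY^T\boldA)$, and the same transpose-then-cyclic manipulation, now invoking $\boldA^T=(-1)^{n_a}\boldA$, produces the factor $(-1)^{n_x+n_a}$.

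I expect the only genuine obstacle to be the structural identity $\boldC_{\boldA\boldB}\vecm(\boldY)=\vecm(\boldA\boldY^T\boldB)$: the block definition couples columns of $\boldA$ indexed by the block column with columns of $\boldB$ indexed by the block row, so one must track the two index ranges carefully to see precisely where the transpose on $\boldY$ originates. Once that identity is verified, the remainder is a short and routine trace computation driven entirely by the three symmetry hypotheses.
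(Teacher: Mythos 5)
Your proposal is correct, but it takes a genuinely different route from the paper. The paper proves the lemma by brute force at the index level: it expands $\vecm(\boldX)^T \boldC_{\boldA\boldB} \vecm(\boldY)$ into the quadruple sum $\sum_{m,n,p,q} A_{pn} X_{pm} B_{qm} Y_{qn}$, applies the entrywise symmetry relations $X_{pm}=(-1)^{n_x}X_{mp}$ and $B_{qm}=(-1)^{n_b}B_{mq}$, and regroups the sum to recognize $\vecm(\boldX)^T(\boldA\otimes\boldB)\vecm(\boldY)$. You instead work at the operator level: your structural identity $\boldC_{\boldA\boldB}\vecm(\boldY)=\vecm(\boldA\boldY^T\boldB)$ (which is correct under the paper's block layout, where block $(m,n)$ is $\bolda_n\boldb_m^T$, and which the paper never states explicitly) characterizes $\boldC_{\boldA\boldB}$ as the Kronecker-type operator composed with transposition of its matrix argument, after which both sides collapse to traces, $\trace(\boldX^T\boldB\boldY\boldA^T)$ versus $\trace(\boldX^T\boldA\boldY^T\boldB)$, and the sign $(-1)^{n_x+n_b}$ falls out of one transpose-invariance step and one cyclic permutation. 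The two arguments are of comparable length and both correctly use no hypothesis on $\boldY$; yours has the advantage of isolating a reusable description of $\boldC_{\boldA\boldB}$ and making visible exactly which two transposes generate the sign, while the paper's is more self-contained and mechanical, requiring no auxiliary vec/trace identities beyond the definition.
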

\begin{proof}
    By Definition~\ref{def:cab}, 
    \begingroup
    \allowdisplaybreaks
    \begin{align*}
        &\vecm(\boldX)^T \boldC_{\boldA \boldB} \vecm(\boldY) \\
        =& \sum_{m=1}^N \sum_{n=1}^N 
            \boldx_m^T \bolda_n \boldb_m^T \boldy_n \\
        =& \sum_{m=1}^N \sum_{n=1}^N 
            \Big( \sum_{p=1}^N A_{pn} X_{pm} \Big)
            \Big( \sum_{p=1}^N B_{qm} Y_{qn} \Big) \\
        =& \sum_{m=1}^N \sum_{n=1}^N \sum_{p=1}^N \sum_{q=1}^N
            A_{pn} X_{pm} B_{qm} Y_{qn} \\
        =& (-1)^{n_x+n_b}
            \sum_{p=1}^N \sum_{n=1}^N \sum_{m=1}^N \sum_{q=1}^N
            (X_{mp} B_{mq} Y_{qn}) A_{pn} \\
        =& (-1)^{n_x+n_b}
            \sum_{p=1}^N \sum_{n=1}^N
             \boldx_p^T A_{pn} \boldB \boldy_n \\
        =& (-1)^{n_x+n_b}
            \vecm(\boldX)^T (\boldA \otimes \boldB)
            \vecm(\boldY).
    \end{align*}
    \endgroup
    The proof of the second equality follows the same idea.
\end{proof}

\begin{lemma}
    \label{lem:flip-pn}
    $\TMv \projp{\Av} \TMv = (\projp{\Av})^*$.
\end{lemma}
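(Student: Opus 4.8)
The plan is to exploit the Vandermonde structure of the ULA steering matrix $\Av$, whose $k$-th column is $\avk = [1,\,e^{j\phi_k},\,\ldots,\,e^{j(\Mv-1)\phi_k}]^T$. First I would establish a flip--conjugate identity: since $\TMv$ reverses the order of a vector's entries, reversing $\avk$ and pulling out the leading factor $e^{j(\Mv-1)\phi_k}$ gives $\TMv\avk = e^{j(\Mv-1)\phi_k}\,\avkC$. Collecting these column relations yields
\[
    \TMv\Av = \AvC\boldPhi,
    \qquad
    \boldPhi = \diagm(e^{j(\Mv-1)\phi_1},\ldots,e^{j(\Mv-1)\phi_K}),
\]
where $\boldPhi$ is a unitary diagonal phase matrix.

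Next, using that $\TMv$ is a real symmetric orthogonal involution ($\TMv=\TMv^H=\TMv^T$ and $\TMv^2=\boldI$), I would write $\TMv\projp{\Av}\TMv = \boldI - \TMv\Av(\AvH\Av)^{-1}\AvH\TMv$ and substitute the flip--conjugate identity together with its Hermitian transpose $\AvH\TMv = \boldPhi^H\AvT$. The remaining task is to show the phase matrix cancels. The key algebraic fact is $\AvH\Av = \boldPhi^H(\AvH\Av)^*\boldPhi$, obtained by computing $(\TMv\Av)^H(\TMv\Av)$ in two ways; inverting this relation and using $\boldPhi^{-1}=\boldPhi^H$ shows $\boldPhi(\AvH\Av)^{-1}\boldPhi^H = ((\AvH\Av)^{-1})^*$, so that $\boldPhi$ drops out entirely and
\[
    \TMv\Av(\AvH\Av)^{-1}\AvH\TMv
    = \AvC((\AvH\Av)^{-1})^*\AvT
    = (\Av(\AvH\Av)^{-1}\AvH)^*.
\]
Since $\boldI^*=\boldI$, this immediately gives $\TMv\projp{\Av}\TMv = (\projp{\Av})^*$.

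I expect the main obstacle to be the bookkeeping of the diagonal phase $\boldPhi$ through the Gram matrix $\AvH\Av$ and its inverse; the relation $\AvH\Av = \boldPhi^H(\AvH\Av)^*\boldPhi$ is what makes everything collapse, and verifying it cleanly is the crux. As a sanity check and an alternative route, I could argue geometrically: $\TMv\projp{\Av}\TMv$ is the orthogonal projector onto $\TMv(\mathrm{range}\,\Av)^\perp = (\mathrm{range}\,\TMv\Av)^\perp = (\mathrm{range}\,\AvC)^\perp$, while $(\projp{\Av})^*$ is the projector onto $\overline{(\mathrm{range}\,\Av)^\perp} = (\mathrm{range}\,\AvC)^\perp$; uniqueness of the orthogonal projector onto a given subspace then forces the two to coincide. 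This viewpoint also confirms that full column rank of $\Av$ (guaranteed by \ref{ass:a2-distinct-doa} together with $K<\Mv$) is exactly what is needed for the pseudoinverse expressions to be well-defined throughout.
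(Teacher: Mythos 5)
Your proposal is correct and follows essentially the same route as the paper: both rest on the flip--conjugate identity $\TMv\Av = \AvC\boldPhi$ (the paper writes it as $\TMv\Av = (\Av\boldPhi)^*$ with the opposite sign convention for $\boldPhi$) and then cancel the unitary phase matrix through the Gram matrix $\AvH\Av$ and its inverse, which is exactly what the paper does by inserting $\TMv^H\TMv$ inside the inverse. Your closing remark about uniqueness of the orthogonal projector onto $(\mathrm{range}\,\AvC)^\perp$ is a valid and slightly cleaner way to see the same cancellation, but it does not constitute a different proof strategy.
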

\begin{proof}
Since $\projp{\Av} = \boldI - \Av (\Av^H \Av)^{-1} \Av^H$, it suffices to show that $\TMv \Av (\Av^H \Av)^{-1} \Av^H \TMv = (\Av (\Av^H \Av)^{-1} \Av^H)^*$. Because $\Av$ is the steering matrix of a ULA with $\Mv$ sensors, it is straightforward to show that $\TMv \Av = (\Av \boldPhi)^*$, where $\boldPhi = \diagm(e^{-j(\Mv-1)\phi_1}, e^{-j(\Mv-1)\phi_2}, \ldots, e^{-j(\Mv-1)\phi_K})$.

Because $\TMv\TMv = \boldI, \TMv^H = \TMv$,
\begin{equation*}
    \begin{split}
        &\TMv \Av (\Av^H \Av)^{-1} \Av^H \TMv \\
        =& \TMv \Av (\Av^H \TMv^H \TMv \Av)^{-1} 
            \Av^H \TMv^H \\
        =& (\Av \boldPhi)^* ((\Av \boldPhi)^T (\Av \boldPhi)^*)^{-1} 
            (\Av \boldPhi)^T \\
        =& (\Av (\Av^H \Av)^{-1} \Av^H)^*.
    \end{split}
\end{equation*}
\end{proof}

\begin{lemma}
    \label{lem:xi-k-symmetry}
    Let $\boldXi_k = \matm_{M, M}(\boldxi_k)$. Then $\boldXi_k^H = \boldXi_k$ for $k = 1,2,\ldots, K$.
\end{lemma}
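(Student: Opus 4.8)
The plan is to reduce the Hermitian-symmetry claim to a single conjugate-symmetry statement and then invoke Lemma~\ref{lem:ftz-hermitian}. Writing $\boldxi_k = \boldF^T \boldy_k$ with $\boldy_k := \boldGamma^T(\boldbeta_k \otimes \boldalpha_k) \in \doubleC^{2\Mv-1}$, Lemma~\ref{lem:ftz-hermitian} guarantees that $\boldXi_k = \matm_{M,M}(\boldF^T \boldy_k)$ is Hermitian as soon as $\boldy_k$ is conjugate symmetric, i.e. $\boldT_{2\Mv-1}\boldy_k = \boldy_k^*$. So the entire problem reduces to showing that $\boldy_k$ is conjugate symmetric.

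Next I would identify $\boldy_k$ with a discrete convolution. From the band (Toeplitz) structure of $\boldGamma^T$ already exhibited as $\tilde{\boldB}_k$ in the proof of Proposition~\ref{prop:alpha-beta-xi-nonzero}, the vector $\boldy_k$ is precisely the length-$(2\Mv-1)$ linear convolution of the reversed vector $\TMv\boldbeta_k$ with $\boldalpha_k$. I would then record the elementary fact that convolution \emph{adds flip-conjugate phases}: if $\TMv\boldu = e^{j\psi}\boldu^*$ and $\TMv\boldv = e^{j\chi}\boldv^*$, then $\boldT_{2\Mv-1}(\boldu * \boldv) = e^{j(\psi+\chi)}(\boldu * \boldv)^*$, which is a one-line verification via the index substitution $p\mapsto \Mv+1-p$, $q\mapsto \Mv+1-q$ in $(\boldu*\boldv)_{2\Mv-m} = \sum_{p+q=2\Mv-m+1} u_p v_q$. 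Hence it suffices to show that the two factors are flip-conjugate symmetric with \emph{opposite} phases.

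I would then establish the two symmetries separately. For $\boldbeta_k = \projp{\Av}\Davk$, insert $\TMv^2 = \boldI$ and apply Lemma~\ref{lem:flip-pn} to get $\TMv\boldbeta_k = (\projp{\Av})^*\,\TMv\Davk$; writing $\Davk = j\dot{\phi}_k\,\boldD\avk$ with $\boldD = \diagm(0,1,\ldots,\Mv-1)$ and using $\TMv\boldD\TMv = (\Mv-1)\boldI - \boldD$ together with $\TMv\avk = e^{j(\Mv-1)\phi_k}\avk^*$, the term proportional to $(\projp{\Av})^*\avk^* = (\projp{\Av}\avk)^* = \boldzero$ (because $\avk$ is the $k$-th column of $\Av$) drops out, and the remainder collapses to $\TMv\boldbeta_k = e^{j(\Mv-1)\phi_k}\boldbeta_k^*$; consequently the reversed factor $\TMv\boldbeta_k$ carries phase $(\Mv-1)\phi_k$. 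For $\boldalpha_k = -\Av^*(\AvT\Av^*)^{-1}\bolde_k$, I would use $\TMv\Av^* = \Av\boldPhi$ (the conjugate of the relation in Lemma~\ref{lem:flip-pn}, with $\boldPhi = \diagm(e^{-j(\Mv-1)\phi_1},\ldots,e^{-j(\Mv-1)\phi_K})$) and the commutation identity $\boldPhi(\AvT\Av^*)^{-1} = (\AvH\Av)^{-1}\boldPhi$. This identity follows by inverting $\boldPhi^H(\AvH\Av)\boldPhi = \AvT\Av^*$ (obtained by forming $(\Av\boldPhi)^H(\Av\boldPhi)$ from $\TMv\Av = (\Av\boldPhi)^*$) and using that the diagonal $\boldPhi$ is unitary, so $\boldPhi^H = \boldPhi^{-1} = \boldPhi^*$. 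Since $\boldPhi\bolde_k = e^{-j(\Mv-1)\phi_k}\bolde_k$, this yields $\TMv\boldalpha_k = e^{-j(\Mv-1)\phi_k}\boldalpha_k^*$, i.e. phase $-(\Mv-1)\phi_k$.

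Finally, the two phases $(\Mv-1)\phi_k$ and $-(\Mv-1)\phi_k$ cancel, so the convolution fact forces $\boldy_k$ to be conjugate symmetric, and Lemma~\ref{lem:ftz-hermitian} delivers $\boldXi_k^H = \boldXi_k$. I expect the main obstacle to be the $\boldalpha_k$ step — specifically proving the commutation identity that lets the diagonal $\boldPhi$ be moved past $(\AvH\Av)^{-1}$, since the naive attempt to show a factor-wise symmetry of $\boldalpha_k$ fails until one recognizes that $\boldPhi$ can be conjugated through the Gram matrix. By contrast, the $\boldbeta_k$ symmetry is routine once $\projp{\Av}\avk = \boldzero$ is exploited, and the convolution/reversal bookkeeping is purely combinatorial.
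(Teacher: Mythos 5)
Your proof is correct, and it shares the paper's overall skeleton (reduce to conjugate symmetry of $\boldGamma^T(\boldbeta_k\otimes\boldalpha_k)$ and then invoke Lemma~\ref{lem:ftz-hermitian}), but the core symmetry computation is executed quite differently. The paper never separates the two Kronecker factors: it proves the joint identity $\TMv\boldalpha_k\boldbeta_k^T\TMv = (\boldalpha_k\boldbeta_k^T)^*$ by manipulating the outer product as a whole, using $\TMv(\Av^\dagger)^H\bolde_k\avkH\TMv = [(\Av^\dagger)^H\bolde_k\avkH]^*$ together with $\TMv\boldD\TMv+\boldD=(\Mv-1)\boldI$ and $\projp{\Av}\avk=\boldzero$; because the two phase factors cancel inside that single expression, it never needs your commutation identity $\boldPhi(\AvT\Av^*)^{-1}=(\AvH\Av)^{-1}\boldPhi$. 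You instead establish the two per-factor phase symmetries $\TMv\boldbeta_k=e^{j(\Mv-1)\phi_k}\boldbeta_k^*$ and $\TMv\boldalpha_k=e^{-j(\Mv-1)\phi_k}\boldalpha_k^*$ separately (both of which I have checked and which do hold, the latter precisely because $\boldPhi$ conjugates through the Gram matrix as you claim), and then combine them. The trade-off: your route costs the extra commutation lemma for $\boldalpha_k$, but it buys a fully explicit treatment of the step the paper dismisses as ``straightforward to show,'' namely that $\boldGamma^T$ maps the Kronecker product to a conjugate-symmetric vector --- your identification of $\boldGamma^T(\boldbeta_k\otimes\boldalpha_k)$ with the linear convolution of $\TMv\boldbeta_k$ and $\boldalpha_k$, plus the phase-addition property of convolution under index reversal, is a clean and reusable way to justify that step, and it also exposes the structural reason the result holds (the aperture phases $\pm(\Mv-1)\phi_k$ of the two factors cancel). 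Both arguments are sound; yours is somewhat longer but more self-contained at the point where the paper is terse.
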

\begin{proof}
Note that $\boldxi_k = \boldF^T \boldGamma^T (\boldbeta_k \otimes \boldalpha_k)$. We first prove that $\boldbeta_k \otimes \boldalpha_k$ is conjugate symmetric, or that $(\TMv \otimes \TMv)(\boldbeta_k \otimes \boldalpha_k) = (\boldbeta_k \otimes \boldalpha_k)^*$. Similar to the proof of Lemma~\ref{lem:flip-pn}, we utilize the properties that $\TMv \Av = (\Av \boldPhi)^*$ and that $\TMv \avk = (\avk e^{-j(\Mv - 1)\phi_k})^*$ to show that
\begin{equation}
    \label{eq:flip-av}
    \TMv (\Av^\dagger)^H \bolde_k \avkH \TMv
    = [(\Av^\dagger)^H \bolde_k \avkH]^*.
\end{equation}

Observe that $\Davk = j\dot{\phi}_k \boldD \avk$, where $\dot{\phi}_k = (2\pi d_0\cos\theta_k) / \lambda$ and $\boldD = \diagm(0,1,\ldots,\Mv-1)$. We have
\begin{equation*}
    \begin{split}
        &(\TMv \otimes \TMv)(\boldbeta_k \otimes \boldalpha_k)
        = (\boldbeta_k \otimes \boldalpha_k)^* \\
        \iff& \TMv \boldalpha_k \boldbeta_k^T \TMv
        = (\boldalpha_k \boldbeta_k^T)^* \\
        \iff& \TMv[(\Av^\dagger)^H \bolde_k \avkH \boldD \projp{\Av}]^* \TMv \\
        &= -(\Av^\dagger)^H \bolde_k \avkH \boldD \projp{\Av}.
    \end{split}
\end{equation*}
Since $\boldD = \TMv \TMv \boldD \TMv \TMv$, combining with Lemma~\ref{lem:flip-pn} and \eqref{eq:flip-av}, it suffices to show that
\begin{equation}
    \label{eq:flip-akb-requirement}
    \begin{split}
        &(\Av^\dagger)^H \bolde_k \avkH \TMv \boldD
            \TMv \projp{\Av} \\
        &= - (\Av^\dagger)^H \bolde_k \avkH \boldD \projp{\Av}.
    \end{split}
\end{equation}
Observe that $\TMv \boldD \TMv + \boldD = (\Mv-1)\boldI$. We have
\begin{equation*}
    \projp{\Av}(\TMv \boldD \TMv + \boldD) \avk = \boldzero,
\end{equation*}
or equivalently
\begin{equation}
    \label{eq:flip-akb-final}
    \avkH \TMv \boldD \TMv \projp{\Av} =
    - \avkH \boldD \projp{\Av}.
\end{equation}
Pre-multiplying both sides of \eqref{eq:flip-akb-final} with $(\Av^\dagger)^H \bolde_k$ leads to \eqref{eq:flip-akb-requirement}, which completes the proof that $\boldbeta_k \otimes \boldalpha_k$ is conjugate symmetric. 
According to the definition of $\boldGamma$ in \eqref{eq:gamma-mat-def}, it is straightforward to show that $\boldGamma^T (\boldbeta_k \otimes \boldalpha_k)$ is also conjugate symmetric. Combined with Lemma~\ref{lem:ftz-hermitian} in Appendix~\ref{app:f-def}, we conclude that $\matm_{M,M}(\boldF^T \boldGamma^T (\boldbeta_k \otimes \boldalpha_k))$ is Hermitian symmetric, or that $\boldXi_k = \boldXi_k^H$.
\end{proof}

Given Lemma~\ref{lem:aka-caa}--\ref{lem:xi-k-symmetry}, we are able continue the simplification. We first consider the term $\Real(\boldxi_{k_1})^T \doubleE[\Real(\Delta \boldr) \Real(\Delta \boldr)^T] \Real(\boldxi_{k_2})$ in \eqref{eq:mse-expression-orig}. Let $\boldXi_{k_1} = \matm_{M,M}(\boldxi_{k_1})$, and $\boldXi_{k_2} = \matm_{M,M}(\boldxi_{k_2})$. By Lemma~\ref{lem:xi-k-symmetry}, we have $\boldXi_{k_1} = \boldXi_{k_1}^H$, and $\boldXi_{k_2} = \boldXi_{k_2}^H$. Observe that $\Real(\boldR)^T = \Real(\boldR)$, and that $\Imag(\boldR)^T = \Imag(\boldR)$. By Lemma~\ref{lem:aka-caa} we immediately obtain the following equalities:
\begin{equation*}
    \begin{split}
        &\Real(\boldxi_{k_1})^T
        (\Real(\boldR) \otimes \Real(\boldR))
        \Real(\boldxi_{k_2}) \\
        =&
        \Real(\boldxi_{k_1})^T
        \boldC_{\Real(\boldR)\Real(\boldR)}
        \Real(\boldxi_{k_2}),
    \end{split}
\end{equation*}
\begin{equation*}
    \begin{split}
        &\Real(\boldxi_{k_1})^T
        (\Imag(\boldR) \otimes \Imag(\boldR))
        \Real(\boldxi_{k_2}) \\
        =&
        -\Real(\boldxi_{k_1})^T
        \boldC_{\Imag(\boldR)\Imag(\boldR)}
        \Real(\boldxi_{k_2}).
    \end{split}
\end{equation*}
Therefore $\Real(\boldxi_{k_1})^T \doubleE[\Real(\Delta \boldr) \Real(\Delta \boldr)^T] \Real(\boldxi_{k_2})$ can be compactly expressed as
\begin{equation}
    \label{eq:r1t-err-r2-final}
    \begin{aligned}
        &\Real(\boldxi_{k_1})^T
        \doubleE[\Real(\Delta \boldr) \Real(\Delta \boldr)^T] 
        \Real(\boldxi_{k_2}) \\
        =& \frac{1}{N}
            \Real(\boldxi_{k_1})^T[
                \Real(\boldR) \otimes \Real(\boldR)
                + \Imag(\boldR) \otimes \Imag(\boldR)
            ]\Real(\boldxi_{k_2}) \\
        =& \frac{1}{N}
            \Real(\boldxi_{k_1})^T
            \Real(\boldR^T \otimes \boldR)
            \Real(\boldxi_{k_2}),
    \end{aligned}
\end{equation}
where we make use of the properties that $\boldR^T = \boldR^*$, and $\Real(\boldR^* \otimes \boldR) = \Real(\boldR) \otimes \Real(\boldR) + \Imag(\boldR) \otimes \Imag(\boldR)$. Similarly, we can obtain that
\begin{equation}
    \label{eq:i1t-err-i2-final}
    \begin{split}
        &\Imag(\boldxi_{k_1})^T 
            \doubleE[\Imag(\Delta \boldr) \Imag(\Delta \boldr)^T]
        \Imag(\boldxi_{k_2}) \\
        =& \frac{1}{N}
            \Imag(\boldxi_{k_1})^T
            \Real(\boldR^T \otimes \boldR)
            \Imag(\boldxi_{k_2}),
    \end{split}
\end{equation}
\begin{equation}
    \label{eq:r1t-eri-i2-final}
    \begin{split}
        &\Real(\boldxi_{k_1})^T 
            \doubleE[\Real(\Delta \boldr) \Imag(\Delta \boldr)^T]
        \Imag(\boldxi_{k_2}) \\
        =& -\frac{1}{N}
            \Real(\boldxi_{k_1})^T
            \Imag(\boldR^T \otimes \boldR)
            \Imag(\boldxi_{k_2}),
    \end{split}
\end{equation}
\begin{equation}
    \label{eq:r2t-eri-i1-final}
    \begin{split}
        &\Real(\boldxi_{k_2})^T 
            \doubleE[\Real(\Delta \boldr) \Imag(\Delta \boldr)^T]
        \Imag(\boldxi_{k_1}) \\
        =& -\frac{1}{N}
            \Real(\boldxi_{k_2})^T
            \Imag(\boldR^T \otimes \boldR)
            \Imag(\boldxi_{k_1}).
    \end{split}
\end{equation}
Substituting \eqref{eq:r1t-err-r2-final}--\eqref{eq:r2t-eri-i1-final} into \eqref{eq:mse-expression-orig} completes the proof.

\section{Proof of Proposition~\ref{prop:crb-snr-infty}}
\label{app:crb-snr-infty}
Without loss of generality, let $p = 1$ and $\noisevar \to 0$. For brevity, we denote $\boldR^T \otimes \boldR$ by $\boldW$. We first consider the case when $K < M$.
Denote the eigendecomposition of $\boldR^{-1}$ by $\Es \Ls^{-1} \Es^H + \noisevarinv \En \En^H$. We have
\begin{equation*}
    \boldW^{-1} = \noisevarp{-4} \boldK_1 + \noisevarp{-2} \boldK_2 + \boldK_3,
\end{equation*}
where
\begin{align*}
    \boldK_1 &= \En^*\En^T \otimes \En\En^H, \\
    \boldK_2 &= \Es^*\Ls^{-1}\Es^T \otimes \En\En^H + \En^*\En^T \otimes \Es\Ls^{-1}\Es^H, \\
    \boldK_3 &= \Es^*\Ls^{-1}\Es^T \otimes \Es\Ls^{-1}\Es^H.
\end{align*}
Recall that $\boldA^H \En = \boldzero$. We have
\begingroup
\allowdisplaybreaks
\begin{align}
    \boldK_1 \DAd &= (\En^*\En^T \otimes \En\En^H)
        (\dot{\boldA}^* \odot \boldA + \boldA^* \odot \dot{\boldA}) \nonumber\\
    &= \En^*\En^T \dot{\boldA}^* \odot \En\En^H \boldA
     + \En^*\En^T \boldA^* \odot \En\En^H \dot{\boldA} \nonumber\\
    &= \boldzero.
\end{align}
\endgroup
Therefore
\begin{equation}
    \boldM_{\boldtheta}^H \boldM_{\boldtheta} 
    = \DAd^H \boldW^{-1} \DAd 
    = \noisevarinv \DAd^H (\boldK_2 + \noisevar \boldK_3) \DAd.
\end{equation}
Similar to $\boldW^{-1}$, we denote $\boldW^{-\frac{1}{2}} = \noisevarp{-2} \boldK_1 + \noisevarp{-1} \boldK_4 + \boldK_5$, where
\begin{align*}
    \boldK_4 &= \Es^*\Ls^{-\frac{1}{2}}\Es^T \otimes 
        \En\En^H + \En^*\En^T \otimes \Es\Ls^{-\frac{1}{2}}\Es^H, \\
    \boldK_5 &= \Es^*\Ls^{-\frac{1}{2}}\Es^T \otimes \Es\Ls^{-\frac{1}{2}}\Es^H.
\end{align*}
Therefore
\begin{align*}
    &\boldM_{\boldtheta}^H \proj{\boldM_{\bolds}} \boldM_{\boldtheta} \\
    =& \DAd^H \boldW^{-\frac{1}{2}} \proj{\boldM_{\bolds}} \boldW^{-\frac{1}{2}} \DAd \\
    =& \noisevarp{-2} \DAd^H (\noisevarsqrt \boldK_5 + \boldK_4) \proj{\boldM_{\bolds}}
        (\noisevarsqrt \boldK_5 + \boldK_4) \DAd,
\end{align*}
where $\proj{\boldM_{\bolds}} = \boldM_{\bolds} \boldM_{\bolds}^\dagger$. We can then express the CRB as
\begin{equation}
    \label{eq:crb-q1-q2-q3}
    \CRB_{\boldtheta}
    = \noisevar (\boldQ_1 + \noisevarsqrt \boldQ_2 + \noisevar \boldQ_3)^{-1},
\end{equation}
where
\begin{align*}
    \boldQ_1 &= \DAd^H (\boldK_2 - \boldK_4 \proj{\boldM_{\bolds}} \boldK_4) \DAd, \\
    \boldQ_2 &= - \DAd^H (\boldK_4 \proj{\boldM_{\bolds}} \boldK_5 +
                    \boldK_5 \proj{\boldM_{\bolds}} \boldK_4) \DAd, \\
    \boldQ_3 &= \DAd^H (\boldK_3 - \boldK_5 \proj{\boldM_{\bolds}} \boldK_5) \DAd.
\end{align*}
When $\noisevar = 0$, $\boldR$ reduces to $\boldA\boldA^H$. Observe that the eigendecomposition of $\boldR$ always exists for $\noisevar \geq 0$. We use $\boldK_1^\star$--$\boldK_5^\star$ to denote the corresponding $\boldK_1$--$\boldK_5$ when $\noisevar \to 0$.

\begin{lemma}
\label{lem:proj-ms-exist}
Let $K < M$. Assume $\partial\boldr / \partial\boldeta$ is full column rank. Then $\lim_{\noisevar \to 0^+} \proj{\boldM_{\bolds}}$ exists. 
\end{lemma}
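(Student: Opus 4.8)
The plan is to argue at the level of \emph{column spaces} rather than with $\boldM_{\bolds}$ directly, since $\proj{\boldM_{\bolds}}$ is the orthogonal projector onto $\mathrm{range}(\boldM_{\bolds})$ and is therefore unchanged if we rescale individual columns of $\boldM_{\bolds}=\boldW^{-1/2}[\,\Ad\ \boldi\,]$. This indirection is unavoidable because $\boldM_{\bolds}$ has no limit: its $\boldi$-column diverges like $\noisevarp{-2}$. To expose the leading behavior I would use the expansion $\boldW^{-1/2}=\noisevarp{-2}\boldK_1+\noisevarp{-1}\boldK_4+\boldK_5$ together with two structural facts: since $K<M$ and the DOAs are distinct (\ref{ass:a2-distinct-doa}), $\boldA$ is full column rank, the noise subspace of $\boldR=\boldA\boldA^H+\noisevar\boldI$ is $\noisevar$-independent, so $\En\En^H=\projp{\boldA}$, and each $\bolda(\theta_k)\in\mathrm{range}(\boldA)$. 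A direct computation then gives $\boldK_1\Ad=\boldK_4\Ad=\boldzero$ and $\boldK_1\boldi=\vecm(\projp{\boldA})$. Consequently the $\Ad$-block stays bounded, $\boldW^{-1/2}\Ad=\boldK_5\Ad$, while $\boldW^{-1/2}\boldi=\noisevarp{-2}\vecm(\projp{\boldA})+O(\noisevarp{-1})$.

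Accordingly I would introduce the rescaled matrix $\tilde{\boldM}_{\bolds}=\boldW^{-1/2}[\,\Ad\ \noisevar\boldi\,]$, which spans the same column space as $\boldM_{\bolds}$, so $\proj{\boldM_{\bolds}}=\proj{\tilde{\boldM}_{\bolds}}$. Because $\Es,\En$ are fixed and only $\Ls$ varies with $\noisevar$, both $\boldK_1$ and $\boldK_5$ converge as $\noisevar\to 0^+$, and the rescaling cancels the $\noisevarp{-2}$ divergence, yielding
\[
\lim_{\noisevar\to 0^+}\tilde{\boldM}_{\bolds}
= \big[\,\boldK_5^\star\Ad \ \ \vecm(\projp{\boldA})\,\big]
=: \tilde{\boldM}_{\bolds}^\star .
\]

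The crux is to show $\tilde{\boldM}_{\bolds}^\star$ has full column rank. I would note that $\mathrm{range}(\Ad)\subseteq\mathrm{range}(\boldA^*)\otimes\mathrm{range}(\boldA)$ (the signal--signal Kronecker block), whereas $\vecm(\projp{\boldA})$ lies in the orthogonal noise--noise block; since these blocks are orthogonal and $\projp{\boldA}\neq\boldzero$ because $K<M$, the last column is orthogonal to and independent of the first $K$. Restricted to the signal--signal block, $\boldK_5^\star$ is injective, as its two Kronecker factors are invertible there ($\Ls^\star\succ\boldzero$ by \ref{ass:a2-distinct-doa} and $K<M$), so $\boldK_5^\star\Ad$ has the same rank as $\Ad$. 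Finally, the hypothesis that $\partial\boldr/\partial\boldeta=[\,\DAd\boldP\ \Ad\ \boldi\,]$ is full column rank forces its submatrix $\Ad$ to be full column rank, whence $\boldK_5^\star\Ad$, and therefore $\tilde{\boldM}_{\bolds}^\star$, is full column rank.

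With $\tilde{\boldM}_{\bolds}\to\tilde{\boldM}_{\bolds}^\star$ and the limit full column rank, the Gram matrix $\tilde{\boldM}_{\bolds}^H\tilde{\boldM}_{\bolds}$ converges to an invertible limit, so by continuity of matrix inversion the projector $\tilde{\boldM}_{\bolds}(\tilde{\boldM}_{\bolds}^H\tilde{\boldM}_{\bolds})^{-1}\tilde{\boldM}_{\bolds}^H$ converges. Hence $\lim_{\noisevar\to 0^+}\proj{\boldM_{\bolds}}$ exists and equals the orthogonal projector onto $\mathrm{range}(\tilde{\boldM}_{\bolds}^\star)$. The main obstacle is precisely that $\boldM_{\bolds}$ diverges, so one cannot pass to the limit directly; the resolution is the single-column rescaling combined with the signal/noise orthogonal decomposition, which is what guarantees the rescaled spanning family stays linearly independent in the limit.
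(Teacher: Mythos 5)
Your argument is correct in substance and takes a genuinely different route from the paper. The paper works with $\boldM_{\bolds}^H \boldM_{\bolds}$ directly: it computes the Gram matrix block-wise (using $\boldK_2\Ad = \boldK_4\Ad = \boldzero$, $\boldi^H\boldK_1\boldi = M-K$), inverts it via Schur complements, tracks the $\noisevarp{-4}$ growth of the scalar block $v = \boldi^H\boldW^{-1}\boldi$, and passes to the limit term by term, which also requires showing $\Ad^H\boldK_3^\star\Ad = \boldI$ is nonsingular. Your approach replaces all of that bookkeeping with the observation that $\proj{\boldM_{\bolds}}$ depends only on $\mathrm{range}(\boldM_{\bolds})$, so one may rescale the divergent column and instead prove that the rescaled spanning set converges to a full-column-rank limit; the signal--signal versus noise--noise Kronecker-block orthogonality then makes the rank argument essentially immediate. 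This is cleaner and more conceptual; what it gives up is the explicit closed form $\boldK_5^\star\Ad\Ad^H\boldK_5^\star + (M-K)^{-1}\boldK_1^\star\boldi\boldi^H\boldK_1^\star$ of the limit, which the paper uses downstream in $\boldQ_1^\star$ (though your limit is easily seen to coincide with it, since $\Ad^H\boldK_3^\star\Ad=\boldI$ makes the paper's first term exactly the projector onto $\mathrm{range}(\boldK_5^\star\Ad)$, and $\|\vecm(\projp{\boldA})\|_2^2 = M-K$). One correction is needed: since $\boldW^{-1/2}\boldi = \noisevarp{-2}\boldK_1\boldi + \noisevarp{-1}\boldK_4\boldi + \boldK_5\boldi$ with $\boldK_1\boldi = \vecm(\projp{\boldA}) \neq \boldzero$, the divergence is of order $\noisevarp{-2}$, so the correct rescaling is $\tilde{\boldM}_{\bolds} = \boldW^{-1/2}[\,\Ad\ \ \noisevarp{2}\boldi\,]$, not $\noisevar\boldi$; with the factor $\noisevar$ you wrote, the last column still diverges like $\noisevarp{-1}$. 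With $\noisevarp{2}$ the column tends to $\vecm(\projp{\boldA})$ exactly as your stated limit requires, and the rest of your argument goes through unchanged.
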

\begin{proof}
Because $\boldA^H \En = \boldzero$,
\begin{align*}
    \boldK_2 \Ad
    =& (\Es^*\Ls^{-1}\Es^T \otimes \En\En^H)(\boldA^* \odot \boldA) \\
    &+ (\En^*\En^T \otimes \Es\Ls^{-1}\Es^H)(\boldA^* \odot \boldA) \\
    =&  \Es^*\Ls^{-1}\Es^T \boldA^* \odot \En\En^H \boldA \\
    &+ \En^*\En^T \boldA^* \odot \Es\Ls^{-1}\Es^H \boldA \\
    =& \boldzero
\end{align*}
Similarly, we can show that $\boldK_4 \Ad = \boldzero$, $\boldi^H \boldK_2 \boldi = \boldi^H \boldK_4 \boldi = 0$, and $\boldi^H \boldK_1 \boldi = \mathrm{rank}(\En) = M-K$. Hence
\begin{equation*}
    \boldM_{\bolds}^H \boldM_{\bolds}
    = \begin{bmatrix}
        \Ad^H \boldK_3 \Ad & \Ad^H \boldK_3 \boldi \\
        \boldi^H \boldK_3 \Ad  & \boldi^H \boldW^{-1} \boldi
    \end{bmatrix}.
\end{equation*}
Because $\partial\boldr / \partial\boldeta$ is full column rank, $\boldM_{\bolds}^H \boldM_{\bolds}$ is full rank and positive definite. Therefore the Schur complements exist, and we can inverse $\boldM_{\bolds}^H \boldM_{\bolds}$ block-wisely. Let $\boldV = \Ad^H \boldK_3 \Ad$ and $v = \boldi^H \boldW^{-1} \boldi$. After tedious but straightforward computation, we obtain
\begin{align*}
    \proj{\boldM_{\bolds}}
    =& \boldK_5 \Ad \boldS^{-1} \Ad^H \boldK_5 \\
    &- s^{-1} \boldK_5 \Ad \boldV^{-1} \Ad^H \boldK_3 \boldi \boldi^H 
        (\boldK_5  + \noisevarinv \boldK_1) \\
    &- v^{-1} (\boldK_5 + \noisevarinv \boldK_1)
        \boldi \boldi^H \boldK_3 \Ad \boldS^{-1} \Ad^H \boldK_5 \\
    &+ s^{-1} (\boldK_5 + \noisevarinv \boldK_1) \boldi \boldi^H
         (\boldK_5  + \noisevarinv \boldK_1),
\end{align*}
where $\boldS$ and $s$ are Schur complements given by
\begin{align*}
    \boldS &= \boldV - v^{-1} \Ad^H \boldK_3 \boldi \boldi^H \boldK_3 \Ad, \\
    s &= v - \boldi^H \boldK_3 \Ad \boldV^{-1} \Ad^H \boldK_5 \boldi.
\end{align*}
Observe that
\begin{equation*}
    v
    = \boldi^H \boldW^{-1} \boldi
    = \noisevarp{-4}(M-K) + \boldi^H \boldK_3 \boldi.
\end{equation*}
We know that both $v^{-1}$ and $s^{-1}$ decrease at the rate of $\noisevarp{4}$. As $\noisevar \to 0$, we have
\begin{align*}
    &\boldS \to \Ad^H \boldK_3^\star \Ad, \\
    &s^{-1} (\boldK_5 + \noisevarinv \boldK_1) \to \boldzero, \\
    &v^{-1} (\boldK_5 + \noisevarinv \boldK_1) \to \boldzero, \\
    &s^{-1} (\boldK_5 + \noisevarinv \boldK_1) \boldi \boldi^H
        (\boldK_5  + \noisevarinv \boldK_1) \to \frac{\boldK_1^\star \boldi \boldi^H \boldK_1^\star}{M-K}.
\end{align*}
We now show that $\Ad^H \boldK_3^\star \Ad$ is nonsingular. Denote the eigendecomposition of $\boldA\boldA^H$ by $\Es^\star \Ls^\star (\Es^\star)^H$. Recall that for matrices with proper dimensions, $(\boldA \odot \boldB)^H (\boldC \odot \boldD) = (\boldA^H \boldC) \circ (\boldB^H \boldD)$, where $\circ$ denotes the Hadamard product. We can expand $\Ad^H \boldK_3^\star \Ad$ into
\begin{align*}
    [\boldA^H \Es^\star (\Ls^\star)^{-1} (\Es^\star)^H \boldA]^*
        \circ [\boldA^H \Es^\star (\Ls^\star)^{-1} (\Es^\star)^H \boldA].
\end{align*}
Note that $\boldA\boldA^H \Es^\star (\Ls^\star)^{-1} (\Es^\star)^H \boldA = \Es^\star (\Es^\star)^H \boldA = \boldA$, and that $\boldA$ is full column rank when $K < M$. We thus have $\boldA^H \Es^\star (\Ls^\star)^{-1} (\Es^\star)^H \boldA = \boldI$. Therefore $\Ad^H \boldK_3^\star \Ad = \boldI$, which is nonsingular.

Combining the above results, we obtain that when $\noisevar \to 0$,
\begin{equation*}
    \proj{\boldM_{\bolds}} \to
    \boldK_5^\star \Ad \Ad^H \boldK_5^\star
    + \frac{\boldK_1^\star \boldi \boldi^H \boldK_1^\star}{M-K}.
\end{equation*}
\end{proof}

For sufficiently small $\noisevar > 0$, it is easy to show that $\boldK_1$--$\boldK_5$ are bounded in the sense of Frobenius norm (i.e., $\| \boldK_i \|_F \leq C$ for some $C > 0$, for $i \in \{1,2,3,4,5\}$). Because $\partial\boldr / \partial\boldeta$ is full rank, $\boldM_{\bolds}$ is also full rank for any $\noisevar > 0$, which implies that $\proj{\boldM_{\bolds}}$ is well-defined for any $\noisevar > 0$. Observe that $\proj{\boldM_{\bolds}}$ is positive semidefinite, and that $\trace(\proj{\boldM_{\bolds}}) = \mathrm{rank}(\boldM_{\bolds})$. We know that $\proj{\boldM_{\bolds}}$ is bounded for any $\noisevar > 0$. Therefore $\boldQ_2$ and $\boldQ_3$ are also bounded for sufficiently small $\noisevar$, which implies that $\noisevarsqrt \boldQ_2 + \noisevar \boldQ_3 \to \boldzero$ as $\noisevar \to 0$.

By Lemma~{\ref{lem:proj-ms-exist}, we know that $\boldQ_1 \to \boldQ_1^\star$ as $\noisevar \to 0$, where
\begin{equation*}
    \boldQ_1^\star
    = \DAd^H (\boldK_2^\star - \boldK_4^\star \proj{\boldM_{\bolds}}^\star \boldK_4^\star) \DAd,
\end{equation*}
and $\proj{\boldM_{\bolds}}^\star = \lim_{\noisevar \to 0^+} \proj{\boldM_{\bolds}}$ as derived in Lemma~\ref{lem:proj-ms-exist}. Assume $\boldQ_1^\star$ is nonsingular\footnote{The condition when $\boldQ_1^\star$ is singular is difficult to obtain analytically. In numerical simulations, we have verified that it remains nonsingular for various parameter settings.}. By \eqref{eq:crb-q1-q2-q3} we immediately obtain that $\CRB_{\boldtheta} \to \boldzero$ as $\noisevar \to 0$.

When $K \geq M$, $\boldR$ is full rank regardless of the choice of $\noisevar$. Hence $(\boldR^T \otimes \boldR)^{-1}$ is always full rank. Because $\partial\boldr/\partial\boldeta$ is full column rank, the FIM is positive definite, which implies its Schur complements are also positive definite. Therefore $\CRB_{\boldtheta}$ is positive definite.

\ifCLASSOPTIONcaptionsoff
  \newpage
\fi

\bibliographystyle{IEEEtran}
\bibliography{performance_ref}

\end{document}